\newif\iflong
\definecolor{darkblue}{rgb}{0,0,0.35}
\definecolor{DarkBlue}{cmyk}{0.99998,1,0,0}
\newcommand{\naturalnum}{\mathbb{N}}
\newcommand{\composition}{\circ}
\newcommand{\sqb}[1]{\llbracket#1\rrbracket}
\newbox\mystrutbox
\newcommand{\@ustrut}[1]{\setbox\mystrutbox\hbox{#1\strut}\hbox{\vrule 
     height\ht\mystrutbox
     depth\z@
     width\z@}}
\newcommand{\@lstrut}[1]{\setbox\mystrutbox\hbox{#1\strut}\hbox{\vrule 
     height\z@
     depth\dp\mystrutbox
     width\z@}}
\newcommand{\ustrut}{\@ustrut{}}
\newcommand{\lstrut}{\@lstrut{}}
\newcommand{\ulstrut}{\@ustrut{\relsize{1}}}
\newcommand{\uLstrut}{\@ustrut{\relsize{2}}}
\newcommand{\uhstrut}{\@ustrut{\relsize{3}}}
\newcommand{\uHstrut}{\@ustrut{\relsize{4}}}
\newcommand{\llstrut}{\@lstrut{\relsize{1}}}
\newcommand{\lLstrut}{\@lstrut{\relsize{2}}}
\newcommand{\lhstrut}{\@lstrut{\relsize{3}}}
\newcommand{\lHstrut}{\@lstrut{\relsize{4}}}
\newcommand{\usstrut}{\@ustrut{\relsize{-1}}}
\newcommand{\lsstrut}{\@lstrut{\relsize{-1}}}
\newcommand{\ussstrut}{\@ustrut{\relsize{-2}}}
\newcommand{\lssstrut}{\@lstrut{\relsize{-2}}}
\newcommand{\usssstrut}{\@ustrut{\relsize{-3}}}
\newcommand{\lsssstrut}{\@lstrut{\relsize{-3}}}
\newcommand{\ltuple}[1]{\langle#1,\allowbreak}
\newcommand{\mtuple}[1]{\:#1,\allowbreak}
\newcommand{\rtuple}[1]{\:#1\rangle}
\newcommand{\pair}[2]{\ltuple{#1}\rtuple{#2}}
\newcommand{\triple}[3]{\ltuple{#1}\mtuple{#2}\rtuple{#3}}
\newcommand{\quadruple}[4]{\ltuple{#1}\mtuple{#2}\mtuple{#3}\rtuple{#4}}
\newcommand{\sextuple}[6]{\ltuple{#1}\mtuple{#2}\mtuple{#3}\mtuple{#4}\mtuple{#5}\rtuple{#6}}
\newcommand{\event}{\textrm{\itshape{e}}}
\newcommand{\setofallevents}{\textrm{\upshape E}}
\newcommand{\trace}{\sigma}
\newcommand{\setofallnonemptytrace}{\setofallevents^{+\infty}}%
\newcommand{\emptytrace}{\varepsilon}
\newcommand{\setofalltraces}{\setofallevents^{\ast\infty}}%
\newcommand{\isprefix}{\preceq}
\newcommand{\prefixesofset}[1]{\def\@first{#1}\mathsf{Pref}\ifx\@first\@empty\else(#1)\fi}
\newcommand{\gap}{\triangleright}
\newcommand{\semantics}{\mathcal{S}}%
\newcommand{\maximalsemantics}[1]{\def\@first{#1}\mathcal{S}^{\mathsf{Max}}\ifx\@first\@empty\else\sqb{#1}\fi}%
\newcommand{\prefixsemantics}[1]{\def\@first{#1}\mathcal{S}^{\mathsf{Pref}}\ifx\@first\@empty\else\sqb{#1}\fi}%
\newcommand{\traceproperty}{\mathcal{P}}
\newcommand{\tracepropertyT}{\mathcal{T}}
\newcommand{\predictiontraceproperty}{\mathcal{Q}}
\newcommand{\alphapropertytransform}[2]{\def\@first{#1}\def\@second{#2}\alpha_{\mathsf{Pred}}\ifx\@first\@empty\else\sqb{#1}\fi\ifx\@second\@empty\else({#2})\fi}%
\newcommand{\gammapropertytransform}[2]{\def\@first{#1}\def\@second{#2}\gamma_{\mathsf{Pred}}\ifx\@first\@empty\else\sqb{#1}\fi\ifx\@second\@empty\else({#2})\fi}%
\newcommand{\alphapropertysettransform}[2]{\def\@first{#1}\def\@second{#2}\bar{\alpha}_{\mathsf{Pred}}\ifx\@first\@empty\else\sqb{#1}\fi\ifx\@second\@empty\else({#2})\fi}%
\newcommand{\latticeofproperties}{\mathcal{L}}%
\newcommand{\latticeofmaximalproperties}{\mathcal{L}^{\mathsf{Max}}}%
\newcommand{\maximaltracepropertytop}{\top^{\mathsf{Max}}}%
\newcommand{\maximaltracepropertybot}{\bot^{\mathsf{Max}}}%
\newcommand{\maximaltracepropertylessthan}{\subseteq}%
\newcommand{\maximaltracepropertyjoin}{\mathop{\rlap{\Large$\mskip1.7mu\cdot$}\cup}}%
\newcommand{\bigmaximaltracepropertyjoin}{\mathop{\rlap{\raisebox{-0.3ex}[0pt][0pt]{\LARGE$\mskip3.5mu\cdot$}}\bigcup}\limits}%
\newcommand{\maximaltracepropertymeet}{\mathop{\rlap{\Large$\mskip1.6mu\cdot$}\cap}}%
\newcommand{\observation}[3]{\def\@first{#1}\def\@second{#2}\def\@third{#3}\mathbb{O}\ifx\@first\@empty\else\{#1\}\fi\ifx\@second\@empty\else\sqb{#2}\fi\ifx\@third\@empty\else({#3})\fi}%
\newcommand{\cognizancefunction}[2]{\def\@first{#1}\def\@second{#2}\mathbb{C}\ifx\@first\@empty\else\sqb{#1}\fi\ifx\@second\@empty\else({#2})\fi}%
\newcommand{\omniscientcognizance}[2]{\def\@first{#1}\def\@second{#2}\mathbb{C}_{o}\ifx\@first\@empty\else\sqb{#1}\fi\ifx\@second\@empty\else({#2})\fi}%
\newcommand{\inquiryfunction}[2]{\def\@first{#1}\def\@second{#2}\mathbb{I}\ifx\@first\@empty\else\sqb{#1}\fi\ifx\@second\@empty\else({#2})\fi}%
\newcommand{\observationforsingletrace}[3]{\def\@first{#1}\def\@second{#2}\def\@third{#3}\mathbb{O}\ifx\@first\@empty\else\{#1\}\fi\ifx\@second\@empty\else\sqb{#2}\fi\ifx\@third\@empty\else({#3})\fi}%
\newcommand{\behavior}{\mathcal{B}}
\newcommand{\entityresp}{E_{\textup R}}
\newcommand{\alpharesponsibility}[3]{\def\@first{#1}\def\@second{#2}\def\@third{#3}\alpha_{R}\ifx\@first\@empty\else\big\{#1\big\}\fi\ifx\@second\@empty\else\sqb{#2}\fi\ifx\@third\@empty\else({#3})\fi}%
\newcommand{\tracehistory}{\sigma_{\textup H}}%
\newcommand{\traceresponsible}{\sigma_{\textup R}}%
\newcommand{\tracefuture}{\sigma_{\textup F}}%
\newcommand{\alpharesponsibilityF}[3]{\def\@first{#1}\def\@second{#2}\def\@third{#3}\alpha_{R}^{F}\ifx\@first\@empty\else\big\{#1\big\}\fi\ifx\@second\@empty\else\sqb{#2}\fi\ifx\@third\@empty\else({#3})\fi}%
\newcommand{\alpharesponsibilityH}[3]{\def\@first{#1}\def\@second{#2}\def\@third{#3}\alpha_{R}^{H}\ifx\@first\@empty\else\big\{#1\big\}\fi\ifx\@second\@empty\else\sqb{#2}\fi\ifx\@third\@empty\else({#3})\fi}%
\newcommand{\alpharesponsibilityTop}[3]{\def\@first{#1}\def\@second{#2}\def\@third{#3}\alpha_{R}^{\top}\ifx\@first\@empty\else\big\{#1\big\}\fi\ifx\@second\@empty\else\sqb{#2}\fi\ifx\@third\@empty\else({#3})\fi}%
\newcommand{\alpharesponsibilityBot}[3]{\def\@first{#1}\def\@second{#2}\def\@third{#3}\alpha_{R}^{\bot}\ifx\@first\@empty\else\big\{#1\big\}\fi\ifx\@second\@empty\else\sqb{#2}\fi\ifx\@third\@empty\else({#3})\fi}%
\newcommand{\alpharesponsibilityC}[3]{\def\@first{#1}\def\@second{#2}\def\@third{#3}\alpha_{R_{c}}\ifx\@first\@empty\else\big\{#1\big\}\fi\ifx\@second\@empty\else\sqb{#2}\fi\ifx\@third\@empty\else({#3})\fi}%
\newcommand{\alpharesponsibilityCF}[3]{\def\@first{#1}\def\@second{#2}\def\@third{#3}\alpha_{R_{c}}^{F}\ifx\@first\@empty\else\big\{#1\big\}\fi\ifx\@second\@empty\else\sqb{#2}\fi\ifx\@third\@empty\else({#3})\fi}%
\newcommand{\alpharesponsibilityCH}[3]{\def\@first{#1}\def\@second{#2}\def\@third{#3}\alpha_{R_{c}}^{H}\ifx\@first\@empty\else\big\{#1\big\}\fi\ifx\@second\@empty\else\sqb{#2}\fi\ifx\@third\@empty\else({#3})\fi}%
\newcommand{\alpharesponsibilityCTop}[3]{\def\@first{#1}\def\@second{#2}\def\@third{#3}\alpha_{R_{c}}^{\top}\ifx\@first\@empty\else\big\{#1\big\}\fi\ifx\@second\@empty\else\sqb{#2}\fi\ifx\@third\@empty\else({#3})\fi}%
\newcommand{\alpharesponsibilityCBot}[3]{\def\@first{#1}\def\@second{#2}\def\@third{#3}\alpha_{R_{c}}^{\bot}\ifx\@first\@empty\else\big\{#1\big\}\fi\ifx\@second\@empty\else\sqb{#2}\fi\ifx\@third\@empty\else({#3})\fi}%
\newcommand{\alpharesponsibilitySC}[3]{\def\@first{#1}\def\@second{#2}\def\@third{#3}\alpha_{R_{s\mskip-1mu c}}\ifx\@first\@empty\else\big\{#1\big\}\fi\ifx\@second\@empty\else\sqb{#2}\fi\ifx\@third\@empty\else({#3})\fi}%
\newcommand{\alpharesponsibilitySCF}[3]{\def\@first{#1}\def\@second{#2}\def\@third{#3}\alpha_{R_{s\mskip-1mu c}}^{F}\ifx\@first\@empty\else\big\{#1\big\}\fi\ifx\@second\@empty\else\sqb{#2}\fi\ifx\@third\@empty\else({#3})\fi}%
\newcommand{\alpharesponsibilitySCH}[3]{\def\@first{#1}\def\@second{#2}\def\@third{#3}\alpha_{R_{s\mskip-1mu c}}^{H}\ifx\@first\@empty\else\big\{#1\big\}\fi\ifx\@second\@empty\else\sqb{#2}\fi\ifx\@third\@empty\else({#3})\fi}%
\newcommand{\alpharesponsibilitySCTop}[3]{\def\@first{#1}\def\@second{#2}\def\@third{#3}\alpha_{R_{s\mskip-1mu c}}^{\top}\ifx\@first\@empty\else\big\{#1\big\}\fi\ifx\@second\@empty\else\sqb{#2}\fi\ifx\@third\@empty\else({#3})\fi}%
\newcommand{\alpharesponsibilitySCBot}[3]{\def\@first{#1}\def\@second{#2}\def\@third{#3}\alpha_{R_{s\mskip-1mu c}}^{\bot}\ifx\@first\@empty\else\big\{#1\big\}\fi\ifx\@second\@empty\else\sqb{#2}\fi\ifx\@third\@empty\else({#3})\fi}%
\newcommand{\alpharesponsibilityPearl}[3]{\def\@first{#1}\def\@second{#2}\def\@third{#3}\alpha_{R_{c}}^{FF}\ifx\@first\@empty\else\big\{#1\big\}\fi\ifx\@second\@empty\else\sqb{#2}\fi\ifx\@third\@empty\else({#3})\fi}%
\newcommand{\alphaHistory}[1]{\def\@first{#1}\alpha^{H}\ifx\@first\@empty\else({#1})\fi}%
\newcommand{\gammaHistory}[1]{\def\@first{#1}\gamma^{H}\ifx\@first\@empty\else({#1})\fi}%
\newcommand{\alphaFuture}[1]{\def\@first{#1}\alpha^{F}\ifx\@first\@empty\else({#1})\fi}%
\newcommand{\gammaFuture}[1]{\def\@first{#1}\gamma^{F}\ifx\@first\@empty\else({#1})\fi}%
\newcommand{\alphaC}[2]{\def\@first{#1}\def\@second{#2}\alpha_{c}\ifx\@first\@empty\else\{{#1}\}\fi\ifx\@second\@empty\else\,#2\fi}%
\newcommand{\alphaSC}[2]{\def\@first{#1}\def\@second{#2}\alpha_{s\mskip-1mu c}\ifx\@first\@empty\else\{{#1}\}\fi\ifx\@second\@empty\else\,#2\fi}%
\newcommand{\alphaCbar}[2]{\def\@first{#1}\def\@second{#2}\bar{\alpha}_{c}\ifx\@first\@empty\else\{{#1}\}\fi\ifx\@second\@empty\else\,#2\fi}%
\newcommand{\alphaSCbar}[2]{\def\@first{#1}\def\@second{#2}\bar{\alpha}_{s\mskip-1mu c}\ifx\@first\@empty\else\{{#1}\}\fi\ifx\@second\@empty\else\,#2\fi}%
\newcommand{\abstractdomain}{\mathcal{D}}
\newcommand{\userspec}[1]{#1^{u}}
\newcommand{\nonomniscientcognizance}[2]{\def\@first{#1}\def\@second{#2}\mathbb{C}_{n\mskip-1muo}\ifx\@first\@empty\else\sqb{#1}\fi\ifx\@second\@empty\else({#2})\fi}%
\newcommand{\abstractinvariancesemantics}[1]{\def\@first{#1}\overline{\mathcal{S}}\ifx\@first\@empty\else\sqb{#1}\fi}%
\newcommand{\abstractobservation}[3]{\def\@first{#1}\def\@second{#2}\def\@third{#3}\overline{\mathsf{O}}\ifx\@first\@empty\else\{#1\}\fi\ifx\@second\@empty\else\sqb{#2}\fi\ifx\@third\@empty\else({#3})\fi}%
\newcommand{\abstracttracepropertyT}{\overline{\tracepropertyT}}
\newcommand{\abstracttraceproperty}{\overline{\traceproperty}}
\newcommand{\abstracttrace}{\overline{\trace}}
\newcommand{\abstractbehavior}{\overline{\behavior}}
\newcommand{\sqsubsetneq}{\mathrel{\rlap{\raisebox{-0.5ex}[0pt][0pt]{\tiny$\mskip7mu\scriptscriptstyle\shortmid$}}{\sqsubseteq}}}
\newcommand{\tracesemantics}{\mathcal{S}}
\newcommand{\action}[2]{\overline{\mathsf{A}}\def\@actionparams{#1#2}\ifx\@actionparams\@empty\else(#1,#2)}
\def\dashsfill{$\m@th\mathord-\mkern-7mu
     \cleaders\hbox{$\!\mathord-\!$}\hfill
     \mkern-7mu\mathord\rightarrow$}
\newcommand{\transition}[1]{\sbox{\@tempboxa}{\hskip0.5em\raisebox{-0.25ex}{\small$#1$}\hskip0.66em}\ensuremath{\stackrel{\box\@tempboxa}{\hbox to \wd\@tempboxa{\dashsfill}}}}
\newcommand{\abstractinvariant}{\overline{\mathsf{I}}}
\begin{document}

%
\title{Responsibility Analysis\\ by Abstract Interpretation}
\titlerunning{Responsibility Analysis by Abstract Interpretation}
%
\author{Chaoqiang Deng \and
Patrick Cousot}
\authorrunning{C. Deng \and P. Cousot}
\institute{Computer Science Department, New York University, USA\\
\email{\{deng,pcousot\}@cs.nyu.edu}}
\maketitle              
\begin{abstract}
Given a behavior of interest in the program, statically determining the corresponding responsible entity is a task of critical importance, especially in program security. Classical static analysis techniques (e.g. dependency analysis, taint analysis, slicing, etc.) assist programmers in narrowing down the scope of responsibility, but none of them can explicitly identify the responsible entity. Meanwhile, the causality analysis is generally not pertinent for analyzing programs, and the structural equations model (SEM) of actual causality misses some information inherent in programs,
making its analysis on programs imprecise.
In this paper, a novel definition of responsibility based on the abstraction of event trace semantics is proposed, which can be applied in program security and other scientific fields. Briefly speaking, an entity $\entityresp$ is responsible for behavior $\behavior$, if and only if $\entityresp$ is free to choose its input value, and such a choice is the first one that ensures the occurrence of $\behavior$ in the forthcoming execution. Compared to current analysis methods, the responsibility analysis is more precise. In addition, our definition of responsibility takes into account the cognizance of the observer, which, to the best of our knowledge, is a new innovative idea in program analysis. 

\keywords{Responsibility \and Abstract Interpretation \and Static Analysis \and Dependency \and Causality \and Program Security}

\end{abstract}

\section{Introduction}

For any behavior of interest, especially potentially insecure behaviors in the program, it is essential to determine the corresponding responsible entity, or say, the root cause. Contrary to accountability mechanisms \cite{DBLP:journals/cacm/WeitznerABFHS08,DBLP:conf/esorics/JagadeesanJPR09,DBLP:conf/uss/FranklePSGW18} that track down perpetrators after the fact, the goal of this paper is to detect the responsible entity and configure its permission before deploying the program, which is important for safety and security critical systems. Due to the massive scale of modern software, it is virtually impossible to identify the responsible entity manually. 
The only solution is to design a static analysis of responsibility, which can examine all possible executions of a program without executing them. 

The cornerstone of designing such an analysis is to define responsibility in programming languages. It is surprising to notice that, although the concepts of causality and responsibility have been long studied in various contexts (law sciences \cite{Sliedregt-Criminal-Responsibility-2012}, artificial intelligence \cite{Pearl-Causality-2012}, statistical and quantum mechanics, biology, social sciences, etc. \cite{BeebeeHM-Causality-2009}), none of these definitions is fully pertinent for programming languages. Take the actual cause \cite{Halpern-Pearl-2001,Halpern-Pearl-2005} as an example, its structural equations model (SEM) \cite{Westland-SEM-2015} is not suitable for representing programs: the value of each endogenous variable in the model is fixed once it is set by the equations or some external action, while the value of program variables can be assigned for unbounded number of times during the execution. In addition, the SEM cannot make use of the temporal information or whether an entity is free to make choices, which plays an indispensable role in determining responsibility. 

There do exist techniques analyzing the influence relationships in programs, such as dependency analysis \cite{DBLP:conf/popl/AbadiBHR99,DBLP:journals/mscs/CheneyAA11,DBLP:conf/esop/UrbanM18}, taint analysis \cite{DBLP:conf/ecoop/PistoiaFKS05} and program slicing \cite{DBLP:journals/tse/Weiser84}, which help in narrowing down the scope of possible locations of responsible entity. However, no matter whether adopting semantic or syntactic methods, these techniques are not precise enough to explicitly identify responsibility. 

To solve the above problems, we propose a novel definition of responsibility based on the event trace semantics, which is expressive and generic to handle computer programs and other scientific fields. Roughly speaking, an entity $\entityresp$ is responsible for a given behavior $\behavior$ in a certain trace, if and only if $\entityresp$ can choose various values at its discretion (e.g. inputs from external subjects), and such a choice is the first one that guarantees the occurrence of $\behavior$ in that trace. Such a definition of responsibility is an abstract interpretation \cite{CousotC77,CousotC79} of event trace semantics, taking into account both the temporal ordering of events and the information regarding whether an entity is free to choose its value. Moreover, an innovative idea of cognizance is adopted in this definition, which allows analyzing responsibility from the perspective of various observers.
Compared to current techniques, our definition of responsibility is more generic and precise.



The applications of responsibility analysis are pervasive. Although an implementation of responsibility analyzer is not provided here, we have demonstrated its effectiveness by examples including access control, ``negative balance" and information leakage. In addition, due to the page limit, a sound framework of abstract responsibility analysis is sketched
\iflong
in the appendix \ref{sec:abstract-responsibility},
\else
in the extended version of this paper \cite{SAS-2019-extended-version},
\fi
which is the basis of implementing a responsibility analyzer. It is guaranteed that the entities that are found definitely responsible in the abstract analysis are definitely responsible in the concrete, while those not found potentially responsible in the abstract analysis are definitely not responsible in the concrete.





To summarize, the main contributions of this work are: (1) a completely new definition of responsibility, which is based on the abstract interpretation of event trace semantics, (2) the adoption of observers' cognizance in program analysis for the first time, (3) various examples of responsibility analysis, and (4) a sound framework for the abstract static analysis of responsibility.
%

In the following, section \ref{sec:informal-definition} discusses the distinctions between responsibility and current techniques via an example, and sketches the framework of responsibility analysis. Section \ref{sec:formal-definition} formally defines responsibility as an abstraction of event trace semantics. Section \ref{sec:application} exemplifies the applications of responsibility analysis. 
Section \ref{sec:related-work} summarizes the related work. 
\section{A Glance at Responsibility\label{sec:informal-definition}}


Given a behavior of interest (e.g. security policy violation), the objective of responsibility analysis is to automatically determine which entity in the system has the primary control over that behavior. Then security admins could decide either to keep or to deny the responsible entity's permission to perform the behavior of interest. Take the information leakage in a social network as an example: if the information's owner is responsible for the leakage (e.g. a user shares his picture with friends), then it is safe to keep its permission to perform such a behavior; otherwise, if anyone else is responsible for the leakage, it could be a malicious attacker and its permission to do so shall be removed. Such human decisions can only be done manually and are beyond the scope of this paper. In addition, it is worthwhile to note that responsibility analysis is not the same as program debugging, since the analyzed program code is presumed to be unmodifiable and the only possible change is on the permissions granted to entities in the system. 

In order to give an informal introduction to responsibility, as well as its main distinctions with dependency, causality and other techniques in detecting causes, this section starts with a simple example, which is used throughout the paper.

\subsection{Discussion of an Access Control Program Example\label{subsec:example}}



\begin{example}[Access Control]\label{ex:access-control}
Consider the program in Fig. \ref{fig:access-control}, which essentially can be interpreted as an access control program for an object $o$ (e.g. a secret file), such that $o$ can be read if and only if both two admins approve the access and the permission type of $o$ from system settings is greater than or equal to ``read'': the first two inputs correspond to the decisions of two independent admins, where 1 represents approving the access to $o$, and 0 represents rejecting the access; the third input stored in $\footnotesize\texttt{typ}$ represents the permission type of $o$ specified in the system settings, where 1 represents ``read'', 2 represents ``read and write''
(this is similar to the file permissions system in Linux, but is simplified for the sake of clarity); 
by checking the value of $\footnotesize\texttt{acs}$ at line 10, the assertion can guarantee both admins approve the access and the permission type of $o$ is at least 1.
\qed
\end{example}

\begin{figure}[h]
\footnotesize
\ttfamily
\vskip-1.5em
\begin{lstlisting}
1:  apv = 1; //1: Approval, 0: Rejection
2:  i1 = input_1(); //Input 0 or 1 from 1st admin
3:  if (i1 == 0) {
4:    apv = 0; }
5:  i2 = input_2(); //Input 0 or 1 from 2nd admin
6:  if (apv != 0 && i2 == 0) {
7:    apv = 0; }
8:  typ = input_3(); //Input 1 or 2 from system settings
9:  acs = apv * typ;
10: assert(acs >= 1); //Check if the read access is granted
11: /* Read an object o here */
\end{lstlisting}
\vskip-.5em
\caption{Access Control Program Example\label{fig:access-control}}
\vskip-2em
\end{figure}

Here the question we are interested is: when the assertion fails (referred as ``Read Failure'' in the following, i.e. the read access to $o$ fails to be granted), which entity (entities) in the program shall be responsible? 
The literature has several possible answers. By the definition of dependency (\cite{DBLP:conf/popl/AbadiBHR99,DBLP:journals/mscs/CheneyAA11,DBLP:conf/esop/UrbanM18}), the value of $\footnotesize\texttt{acs}$ depends on the value of $\footnotesize\texttt{apv}$ and $\footnotesize\texttt{typ}$, which further depend on all three inputs. That is to say, the read failure depends on all variables in the program, thus program slicing techniques (both syntactic slicing \cite{DBLP:journals/tse/Weiser84} and semantic slicing \cite{DBLP:conf/sas/Rival05}) would take the whole program as the slice related with read failure. Such a slice is useful in debugging in the sense that it rules out parts of the program that are completely irrelevant with the failure, and modifying any code left in the slice may prevent the failure, e.g. replacing ``$\footnotesize\texttt{acs=apv*typ}$'' with ``$\footnotesize\texttt{acs=2}$'' trivially fixes the read failure problem. However, this paper presumes the program code to be unmodifiable, hence a statement like ``$\footnotesize\texttt{acs=apv*typ}$'', which is fully controlled by others and acts merely as the intermediary between causes and effects, 
shall not be treated as responsible. In addition, the third input (i.e. the system setting of $o$'s permission type) is also included in the slice. Although it does affect $\footnotesize\texttt{acs}$'s value, it is not decisive in this case (i.e. no matter it is 1 or 2, it could not either enforce or prevent the failure). Therefore, the dependency analysis and slicing are not precise enough for determining responsibility.

Causation by counterfactual dependency \cite{Lewis-causation-1973} examines the cause in every single execution and excludes non-decisive factors (e.g. the third input in this example), but it is too strong in some circumstances. For example, in an execution where both the first two inputs are 0, neither of them would be determined as the cause of read failure, because if one input is changed to value 1, the failure would still occur due to the other input 0.

Actual cause introduced in \cite{Halpern-Pearl-2001,Halpern-Pearl-2005}  is based on the structural equations model (SEM) \cite{Westland-SEM-2015}, and extends the basic notion of counterfactual dependency to allow ``contingent dependency''. For this example here, it is straightforward to create a SEM to represent the access control program (although it is not always the case): three inputs are represented by exogenous variables, and five program variables are represented by endogenous variables, in which the value of $\footnotesize\texttt{apv}$ is $\footnotesize\texttt{i1*i2}$. Consider an execution where both the first two inputs are 0, no matter what value the third input takes, the actual causes of read failure (i.e. $\footnotesize\texttt{acs<1}$) would be determined as ``$\footnotesize\texttt{i1=0}$'', ``$\footnotesize\texttt{i2=0}$'', ``$\footnotesize\texttt{apv=0}$'' and ``$\footnotesize\texttt{acs=0}$'', since the failure counterfactually depends on each of them under certain contingencies. Thus, both two admins are equally determined as causes of failure, as well as two intermediary variables. This structural-model method has allowed for a great progress in causality analysis, and solved many problems of previous approaches. However, as an abstraction of concrete semantics, the SEM unnecessarily misses too much information, including the following three important points.

\textbf{(P1)\label{time}} \textit{Time (i.e. the temporal ordering of events) should be taken into account.} For example, the SEM does not keep the temporal ordering of first two inputs (i.e. the information that ``$\footnotesize\texttt{i1=0}$'' occurs before ``$\footnotesize\texttt{i2=0}$'' is missed), hence it determines both of them equally as the cause of assigning $\footnotesize\texttt{0}$ to $\footnotesize\texttt{apv}$, further as the cause of read failure. However, in the actual execution where first two inputs are 0, the first input already decides the value of $\footnotesize\texttt{apv}$ before the second input is entered and the assignment at line 7 is not even executed, thus it is unnecessary to take the second input as a cause of assigning $\footnotesize\texttt{0}$ to $\footnotesize\texttt{apv}$ or the read failure. To deal with this difficulty, Pearl's solution is to modify the model and introduce new variables \cite{DBLP:conf/ijcai/ChenPB16} to distinguish whether $\footnotesize\texttt{apv}$ is assigned by $\footnotesize\texttt{i1}$ or $\footnotesize\texttt{i2}$. However, a much simpler method is to keep the temporal ordering of events, such that only the first event that ensures the behavior of interest is counted as the cause. Therefore, in an execution where both the first two inputs are 0, the first input ensures the read failure before the second input is entered, hence only the first input is responsible for failure; meanwhile, in another execution where the first input is 1 and the second one is 0, the second input is the first and only one that ensures the failure hence shall take the responsibility.

\textbf{(P2)\label{free-choices}} \textit{The cause must be free to make choices.} For example, $\footnotesize\texttt{acs=0}$ is determined as an actual cause of read failure, based on the reasoning that if the endogenous variable $\footnotesize\texttt{acs}$ in SEM is assigned a different value, say $\footnotesize\texttt{2}$, then the read failure would not have occurred. But such a reasoning ignores a simple fact that $\footnotesize\texttt{acs}$ is not free to choose its value and acts merely as an intermediary between causes and effects. Thus, only entities that are free to make choices can possibly be causes, and they include but are not limited to user inputs, system settings, files read, parameters of procedures or modules, returned values of external functions, variable initialization, random number generations and the parallelism. To be more accurate, it is the external subject (who does the input, configures the system settings, etc.) that is free to make choices, but we say that entities like user inputs are free to make choices, as an abuse of language.

\textbf{(P3)\label{cognizance}} \textit{It is necessary to specify ``to whose cognizance~/ knowledge'' when identifying the cause.} All the above reasoning on causality is implicitly based on an omniscient observer's cognizance (i.e. everything that occurred is known), yet it is non-trivial to consider the causality to the cognizance of a non-omniscient observer. Reconsider the access control program example, and suppose we adopt the cognizance of the second admin who is in charge of the second input. If she/he is aware that the first input is already 0, she/he would not be responsible for the failure; otherwise she/he does not know whether the first input is 0 or 1, then she/he is responsible for ensuring the occurrence of failure. In most cases, the cognizance of an omniscient observer will be adopted, but not always.

\subsection{An Informal Definition of Responsibility\label{subsec:informal-definition}}

To take the above three points into account and build a more expressive framework, this paper proposes \textit{responsibility}, whose informal definition is as follows.


\begin{mdframed}[innertopmargin=4pt,innerbottommargin=4pt] 
\begin{definition}[\textbf{Responsibility, informally}\label{def:informal-resp}]
To the cognizance of an observer, the entity $\entityresp$ is responsible for a behavior $\behavior$ of interest in a certain execution, if and only if, according to the observer's observation, $\entityresp$ is free to choose its value, and such a choice is the first one that guarantees the occurrence of $\behavior$ in that execution.
\end{definition}
\end{mdframed}


It is worth mentioning that, for the whole system whose semantics is a set of executions, there may exist more than one entities that are responsible for $\behavior$. Nevertheless, in every single execution where $\behavior$ occurs, there is only one entity that is responsible for $\behavior$. To decide which entity in an execution is responsible, the execution alone is not sufficient, and it is necessary to reason on the whole semantics to exhibit the entity's ``free choice'' and guarantee of $\behavior$. Thus, responsibility is not a trace property (neither safety nor liveness property). 

To put such a definition into effect, our framework of responsibility analysis is designed as Fig. \ref{fig:analysis-framework}, which essentially consists of three components: (1) \textit{System semantics}, i.e. the set of all possible executions, each of which can be analyzed individually. (2) \textit{A lattice of system behaviors of interest}, which is ordered such that the stronger a behavior is, the lower is its position in the lattice. (3) \textit{An observation function for each observer}, which maps every (probably unfinished) execution to a behavior in the lattice that is guaranteed to occur, even though such a behavior may have not occurred yet. These three components are formally defined in section \ref{sec:formal-definition}, and their abstractions are sketched 
\iflong
in the appendix \ref{sec:abstract-responsibility}.
\else
in \cite{SAS-2019-extended-version}.
\fi

\vskip-1.5em
\begin{figure}[h]
\centering
  \includegraphics[width=.85\linewidth]{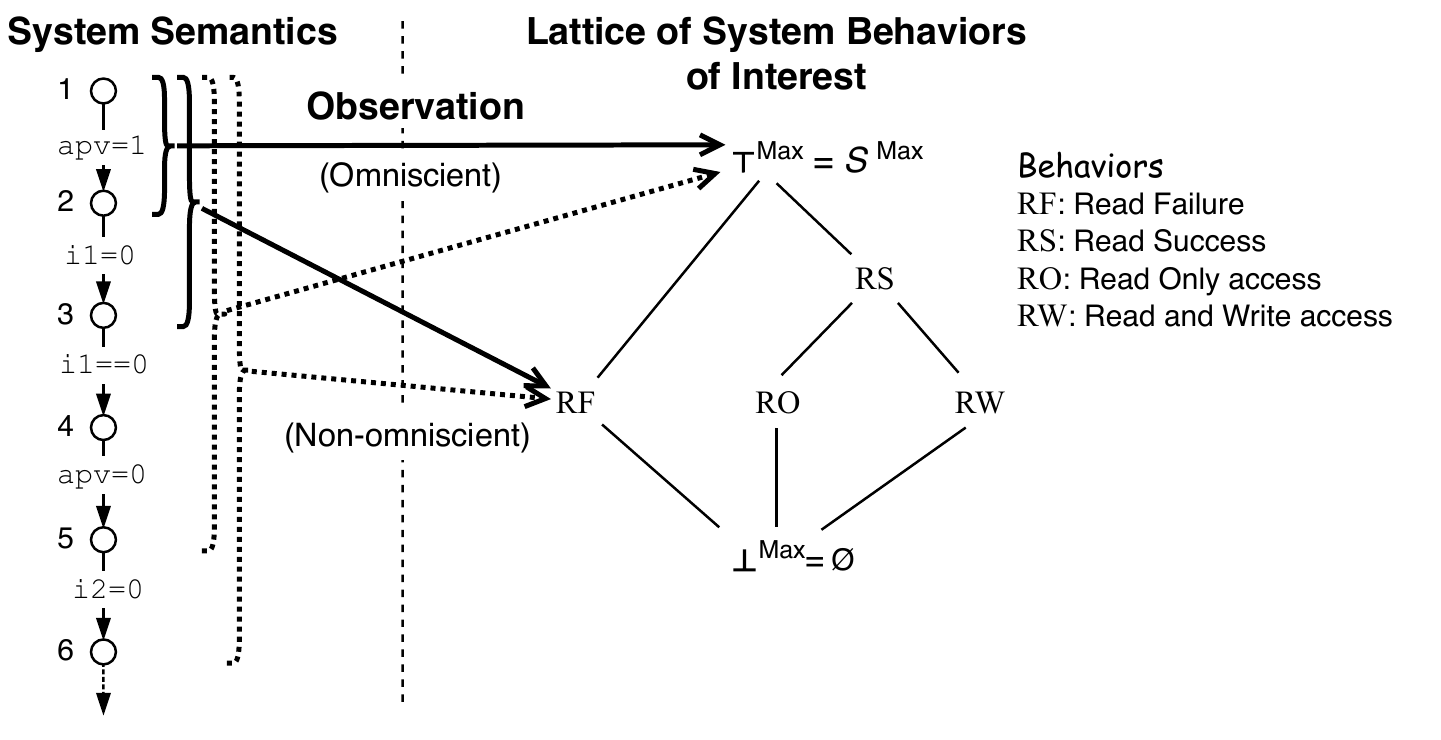}
  \vskip-1em
\caption{Framework of Responsibility Analysis for Example \ref{ex:access-control}}
\label{fig:analysis-framework}
\end{figure}
\vskip-1.5em

In this framework, if an observer's observation finds that the guaranteed behavior grows stronger after extending an execution, then the extension part of execution must be responsible for ensuring the occurrence of the stronger behavior. Consider the example in Fig. \ref{fig:analysis-framework} which sketches the analysis for a certain execution of the access control program. Suppose $\maximaltracepropertytop$ in the lattice represents ``not sure if the read access fails or not'' and \texttt{RF} represents the behavior of read failure, whose formal definitions are given in section \ref{subsec:lattice-trace-properties}. The solid arrow from executions to the lattice stands for the observation of an omniscient observer, while the dashed arrow stands for the observation of the second admin who is unaware of the first input. As illustrated in the figure, the omniscient observer finds that the execution from point 1 to point 2 can guarantee only $\maximaltracepropertytop$, while the stronger behavior \texttt{RF} is guaranteed if the execution reaches point 3. Thus, to the cognizance of the omniscient observer, ``$\footnotesize\texttt{i1=0}$'' between point 2 and 3 is responsible for the read failure. Meanwhile, the second admin observes that all the executions upto point 5 guarantee $\maximaltracepropertytop$, and \texttt{RF} is guaranteed only after point 6 is reached. Hence, to the cognizance of the second admin, ``$\footnotesize\texttt{i2=0}$'' between point 5 and point 6 is responsible for the read failure.
For the sake of completeness, the entire desired analysis result for Example \ref{ex:access-control} is included in the following.




\begin{example}[Access Control, Continued]\label{ex:access-control-responsibility}To the cognizance of an omniscient observer:
for any execution, if the first input is 0, no matter what the other two inputs are, only the first admin is responsible for the read failure; if the first input is 1 and the second one is 0, the second admin is responsible.


To the cognizance of the second admin, two cases need to be considered separately. If she/he is aware of the input of first admin, the analysis result is exactly the same as the omniscient observer. Otherwise, she/he does not know the first input: in every execution where the second input is 0, the second admin is responsible, no matter what the first and third input are; in every execution where the second input is 1, nobody shall be responsible for the failure, since whether the failure occurs or not is uncertain from the second admin's perspective.
\qed
\end{example}

After finishing responsibility analysis, it is time for the security admin to configure permissions granted to each responsible entity at her/his discretion. If the behavior of interest is desired or the responsible entity is authorized, the permissions granted to the responsible entity can be kept. On the contrary, if that behavior is undesired or it is against the policy for the responsible entity to control it, the permissions granted to the responsible entity shall be confined. For instance, in the access control program, if the first two inputs are from admins who are authorized to control the access, their permissions to input 0 and 1 can be kept; if those two inputs come from ordinary users who have no authorization to deny other users' access, their permissions to input 0 shall be removed.
\section{Formal Definition of Responsibility\label{sec:formal-definition}}

\vskip-.3em
In order to formalize the framework of responsibility analysis, this section introduces event traces to represent the system semantics, builds a lattice of system behaviors by trace properties, proposes an observation function that derives from the observer's cognizance and an inquiry function on system behaviors. Furthermore, this section formally defines responsibility as an abstraction of system semantics, using the observation function. To strengthen the intuition of responsibility analysis, the analysis of Example \ref{ex:access-control} will be illustrated step by step.



\subsection{System Semantics\label{subsec:event-model}}

Generally speaking, no matter what system we are concerned with and no matter which programming language is used to implement that system, the system's semantics can be represented by event traces. 
\vskip-1em

\vskip-.5em
\subsubsection{Event Trace}
In general, an \textit{event} could be used to represent any action in the system, such as ``input an integer'', ``assign a value to a variable'', or even ``launch the program''. Take the classic While programming language as an example, there are only three types of events: skip, assignment, and Boolean test. In order to make the definition of responsibility as generic as possible, here we do not adopt a specific programming language or restrict the range of possible events.

A \textit{trace} $\trace$ is a sequence of events that represents an execution of the system, 
and its length $|\trace|$ is the number of events in $\trace$. If $\trace$ is infinite, then its length $|\trace|$ is denoted as $\infty$. A special trace is the empty trace $\emptytrace$ , whose length is 0. A trace $\trace$ is $\isprefix$ - less than or equal to another trace $\trace'$, if and only if, $\trace$ is a prefix of $\trace'$. The \textit{concatenation} of a finite trace $\trace$ and an event $\event$ is simply defined by juxtaposition $\trace\event$, and the \textit{concatenation} of a finite traces $\trace$ and another (finite or infinite) trace $\trace'$ is denoted as $\trace\trace'$. 

\begin{eqntabular}[fl]{@{}r@{~}c@{~}l@{~}c@{~}l}
\event &\in&\setofallevents&&\renumber{event}\\
\trace &\in&\setofallnonemptytrace&\triangleq& \bigcup_{n \geqslant 1}\{[0, n-1] \mapsto \setofallevents\}\cup\{\naturalnum\mapsto\setofallevents\} \renumber{nonempty trace}\\
\trace &\in&\setofalltraces&\triangleq& \;\{\emptytrace\}\cup\setofallnonemptytrace \renumber{empty or nonempty trace}\\
\trace &\isprefix&\trace'&\triangleq& \;|\trace| \leqslant |\trace'| \wedge \forall\, 0 \leqslant i \leqslant |\trace| - 1: \trace_{i} = \trace'_{i}\renumber{prefix ordering of traces}
\end{eqntabular}
The function $\prefixesofset{P}$ returns the prefixes of every trace in the set $P$ of traces.

\begin{eqntabular}[fl]{@{}r@{~}c@{~}l}
\prefixesofset{} &\in& \wp(\setofalltraces)\mapsto\wp(\setofalltraces) \renumber{prefixes of traces}\\
\prefixesofset{P}&\triangleq& \{\trace'\in\setofalltraces\mid\exists\trace\in P.\;\trace'\isprefix\trace\} \nonumber
\end{eqntabular}
\vskip-2em

\vskip-2em
\subsubsection{Trace Semantics}\label{sec:System-Semantics}

For any system that we are concerned with, its \textit{maximal trace semantics}, denoted as $\maximalsemantics{}\in\wp(\setofalltraces)$, is the set of all possible maximal traces of that system. Especially, the maximal trace semantics of an empty program is $\{\emptytrace\}$. Correspondingly, the \textit{prefix trace semantics} $\prefixsemantics{}\in\wp(\setofalltraces)$ is the set of all possible prefix traces, which is an abstraction of maximal trace semantics via $\prefixesofset{}$, i.e. $\prefixsemantics{}=\prefixesofset{\maximalsemantics{}}$. Besides, a trace $\trace$ is said to be \textit{valid} in the system, if and only if $\trace\in\prefixsemantics{}$. Obviously, both maximal and prefix trace semantics do preserve the temporal ordering of events, which is missed by the SEM.


\begin{figure}[h]
\centering
  \includegraphics[width=.85\linewidth]{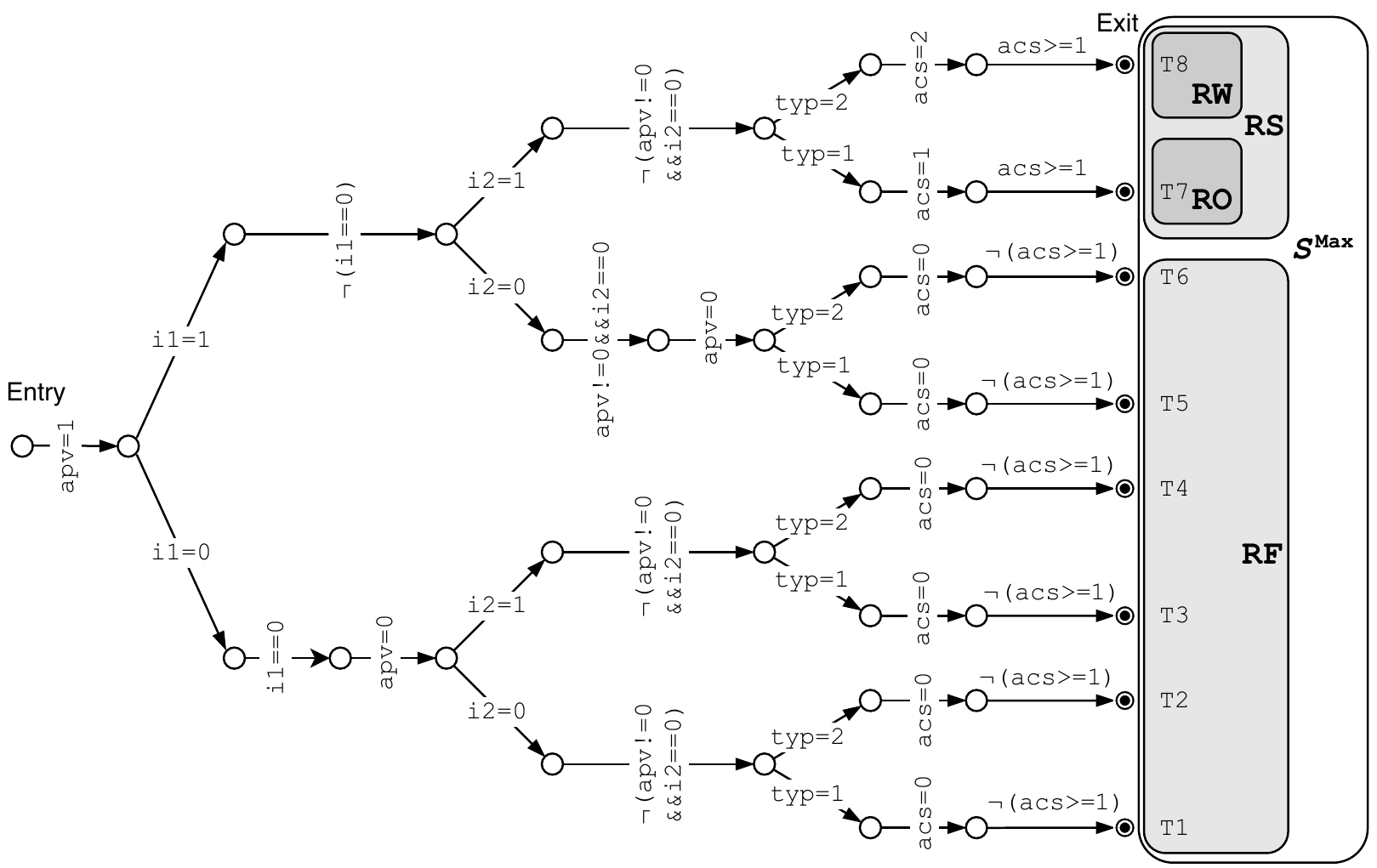}
\caption{Trace Semantics and Properties of Example \ref{ex:access-control}}
\label{fig:semantics-access-control}
\end{figure}

\begin{example}[Access Control, Continued]\label{ex:access-control-semantics}
For the program in Fig. \ref{fig:access-control}, only two types of events are used: assignment (e.g. $\footnotesize\texttt{apv=1}$) and Boolean test (e.g. $\footnotesize\texttt{i1==0}$ and ${\scriptstyle\neg}\footnotesize\texttt{(acs>=1)}$, where $\textstyle\neg$ denotes the failure of a Boolean test). To clarify the boundary among events, the triangle $\gap$ is used in the following to separate events in the trace. The access control program has three inputs, each of which has two possible values, thus its maximal trace semantics $\maximalsemantics{}$ consists of 8 traces (${\footnotesize\texttt{T1}}$-${\footnotesize\texttt{T8})}$, each of which is represented as a path in Fig. \ref{fig:semantics-access-control} starting at the entry point of program and finishing at the exit point. E.g. {\footnotesize $\texttt{T1}=$ $\texttt{apv=1}$ $\gap$ $\texttt{i1=0}$ $\gap$ $\texttt{i1==0}$ $\gap$ $\texttt{apv=0}$ $\gap$ $\texttt{i2=0}$ $\gap$ ${\scriptstyle\neg}\texttt{(apv!=0\&\&i2==0)}$ $\gap$ $\texttt{typ=1}$ $\gap$ $\texttt{acs=0}$ $\gap$ ${\scriptstyle\neg}\texttt{(acs>=1)}$} denotes the maximal execution where the first two inputs are 0 and the third input is 1. Meanwhile, the prefix trace semantics $\prefixsemantics{}=\prefixesofset{\maximalsemantics{}}$ are represented by the paths that start at the entry point and stop at any point (including the entry point for the empty trace $\emptytrace$).
\qed\end{example}

\subsection{Lattice of System Behaviors of Interest\label{subsec:lattice-trace-properties}}

\subsubsection{Trace Property} A \textit{trace property} is a set of traces in which a given property holds. Most behaviors of a given system, if not all, can be represented as a maximal trace property $\traceproperty\in\wp(\maximalsemantics{})$. 
\vskip-.5em

\begin{example}[Access Control, Continued]\label{ex:access-control-properties}
As illustrated in Fig. \ref{fig:semantics-access-control}, the behavior ``Read Failure'' \texttt{RF} 
is represented as a set of maximal traces such that the last event is ${\scriptstyle\neg}{\footnotesize \texttt{(acs>=1)}}$, i.e. \texttt{RF} = \{$\trace\in\maximalsemantics{}\mid$ $\trace_{|\trace|-1}$ = ${\scriptstyle\neg}{\footnotesize \texttt{(acs>=1)}}$\} = \{{\footnotesize\texttt{T1}}, {\footnotesize\texttt{T2}}, {\footnotesize\texttt{T3}}, {\footnotesize\texttt{T4}}, {\footnotesize\texttt{T5}}, {\footnotesize\texttt{T6}}\}; the behavior ``Read Success'' \texttt{RS}  (i.e. the read access succeeds to be granted) is the complement of \texttt{RF}, i.e. \texttt{RS} = $\maximalsemantics{}\backslash\texttt{RF}$ = \{{\footnotesize\texttt{T7}}, {\footnotesize\texttt{T8}}\}, whose subset \texttt{RO} = \{{\footnotesize\texttt{T7}}\} and \texttt{RW} = \{{\footnotesize\texttt{T8}}\} represent stronger properties ``Read Only access is granted'' and ``Read and Write access is granted'', respectively.
\qed\end{example}
\vskip-1.5em

\vskip-1em
\subsubsection{Complete Lattice of Maximal Trace Properties of Interest\label{sec:Lattice-of-Maximal-Trace-Properties}} 
We build a complete lattice of maximal trace properties, each of which represents a behavior of interest.
Typically, such a lattice is of form $\sextuple{\latticeofmaximalproperties}{\maximaltracepropertylessthan}{\maximaltracepropertytop}{\maximaltracepropertybot}{\maximaltracepropertyjoin}{\maximaltracepropertymeet}$, where 

\noindent
\begin{itemize}[leftmargin=*]
\setlength\itemsep{.2em}
\item[--] $\latticeofmaximalproperties\in\wp(\wp(\setofalltraces))$ is a set of behaviors of interest, each of which is represented by a maximal trace property;

\item[--] $\maximaltracepropertytop$  $=$ $\maximalsemantics{}$, i.e. the top  is the weakest maximal trace property which holds in every valid maximal trace; 

\item[--] $\maximaltracepropertybot$ $=$ $\emptyset$, i.e. the bottom is the strongest property such that no valid trace has this property, hence it is used to represent the property of invalidity; 

\item[--] $\maximaltracepropertylessthan$ is the standard set inclusion operation;

\item[--] $\maximaltracepropertyjoin$ and $\maximaltracepropertymeet$ are join and meet operations, which might not be the standard $\cup$ and $\cap$, since $\latticeofmaximalproperties$ is a subset of $\wp(\maximalsemantics{})$ but not necessarily a sublattice.
\end{itemize}

For any given system, there is possibly more than one way to build the complete lattice of maximal trace properties, depending on which behaviors are of interest. A special case of lattice is the power set of maximal trace semantics, i.e. $\latticeofmaximalproperties$ $=$ $\wp(\maximalsemantics{})$, which can be used to examine the responsibility for every possible behavior in the system. However, in most cases, a single behavior is of interest, and it is sufficient to adopt a lattice with only four elements: $\behavior$ representing the behavior of interest, $\maximalsemantics{}\backslash\behavior$ representing the complement of the behavior of interest, as well as the top $\maximalsemantics{}$ and bottom $\emptyset$. Particularly, if $\behavior$ is equal to $\maximalsemantics{}$, i.e. every valid maximal trace in the system has this behavior of interest, then a trivial lattice with only the top and bottom is built, from which no responsibility can be found, making the corresponding analysis futile. 

\begin{example}[Access Control, Continued]\label{ex:access-control-lattice-maximal-trace-properties}
We assume that ``Read Failure'' is of interest, as well as the behavior of granting write access. As illustrated by the lattice in Fig.~\ref{fig:analysis-framework}, 
regarding whether the read access fails or not, the top $\maximaltracepropertytop$ is split into two properties ``Read Failure'' \texttt{RF} and ``Read Success'' \texttt{RS}, which are defined in Example \ref{ex:access-control-properties} such that \texttt{RF} $\maximaltracepropertyjoin$ \texttt{RS} = $\maximalsemantics{}$ and \texttt{RF} $\maximaltracepropertymeet$ \texttt{RS} = $\emptyset$.
Furthermore, regarding whether the write access is granted or not, \texttt{RS} is split into ``Read Only access is granted'' \texttt{RO} and ``Read and Write access is granted'' \texttt{RW}. Now every property of interest corresponds to an element in the lattice, and the bottom $\maximaltracepropertybot$ = $\emptyset$ is the meet $\maximaltracepropertymeet$ of \texttt{RF}, \texttt{RO} and \texttt{RW}. In addition, if ``Read Failure'' is the only behavior of interest, \texttt{RO}  and \texttt{RW} can be removed from the lattice. 
\qed\end{example}
\vskip-1.5em

%
%

\vskip-1em
\subsubsection{Prediction Abstraction} \label{subsubsec:prediction-abstraction} Although the maximal trace property is well-suited to represent system behaviors, it does not reveal the point along the maximal trace from which a property is guaranteed to hold later in the execution. Thus, we propose to abstract every maximal trace property $\traceproperty\in\latticeofmaximalproperties$ isomorphically into a set $\predictiontraceproperty$ of prefixes of maximal traces in $\traceproperty$, excluding those whose maximal prolongation may not satisfy the property $\traceproperty$. This abstraction is called \textit{prediction 
abstraction}, and $\predictiontraceproperty$ is a \textit{prediction trace property} corresponding to $\traceproperty$. It is easy to see that $\predictiontraceproperty$ is a superset of $\traceproperty$, and is not necessarily prefix-closed.

\begin{eqntabular}[fl]{@{}r@{~}c@{~}l}
\alphapropertytransform{\maximalsemantics{}}{} &\in& \wp(\setofalltraces)\mapsto\wp(\setofalltraces) \renumber{prediction abstraction}\\
\alphapropertytransform{\maximalsemantics{}}{\traceproperty}&\triangleq& \{\trace\in\prefixesofset{\traceproperty}\mid\forall\trace'\in\maximalsemantics{}.\; \trace\isprefix\trace'\Rightarrow\trace'\in\traceproperty\} \nonumber\\
\gammapropertytransform{\maximalsemantics{}}{} &\in& \wp(\setofalltraces)\mapsto\wp(\setofalltraces) \renumber{prediction concretization}\\
\gammapropertytransform{\maximalsemantics{}}{\mathcal{Q}}&\triangleq& \{\trace\in\mathcal{Q}\mid\trace\in\maximalsemantics{}\} = \mathcal{Q}\cap\maximalsemantics{} \nonumber
\end{eqntabular}

We have a Galois isomorphism between maximal trace properties and prediction trace properties:

\begin{eqntabular}{c}
\hskip-1.5em
\pair{\wp(\maximalsemantics{})}{\subseteq}\GaloiS{\alphapropertytransform{\maximalsemantics{}}{}}{\gammapropertytransform{\maximalsemantics{}}{}}\pair{\alphapropertysettransform{\maximalsemantics{}}{\wp(\maximalsemantics{})}}{\subseteq}\label{eq:alphapropertytransform}
\end{eqntabular}

\vskip-.8em
\noindent where the abstract domain is obtained by a function $\alphapropertysettransform{\maximalsemantics{}}{}\in \wp(\wp(\setofalltraces))\mapsto\wp(\wp(\setofalltraces))$, which is defined as $\alphapropertysettransform{\maximalsemantics{}}{\mathfrak{X}}\triangleq\{\alphapropertytransform{\maximalsemantics{}}{\traceproperty}\mid\traceproperty\in\mathfrak{X}\}$. The following lemma immediately follows from the definition of $\alphapropertytransform{\maximalsemantics{}}{}$.


\begin{lemma}\label{lemma:prediction-property} Given a prediction trace property $\predictiontraceproperty$ that corresponds to a maximal trace property $\traceproperty$, if a prefix trace $\trace$ belongs to $\predictiontraceproperty$, then $\trace$ guarantees the satisfaction of property $\traceproperty$ (i.e. every valid maximal trace that is greater than or equal to $\trace$ is guaranteed to have property $\traceproperty$).
\end{lemma}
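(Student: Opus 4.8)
The plan is to unfold the definition of the prediction abstraction and show that membership in $\predictiontraceproperty$ is exactly the formal rendering of the English phrase ``$\trace$ guarantees the satisfaction of $\traceproperty$''. Recall that $\predictiontraceproperty$ corresponds to $\traceproperty$ via the prediction abstraction, i.e. $\predictiontraceproperty = \alphapropertytransform{\maximalsemantics{}}{\traceproperty}$. By definition,
\begin{eqntabular*}{c}
\alphapropertytransform{\maximalsemantics{}}{\traceproperty} = \{\trace\in\prefixesofset{\traceproperty}\mid\forall\trace'\in\maximalsemantics{}.\; \trace\isprefix\trace'\Rightarrow\trace'\in\traceproperty\}.
\end{eqntabular*}

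First I would take an arbitrary prefix trace $\trace\in\predictiontraceproperty$. By the displayed definition, the defining predicate gives directly that $\forall\trace'\in\maximalsemantics{}.\; \trace\isprefix\trace'\Rightarrow\trace'\in\traceproperty$. Since a valid maximal trace is by definition an element of $\maximalsemantics{}$, and ``greater than or equal to $\trace$'' means $\trace\isprefix\trace'$ in the prefix ordering (\ref{prefix ordering of traces}), this predicate is verbatim the statement that every valid maximal trace $\trace'$ with $\trace\isprefix\trace'$ satisfies $\trace'\in\traceproperty$, i.e. has property $\traceproperty$. That is precisely the claimed guarantee, so the conclusion follows.

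The second conjunct of the membership condition, $\trace\in\prefixesofset{\traceproperty}$, is not needed for the implication we are asked to prove; it only certifies that $\predictiontraceproperty$ consists of prefixes of traces in $\traceproperty$ (and in particular that such a $\trace'$ exists when $\traceproperty\neq\emptyset$), which is a side remark rather than part of the statement. So the proof is essentially a one-line unfolding of the definition together with the observation that ``valid maximal trace'' unpacks as ``element of $\maximalsemantics{}$'' and ``greater than or equal to $\trace$'' unpacks as $\trace\isprefix\trace'$.

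There is no real obstacle here: the lemma is, as the text says, an immediate consequence of the definition of $\alphapropertytransform{\maximalsemantics{}}{}$, and the only thing to be careful about is matching the informal vocabulary (``valid maximal trace'', ``greater than or equal to'', ``guarantees'') to the formal objects ($\maximalsemantics{}$, $\isprefix$, the universally quantified implication in the definition). If anything deserves a sentence of justification it is the remark that $\predictiontraceproperty\supseteq\traceproperty$ and that $\predictiontraceproperty$ need not be prefix-closed, but these are orthogonal to the implication and can be omitted from the proof of this lemma.
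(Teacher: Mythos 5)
Your proof is correct and matches the paper's treatment: the paper gives no separate argument, noting only that the lemma ``immediately follows from the definition'' of the prediction abstraction, which is exactly the one-line unfolding you carry out. Your added remarks about which conjunct of the membership condition does the work, and about matching the informal vocabulary to the formal objects, are accurate but not needed beyond that.
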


\begin{example}[Access Control, Continued]\label{ex:access-control-property-abstraction}
By $\alphapropertytransform{}{}$, each behavior in the lattice $\latticeofmaximalproperties$ of Example \ref{ex:access-control-lattice-maximal-trace-properties} can be abstracted into a prediction trace property:

\begin{itemize}[leftmargin=*]
\setlength\itemsep{.2em}
\item[--] $\alphapropertytransform{\maximalsemantics{}}{\maximaltracepropertytop}$ $=\prefixsemantics{}$, i.e. every valid trace in $\prefixsemantics{}$ guarantees $\maximaltracepropertytop$.

\item[--] $\alphapropertytransform{\maximalsemantics{}}{\texttt{RF}}$ $=\{\trace\in\prefixsemantics{}\mid$ {\footnotesize $\texttt{apv=1}$ $\gap$ $\texttt{i1=0}$} $\isprefix\trace$ $\vee$ {\footnotesize $\texttt{apv=1}$ $\gap$ $\texttt{i1=1}$ $\gap$ ${\scriptstyle\neg}\texttt{(i1==0)}$ $\gap$ $\texttt{i2=0}$} $\isprefix\trace$ $\}$, i.e. for any valid trace, if at least one of first two inputs is 0, then it guarantees ``Read Failure'' $\texttt{RF}$.

\item[--] $\alphapropertytransform{\maximalsemantics{}}{\texttt{RS}}$ $=\{\trace\in\prefixsemantics{}\mid$ {\footnotesize $\texttt{apv=1}$ $\gap$ $\texttt{i1=1}$ $\gap$ ${\scriptstyle\neg}\texttt{(i1==0)}$ $\gap$ $\texttt{i2=1}$} $\isprefix\trace\}$,  i.e. for any valid trace, if first two inputs are 1, it guarantees ``Read Success'' $\texttt{RS}$.

\item[--] $\alphapropertytransform{\maximalsemantics{}}{\texttt{RO}}$ $=\{\trace\in\prefixsemantics{}\mid$ {\footnotesize $\texttt{apv=1}$ $\gap$ $\texttt{i1=1}$ $\gap$ ${\scriptstyle\neg}\texttt{(i1==0)}$ $\gap$ $\texttt{i2=1}$ $\gap$ ${\scriptstyle\neg}\texttt{(apv!=0\&\&i2==0)}$ $\gap$ $\texttt{typ=1}$} $\isprefix\trace\}$, i.e. for any valid trace, if first two inputs are 1 and the third input is 1, then it guarantees ``Read Only access is granted'' $\texttt{RO}$.

\item[--] $\alphapropertytransform{\maximalsemantics{}}{\texttt{RW}}$ $=\{\trace\in\prefixsemantics{}\mid$ {\footnotesize $\texttt{apv=1}$ $\gap$ $\texttt{i1=1}$ $\gap$ ${\scriptstyle\neg}\texttt{(i1==0)}$ $\gap$ $\texttt{i2=1}$ $\gap$ ${\scriptstyle\neg}\texttt{(apv!=0\&\&i2==0)}$ $\gap$ $\texttt{typ=2}$} $\isprefix\trace$ $\}$,  i.e. for any valid trace, if first two inputs are 1 and the third is 2, then it guarantees ``Read and Write access is granted'' $\texttt{RW}$.

\item[--] $\alphapropertytransform{\maximalsemantics{}}{\maximaltracepropertybot}$ $=\emptyset$, i.e. no valid trace can guarantee $\maximaltracepropertybot$. \qed
\end{itemize}
\end{example}
\vskip-1.5em

\vskip-.5em
\subsection{Observation of System Behaviors\label{subsec:monitor}}
Let $\maximalsemantics{}$ be the maximal trace semantics and $\latticeofmaximalproperties$  be the lattice of system behaviors designed as in Section \ref{subsec:lattice-trace-properties}.
Given any prefix trace $\trace\in\setofalltraces$, an observer can learn some information from it, more precisely, a maximal trace property $\traceproperty\in\latticeofmaximalproperties$ that is guaranteed by $\trace$ from the observer's perspective.
In this section, an \textit{observation} function $\observation{}{}{}$ is proposed to represent such a ``property learning process'' of the observer, which is formally defined in the following three steps.
\vskip-1em

\vskip-1em
\subsubsection{Inquiry Function\label{sec:inquiry-function}} First, an \textit{inquiry function} $\inquiryfunction{}{}$ is defined to map every trace $\trace\in\setofalltraces$ to the strongest maximal trace property in $\latticeofmaximalproperties$ that $\trace$ can guarantee.


\vskip-.3em
\begin{eqntabular}[fl]{@{}l}
\inquiryfunction{}{} \in \wp(\setofalltraces)\mapsto\wp(\wp(\setofalltraces))\mapsto\setofalltraces\mapsto\wp(\setofalltraces) 
\label{eq:inquiryfunction}\stepcounter{equation}\renumber{inquiry (\ref{eq:inquiryfunction})}\\
\inquiryfunction{}{\maximalsemantics{}, \latticeofmaximalproperties, \trace}\triangleq\nonumber\\
\quad let\;\alphapropertytransform{\semantics}{\traceproperty}=\{\trace\in\prefixesofset{\traceproperty}\mid\forall\trace'\in\semantics.\;\trace\isprefix\trace'\Rightarrow\trace'\in\traceproperty\}\;in\nonumber\\
\quad\quad\maximaltracepropertymeet\{\traceproperty\in\latticeofmaximalproperties\mid\trace\in\alphapropertytransform{\maximalsemantics{}}{\traceproperty}\} \nonumber
\end{eqntabular}
\vskip-.5em
Specially, for an invalid trace $\trace\not\in\prefixsemantics{}$, there does not exist any $\traceproperty\in\latticeofmaximalproperties$ such that $\trace\in\alphapropertytransform{\maximalsemantics{}}{\traceproperty}$, therefore $\inquiryfunction{}{\maximalsemantics{}, \latticeofmaximalproperties, \trace}$ $=\emptyset=\maximaltracepropertybot$.


\begin{corollary}\label{corollary:inquiry-guarantee} Given the semantics $\maximalsemantics{}$ and lattice $\latticeofmaximalproperties$ of system behaviors, if the inquiry function $\inquiryfunction{}{}$ maps a trace $\trace$ to a maximal trace property $\traceproperty\in\latticeofmaximalproperties$, then $\trace$ guarantees the satisfaction of $\traceproperty$ (i.e. every valid maximal trace that is greater than or equal to $\trace$ is guaranteed to have property $\traceproperty$).
\end{corollary}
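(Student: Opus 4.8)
The plan is to reduce the statement to Lemma~\ref{lemma:prediction-property}: if I can show that $\trace$ itself lies in the prediction abstraction $\alphapropertytransform{\maximalsemantics{}}{\tracepropertyQ}$ of the property $\tracepropertyQ = \inquiryfunction{}{\maximalsemantics{}, \latticeofmaximalproperties, \trace}$ returned by the inquiry function, then that lemma immediately gives that $\trace$ guarantees $\tracepropertyQ$, which is exactly what is claimed.

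First I would dispose of the degenerate case $\trace \notin \prefixsemantics{}$. As observed right after the definition of $\inquiryfunction{}{}$, no $\traceproperty \in \latticeofmaximalproperties$ then satisfies $\trace \in \alphapropertytransform{\maximalsemantics{}}{\traceproperty}$, so $\tracepropertyQ = \maximaltracepropertybot = \emptyset$; and since $\trace$ is not a prefix of any maximal trace in $\maximalsemantics{}$, there is no valid maximal trace above $\trace$, so $\trace$ vacuously guarantees $\emptyset$.

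The substantial case is $\trace \in \prefixsemantics{}$. Put $S = \{\traceproperty \in \latticeofmaximalproperties \mid \trace \in \alphapropertytransform{\maximalsemantics{}}{\traceproperty}\}$, so that $\tracepropertyQ = \maximaltracepropertymeet S$; here $S \neq \emptyset$ because $\alphapropertytransform{\maximalsemantics{}}{\maximaltracepropertytop} = \prefixsemantics{}$ contains $\trace$, hence $\maximaltracepropertytop \in S$. Unfolding the definition of $\alphapropertytransform{\maximalsemantics{}}{}$, each $\traceproperty \in S$ contains every maximal trace of $\maximalsemantics{}$ that extends $\trace$; since $\trace$ is valid there is at least one such maximal trace $\trace_0$, and $\trace_0 \in \bigcap S$, so $\trace$ is a prefix of a trace in $\bigcap S$ while every maximal trace above $\trace$ lies in $\bigcap S$, that is, $\trace \in \alphapropertytransform{\maximalsemantics{}}{\bigcap S}$. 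Equivalently, $\alphapropertytransform{\maximalsemantics{}}{}$ commutes with nonempty intersections, which is also a consequence of the prediction Galois isomorphism~(\ref{eq:alphapropertytransform}).

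Finally I must identify $\bigcap S$ with the lattice meet $\maximaltracepropertymeet S = \tracepropertyQ$. Using that $\pair{\latticeofmaximalproperties}{\maximaltracepropertylessthan}$, although only required to be a complete lattice ordered by $\subseteq$, is in fact closed under arbitrary intersections — a Moore family, so its meet $\maximaltracepropertymeet$ coincides with $\cap$ — one gets $\maximaltracepropertymeet S = \bigcap S$, whence $\trace \in \alphapropertytransform{\maximalsemantics{}}{\tracepropertyQ}$, and Lemma~\ref{lemma:prediction-property} concludes. This reconciliation between the abstract meet $\maximaltracepropertymeet$ and ordinary intersection is the step I expect to require the most care and to have to make explicit: if $\maximaltracepropertymeet S$ were allowed to be strictly smaller than $\bigcap S$, then $\trace$ would fail to guarantee $\tracepropertyQ$ and the corollary would break, so the argument genuinely relies on the lattice of behaviours being intersection-closed (the join $\maximaltracepropertyjoin$, by contrast, may perfectly well differ from $\cup$).
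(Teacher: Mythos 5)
The paper offers no proof of Corollary~\ref{corollary:inquiry-guarantee}: it is stated as an immediate consequence of Lemma~\ref{lemma:prediction-property}, and your reduction --- show $\trace\in\alphapropertytransform{\maximalsemantics{}}{\tracepropertyQ}$ for $\tracepropertyQ=\inquiryfunction{}{\maximalsemantics{}, \latticeofmaximalproperties, \trace}$ and then invoke that lemma --- is exactly the intended route. Your handling of the invalid case, the remark that $\maximaltracepropertytop\in S$ so the meet is taken over a nonempty family, and the verification that $\trace\in\alphapropertytransform{\maximalsemantics{}}{\bigcap S}$ are all correct.

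The step you single out as delicate is indeed the only real content, and you are right that the corollary stands or falls with it. Since $\maximaltracepropertymeet S$ is a lower bound of your $S$ under $\subseteq$, one gets $\maximaltracepropertymeet S\subseteq\bigcap S$ for free, whereas the corollary needs every valid maximal extension of $\trace$ to land in $\maximaltracepropertymeet S$; what is automatic is only that such extensions land in $\bigcap S$. The paper explicitly allows $\maximaltracepropertymeet$ to differ from $\cap$ (``$\latticeofmaximalproperties$ is a subset of $\wp(\maximalsemantics{})$ but not necessarily a sublattice''), and under that reading the statement can fail: in the extreme, $\maximaltracepropertymeet S$ could be $\maximaltracepropertybot=\emptyset$ even though $\trace$ is valid and hence has at least one maximal extension. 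So your Moore-family hypothesis --- or the weaker condition that $\maximaltracepropertymeet S=\bigcap S$ for the particular families $S$ produced by the inquiry function, which holds in every example of the paper because there $S$ is always a chain --- is genuinely required and is nowhere stated. This is a gap in the paper's statement rather than in your argument; the one change I would ask of you is to promote the intersection-closure assumption from an aside to an explicit hypothesis, since Corollary~\ref{corollary:observation-guarantee} and Lemma~\ref{lemma:responsibility-definition} are subsequently proved on top of this corollary.
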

\vskip-1em



\vskip-1em
\begin{lemma}\label{lemma:inquiry-decreasing} The inquiry function $\inquiryfunction{}{\maximalsemantics{}, \latticeofmaximalproperties}$ is decreasing on the inquired trace $\trace$: the greater (longer) $\trace$ is, the stronger property it can guarantee.
\end{lemma}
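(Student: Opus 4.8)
The plan is to unfold $\inquiryfunction{}{\maximalsemantics{}, \latticeofmaximalproperties, \trace}$ as the lattice meet $\maximaltracepropertymeet\{\traceproperty\in\latticeofmaximalproperties\mid\trace\in\alphapropertytransform{\maximalsemantics{}}{\traceproperty}\}$ and to reduce the claim to a monotonicity statement about the index set. Writing $G(\trace)\triangleq\{\traceproperty\in\latticeofmaximalproperties\mid\trace\in\alphapropertytransform{\maximalsemantics{}}{\traceproperty}\}$, I read ``decreasing on $\trace$'' as: $\trace\isprefix\trace'$ implies $\inquiryfunction{}{\maximalsemantics{}, \latticeofmaximalproperties, \trace'}\maximaltracepropertylessthan\inquiryfunction{}{\maximalsemantics{}, \latticeofmaximalproperties, \trace}$. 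Since $\latticeofmaximalproperties$ is a complete lattice, the meet is antitone in its argument (the greatest lower bound of a larger set is $\maximaltracepropertylessthan$ that of a smaller one), so it suffices to show $G(\trace)\subseteq G(\trace')$ whenever $\trace\isprefix\trace'$; then $\inquiryfunction{}{\maximalsemantics{}, \latticeofmaximalproperties, \trace'}=\maximaltracepropertymeet G(\trace')\maximaltracepropertylessthan\maximaltracepropertymeet G(\trace)=\inquiryfunction{}{\maximalsemantics{}, \latticeofmaximalproperties, \trace}$.

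First I would dispose of the invalid cases via validity. If $\trace'\notin\prefixsemantics{}$, then, as noted immediately after the definition of the inquiry function, $\inquiryfunction{}{\maximalsemantics{}, \latticeofmaximalproperties, \trace'}=\emptyset=\maximaltracepropertybot$, which is below everything, so the inequality is trivial; in particular, since $\prefixsemantics{}=\prefixesofset{\maximalsemantics{}}$ is prefix-closed, if $\trace$ is invalid then so is $\trace'$ and both sides are $\maximaltracepropertybot$. Hence only the case $\trace,\trace'\in\prefixsemantics{}$ with $\trace\isprefix\trace'$ remains.

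For that case I would prove $G(\trace)\subseteq G(\trace')$ by picking any $\traceproperty\in\latticeofmaximalproperties$ with $\trace\in\alphapropertytransform{\maximalsemantics{}}{\traceproperty}$ and verifying the two conjuncts of the definition of $\alphapropertytransform{\maximalsemantics{}}{\traceproperty}$ for $\trace'$. The guarantee conjunct, $\forall\trace''\in\maximalsemantics{}.\;\trace'\isprefix\trace''\Rightarrow\trace''\in\traceproperty$, follows from transitivity of $\isprefix$: $\trace'\isprefix\trace''$ gives $\trace\isprefix\trace''$, and the corresponding conjunct for $\trace$ yields $\trace''\in\traceproperty$. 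The membership conjunct, $\trace'\in\prefixesofset{\traceproperty}$, is where $\trace'\in\prefixsemantics{}$ is used: $\trace'$ prolongs to some maximal $\trace''\in\maximalsemantics{}$ with $\trace'\isprefix\trace''$, and the guarantee conjunct just established gives $\trace''\in\traceproperty$, so $\trace'$ is a prefix of a trace in $\traceproperty$. Thus $\trace'\in\alphapropertytransform{\maximalsemantics{}}{\traceproperty}$, i.e. $\traceproperty\in G(\trace')$.

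The only delicate point is precisely this membership conjunct: $\trace'\in\prefixesofset{\traceproperty}$ does not follow from $\trace\in\prefixesofset{\traceproperty}$ by prefixes alone, which is why the invalid case must be split off first and why validity of $\trace'$ is needed to choose a maximal prolongation that lands in $\traceproperty$. The remaining ingredients — transitivity of $\isprefix$, prefix-closure of $\prefixsemantics{}$, and antitonicity of $\maximaltracepropertymeet$ under set inclusion — are routine.
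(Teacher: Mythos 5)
Your proof is correct and follows essentially the same route as the paper: dispose of the invalid-trace cases, show that $\trace\isprefix\trace'$ (both valid) implies $\{\traceproperty\in\latticeofmaximalproperties\mid\trace\in\alphapropertytransform{\maximalsemantics{}}{\traceproperty}\}\subseteq\{\traceproperty\in\latticeofmaximalproperties\mid\trace'\in\alphapropertytransform{\maximalsemantics{}}{\traceproperty}\}$, and conclude by antitonicity of $\maximaltracepropertymeet$. The paper isolates that inclusion as a separate lemma proved by contradiction, whereas you prove it directly by checking the two conjuncts of $\alphapropertytransform{\maximalsemantics{}}{\traceproperty}$ — a contrapositive restatement of the same argument — and you correctly flag the one genuinely delicate point, namely that validity of $\trace'$ is needed to establish $\trace'\in\prefixesofset{\traceproperty}$.
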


\begin{example}[Access Control, Continued]\label{ex:access-control-inquiry}
Using $\maximalsemantics{}$ defined in Example \ref{ex:access-control-semantics} and $\latticeofmaximalproperties$ defined in Example \ref{ex:access-control-lattice-maximal-trace-properties}, the inquiry function $\inquiryfunction{}{}$ of definition (\ref{eq:inquiryfunction}) is such that:

\begin{itemize}[leftmargin=*]
\item[--]$\inquiryfunction{}{\maximalsemantics{}, \latticeofmaximalproperties, {\footnotesize \texttt{apv=1}}}$ = $\maximaltracepropertytop$, i.e. {\footnotesize \texttt{apv=1}} can guarantee only $\maximaltracepropertytop$.

\item[--]$\inquiryfunction{}{\maximalsemantics{}, \latticeofmaximalproperties, {\footnotesize \texttt{apv=1} \gap \texttt{i1=0}}}$ = $\texttt{RF}$, i.e. after setting the first input as 0, ``Read Failure'' $\texttt{RF}$ is guaranteed.


\item[--]$\inquiryfunction{}{\maximalsemantics{}, \latticeofmaximalproperties, {\footnotesize \texttt{apv=1} \gap \texttt{i1=1}}}=\inquiryfunction{}{}(\maximalsemantics{}, \latticeofmaximalproperties, {\footnotesize \texttt{apv=1}}{\footnotesize \gap}{\footnotesize \texttt{i1=1}}$ ${\footnotesize \gap}{\footnotesize {\scriptstyle\neg}\texttt{(i1==0)}})=\maximaltracepropertytop$, i.e. if the first input is 1, only $\maximaltracepropertytop$ is guaranteed before entering the second input.

\item[--]$\inquiryfunction{}{\maximalsemantics{}, \latticeofmaximalproperties, {\footnotesize \texttt{apv=1} \gap \texttt{i1=1} \gap {\scriptstyle\neg}\texttt{(i1==0)} \gap \texttt{i2=0}}}=\texttt{RF}$, i.e. if the second input is 0, ``Read Failure'' $\texttt{RF}$ is guaranteed.

\item[--]$\inquiryfunction{}{\maximalsemantics{}, \latticeofmaximalproperties, {\footnotesize \texttt{apv=1} \gap \texttt{i1=1} \gap {\scriptstyle\neg}\texttt{(i1==0)} \gap \texttt{i2=1}}}=\texttt{RS}$, i.e. if first two inputs are 1, ``Read Success'' $\texttt{RS}$ is guaranteed.


\item[--]$\inquiryfunction{}{}(\maximalsemantics{}, \latticeofmaximalproperties, {\footnotesize \texttt{apv=1}}$ ${\footnotesize \gap}$ ${\footnotesize \texttt{i1=1}}$ ${\footnotesize \gap}$ ${\footnotesize {\scriptstyle\neg}\texttt{(i1==0)}}$ ${\footnotesize \gap}$ ${\footnotesize \texttt{i2=1}}$ ${\footnotesize \gap}$ ${\footnotesize {\scriptstyle\neg}\texttt{(i2==0)}}$ ${\footnotesize \gap}$ ${\footnotesize \texttt{typ=2}})$ = $\texttt{RW}$, i.e. if first two inputs are 1, after the third input is set to be 2, a stronger property ``Read and Write access is granted'' $\texttt{RW}$ is guaranteed. \qed
\end{itemize}
\end{example}
\vskip-1.5em

\vskip-1.5em
\subsubsection{Cognizance Function\label{sec:cognizance-function}} As discussed in (P3) of section \ref{subsec:example}, it is necessary to take the observer's cognizance into account.
Specifically, in program security, the observer's cognizance can be used to represent attackers' capabilities (e.g. what they can learn from the program execution).
Given a trace $\trace$ (not necessarily valid), if the observer cannot distinguish $\trace$ from some other traces, then he does not have an omniscient cognizance of $\trace$, and the \textit{cognizance} function $\cognizancefunction{}{\trace}$ is defined to include all traces indistinguishable from $\trace$.
\vskip-.5em
\begin{eqntabular}[fl]{@{}r@{~}c@{~}l}
\cognizancefunction{}{} &\in& \setofalltraces\mapsto\wp(\setofalltraces) \label{eq:cognizancefunction}\stepcounter{equation}\renumber{cognizance (\ref{eq:cognizancefunction})}\\
\cognizancefunction{}{\trace}&\triangleq& \{\trace'\in\setofalltraces\mid\text{\small observer\ cannot\ distinguish\ } \trace' \text{\small\ from\ } \trace\}\nonumber
\end{eqntabular}
\vskip-.4em
Such a cognizance function is extensive, i.e. $\forall\trace\in\setofalltraces.$ $\trace\in\cognizancefunction{}{\trace}$.  In particular, there is an \textit{omniscient observer} and its corresponding cognizance function is denoted as $\omniscientcognizance{}{}$ such that $\forall\trace\in\setofalltraces.\  \omniscientcognizance{}{\trace}=\{\trace\}$, which means that every trace is unambiguous to the omniscient observer.



To facilitate the proof of some desired properties for the observation function defined later, two assumptions are made here without loss of generality:

\begin{enumerate}[leftmargin=*,label={(A\arabic*)},itemsep=.5ex,topsep=.1em]

\item\label{assumption:cognizance1} The cognizance of a trace $\trace\trace'$ is the concatenation of cognizances of $\trace$ and $\trace'$. I.e. $\forall\trace,\trace'\in\setofalltraces.$ $\cognizancefunction{}{\trace\trace'}$ = $\{\tau\tau'\mid\tau\in\cognizancefunction{}{\trace}\wedge\tau'\in\cognizancefunction{}{\trace'}\}$.

\item\label{assumption:cognizance2} Given an invalid trace, the cognizance function would not return a valid trace. I.e. $\forall\trace\in\setofalltraces.$ $ \trace\not\in\prefixsemantics{}$ $\Rightarrow$ $\cognizancefunction{}{\trace}\cap\prefixsemantics{}$ = $\emptyset$.
\end{enumerate}

To make the assumption \ref{assumption:cognizance1} sound, we must have $\cognizancefunction{}{\emptytrace}=\{\emptytrace\}$, because otherwise, for any non-empty trace $\trace$, $\cognizancefunction{}{\trace}=$ $\cognizancefunction{}{\trace\emptytrace}$ = $\{\tau\tau'\mid\tau\in\cognizancefunction{}{\trace}\wedge\tau'\in\cognizancefunction{}{\emptytrace}\}$ does not have a fixpoint. 
In practice, $\{\pair{\trace}{\trace'}\mid\trace'\in\cognizancefunction{}{\trace}\}$ is an equivalence relation, but the symmetry and transitivity property are not used in the proofs.
\vskip-.5em

\vskip-.5em
\begin{example}[Access Control, Continued]\label{ex:access-control-cognizance}
Consider two separate observers.

(i) For an omniscient observer: $\forall\trace\in\setofalltraces.$ $\omniscientcognizance{}{\trace}=\{\trace\}$.

(ii) For an observer representing the second admin who is unaware of the first input: $\cognizancefunction{}{{\footnotesize \texttt{i1=0}\gap \texttt{i1==0} \gap \texttt{apv=0}}}=\cognizancefunction{}{{\footnotesize \texttt{i1=1}\gap {\scriptstyle\neg}\texttt{(i1==0)}}}=$ \{{\footnotesize $\texttt{i1=0}$ $\gap$ $\texttt{i1==0}$ $\gap$ $\texttt{apv=0}$}, {\footnotesize $\texttt{i1=1}$ $\gap$ ${\scriptstyle\neg}\texttt{(i1==0)}$}\}, i.e. this observer cannot distinguish whether the first input is 0 or 1. Thus, for a prefix trace in which the first two inputs are 0, $\cognizancefunction{}{{\footnotesize \texttt{apv=1} \gap \texttt{i1=0} \gap \texttt{i1==0} \gap \texttt{apv=0} \gap \texttt{i2=0}}}$ = \{{\footnotesize $\texttt{apv=1}$ $\gap$ $\texttt{i1=0}$ $\gap$ $\texttt{i1==0}$ $\gap$ $\texttt{apv=0}$ $\gap$ $\texttt{i2=0}$}, {\footnotesize $\texttt{apv=1}$ $\gap$ $\texttt{i1=1}$ $\gap$ ${\scriptstyle\neg}\texttt{(i1==0)}$ $\gap$ $\texttt{i2=0}$}\}, where {\footnotesize $\texttt{apv=1}$} and {\footnotesize $\texttt{i2=0}$} are known by this observer. In the same way, its cognizance on other traces can be generated.
\qed\end{example}
\vskip-1.5em

\vskip-1.5em
\subsubsection{Observation Function\label{sec:observation-function}} 
For an observer with cognizance function $\cognizancefunction{}{}$, given a single trace $\trace$, the observer cannot distinguish $\trace$ with traces in $\cognizancefunction{}{\trace}$. In order to formalize the information that the observer can learn from $\trace$, we apply the inquiry function $\inquiryfunction{}{}$ on each trace in $\cognizancefunction{}{\trace}$, and get a set of maximal trace properties. By joining them together, we get the strongest property in $\latticeofmaximalproperties$ that $\trace$ can guarantee from the observer's perspective. Such a process is defined as the \textit{observation} function $\observation{}{}{\maximalsemantics{}, \latticeofmaximalproperties, \cognizancefunction{}{}, \trace}$.
\vskip-.5em
\begin{eqntabular}[fl]{@{}l}
\observation{}{}{}\in\wp(\setofalltraces)\mapsto\wp(\wp(\setofalltraces))\mapsto(\setofalltraces\mapsto\wp(\setofalltraces))\mapsto\setofalltraces\mapsto\wp(\setofalltraces) 
\nonumber\\
\observation{}{}{\maximalsemantics{}, \latticeofmaximalproperties, \cognizancefunction{}{}, \trace}\triangleq 
\label{eq:observation}\stepcounter{equation}
\renumber{observation (\ref{eq:observation})}\\
\quad let\;\alphapropertytransform{\semantics}{\traceproperty}=\{\trace\in\prefixesofset{\traceproperty}\mid\forall\trace'\in\semantics.\;\trace\isprefix\trace'\Rightarrow\trace'\in\traceproperty\}\;in\nonumber\\
\quad\quad let\;\inquiryfunction{}{\semantics, \latticeofproperties,\trace} =  \maximaltracepropertymeet\{\traceproperty\in\latticeofproperties\mid\trace\in\alphapropertytransform{\semantics}{\traceproperty}\}\;in\nonumber
\\
\quad\quad\quad\maximaltracepropertyjoin\{\inquiryfunction{}{\maximalsemantics{}, \latticeofmaximalproperties, \trace'}\mid\trace'\in\cognizancefunction{}{\trace}\}. \nonumber
\end{eqntabular}

From the above definition, it is easy to see that, for every invalid trace $\trace$, $\observation{}{}{\maximalsemantics{}, \latticeofmaximalproperties, \cognizancefunction{}{}, \trace}=\maximaltracepropertybot$, since every trace $\trace'$ in $\cognizancefunction{}{\trace}$ is invalid by \ref{assumption:cognizance2} and $\inquiryfunction{}{\maximalsemantics{}, \latticeofmaximalproperties, \trace'}=\maximaltracepropertybot$. In addition, for an omniscient observer with cognizance function $\omniscientcognizance{}{}$, its observation $\observation{}{}{\maximalsemantics{}, \latticeofmaximalproperties, \omniscientcognizance{}{}, \trace}$ $=$ $\inquiryfunction{}{\maximalsemantics{}, \latticeofmaximalproperties, \trace}$.

\begin{corollary}\label{corollary:observation-guarantee} For any observer with cognizance $\cognizancefunction{}{}$, if the corresponding observation function maps a trace $\trace$ to a maximal trace property $\traceproperty\in\latticeofmaximalproperties$, then $\trace$ guarantees the satisfaction of property $\traceproperty$ (i.e. every valid maximal trace that is greater than or equal to $\trace$ is guaranteed to have property $\traceproperty$).
\end{corollary}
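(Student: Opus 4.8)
The plan is to reduce Corollary~\ref{corollary:observation-guarantee} to Corollary~\ref{corollary:inquiry-guarantee}, exploiting the extensivity of the cognizance function together with the fact that the lattice order $\maximaltracepropertylessthan$ is ordinary set inclusion. Write $\traceproperty$ for the property $\observation{}{}{\maximalsemantics{}, \latticeofmaximalproperties, \cognizancefunction{}{}, \trace}$, which by its defining equation~(\ref{eq:observation}) equals $\maximaltracepropertyjoin\{\inquiryfunction{}{\maximalsemantics{}, \latticeofmaximalproperties, \trace'}\mid\trace'\in\cognizancefunction{}{\trace}\}$.

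First I would record that the cognizance function is extensive, so $\trace\in\cognizancefunction{}{\trace}$; hence $\inquiryfunction{}{\maximalsemantics{}, \latticeofmaximalproperties, \trace}$ is one of the operands of the join defining $\traceproperty$. Since $\maximaltracepropertyjoin$ is the least upper bound operator of the lattice ordered by $\maximaltracepropertylessthan$, it follows that $\inquiryfunction{}{\maximalsemantics{}, \latticeofmaximalproperties, \trace}\maximaltracepropertylessthan\traceproperty$, i.e. $\inquiryfunction{}{\maximalsemantics{}, \latticeofmaximalproperties, \trace}\subseteq\traceproperty$ because $\maximaltracepropertylessthan$ is plain set inclusion.

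Next I would invoke Corollary~\ref{corollary:inquiry-guarantee} applied to the property $\inquiryfunction{}{\maximalsemantics{}, \latticeofmaximalproperties, \trace}$: every valid maximal trace $\trace'$ with $\trace\isprefix\trace'$ satisfies $\trace'\in\inquiryfunction{}{\maximalsemantics{}, \latticeofmaximalproperties, \trace}$. Combining this with the inclusion above, every such $\trace'$ also lies in $\traceproperty$, which is precisely the assertion that $\trace$ guarantees the satisfaction of $\traceproperty$.

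There is essentially no hard part; the two points to be careful about are: (i) when $\trace$ is invalid the cognizance selects only invalid traces, so $\traceproperty=\maximaltracepropertybot=\emptyset$ and the conclusion holds vacuously because no valid maximal trace extends $\trace$; and (ii) $\maximaltracepropertyjoin$ need not coincide with $\cup$, so the second step must use only that the join is an upper bound for the lattice order $\maximaltracepropertylessthan$, never a concrete set-theoretic formula — phrased that way, the step is immediate. One could alternatively unfold $\inquiryfunction{}{}$ and route the argument through Lemma~\ref{lemma:prediction-property}, but going via Corollary~\ref{corollary:inquiry-guarantee} is cleaner and reuses an already-established statement.
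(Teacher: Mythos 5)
Your proposal is correct and follows essentially the same route as the paper's own proof: both use extensivity of the cognizance function so that $\inquiryfunction{}{\maximalsemantics{}, \latticeofmaximalproperties, \trace}$ appears among the joined operands, conclude $\inquiryfunction{}{\maximalsemantics{}, \latticeofmaximalproperties, \trace}\subseteq\observation{}{}{\maximalsemantics{}, \latticeofmaximalproperties, \cognizancefunction{}{}, \trace}$ from the join being an upper bound, and then invoke Corollary~\ref{corollary:inquiry-guarantee}. Your added cautions about the invalid-trace case and about not identifying $\maximaltracepropertyjoin$ with $\cup$ are sensible but do not change the argument.
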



\begin{lemma}\label{lemma:observation-decreasing} The observation function $\observation{}{}{\maximalsemantics{}, \latticeofmaximalproperties, \cognizancefunction{}{}}$ is decreasing on the observed trace $\trace$: the greater (longer) $\trace$ is, the stronger property it can observe.
\end{lemma}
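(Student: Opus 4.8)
The plan is to reduce this statement to the already-proved monotonicity of the inquiry function (Lemma~\ref{lemma:inquiry-decreasing}) via the concatenation property of cognizance (assumption~\ref{assumption:cognizance1}). Since a trace greater than an infinite trace equals it, it suffices to fix a finite trace $\trace$ and an arbitrary $\trace' \in \setofalltraces$ and prove $\observation{}{}{\maximalsemantics{}, \latticeofmaximalproperties, \cognizancefunction{}{}, \trace\trace'} \maximaltracepropertylessthan \observation{}{}{\maximalsemantics{}, \latticeofmaximalproperties, \cognizancefunction{}{}, \trace}$; the general case $\trace_1 \isprefix \trace_2$ then follows by writing $\trace_2 = \trace_1\trace'$.

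First I would unfold the definition of $\observation{}{}{}$ at $\trace\trace'$: it is the join in $\latticeofmaximalproperties$ of the family $\{\inquiryfunction{}{\maximalsemantics{}, \latticeofmaximalproperties, \tau''} \mid \tau'' \in \cognizancefunction{}{\trace\trace'}\}$. By assumption~\ref{assumption:cognizance1}, $\cognizancefunction{}{\trace\trace'} = \{\tau\tau_{f} \mid \tau\in\cognizancefunction{}{\trace},\ \tau_{f}\in\cognizancefunction{}{\trace'}\}$, so every member of that family has the form $\inquiryfunction{}{\maximalsemantics{}, \latticeofmaximalproperties, \tau\tau_{f}}$ with $\tau\in\cognizancefunction{}{\trace}$ (and $\tau$ finite, so the concatenation is well defined and $\tau\isprefix\tau\tau_{f}$). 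Lemma~\ref{lemma:inquiry-decreasing} then yields $\inquiryfunction{}{\maximalsemantics{}, \latticeofmaximalproperties, \tau\tau_{f}} \maximaltracepropertylessthan \inquiryfunction{}{\maximalsemantics{}, \latticeofmaximalproperties, \tau}$.

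Next I would appeal to the complete-lattice structure of $\latticeofmaximalproperties$. Because $\inquiryfunction{}{}$ returns a meet of elements of $\latticeofmaximalproperties$, each term in both families lies in $\latticeofmaximalproperties$, and all joins below are taken inside $\latticeofmaximalproperties$ (which need not be a sublattice of $\wp(\maximalsemantics{})$). By definition $\observation{}{}{\maximalsemantics{}, \latticeofmaximalproperties, \cognizancefunction{}{}, \trace}$ is an upper bound of $\{\inquiryfunction{}{\maximalsemantics{}, \latticeofmaximalproperties, \tau} \mid \tau\in\cognizancefunction{}{\trace}\}$; combining this with the inequality just obtained, $\inquiryfunction{}{\maximalsemantics{}, \latticeofmaximalproperties, \tau\tau_{f}} \maximaltracepropertylessthan \observation{}{}{\maximalsemantics{}, \latticeofmaximalproperties, \cognizancefunction{}{}, \trace}$ for all $\tau\in\cognizancefunction{}{\trace}$ and $\tau_{f}\in\cognizancefunction{}{\trace'}$. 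Hence $\observation{}{}{\maximalsemantics{}, \latticeofmaximalproperties, \cognizancefunction{}{}, \trace}$ is an upper bound of $\{\inquiryfunction{}{\maximalsemantics{}, \latticeofmaximalproperties, \tau''} \mid \tau''\in\cognizancefunction{}{\trace\trace'}\}$, and since $\observation{}{}{\maximalsemantics{}, \latticeofmaximalproperties, \cognizancefunction{}{}, \trace\trace'}$ is the least such upper bound, $\observation{}{}{\maximalsemantics{}, \latticeofmaximalproperties, \cognizancefunction{}{}, \trace\trace'} \maximaltracepropertylessthan \observation{}{}{\maximalsemantics{}, \latticeofmaximalproperties, \cognizancefunction{}{}, \trace}$, which is the claim.

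I do not expect a genuine obstacle; the subtle points are purely bookkeeping: ensuring that all joins are computed in $\latticeofmaximalproperties$ rather than by ordinary union, that the arguments of $\inquiryfunction{}{}$ on the prolonged side really factor through a prefix lying in $\cognizancefunction{}{\trace}$ (this is exactly what the ``$\subseteq$'' half of \ref{assumption:cognizance1} provides), and handling the degenerate cases $\trace' = \emptytrace$ (use $\cognizancefunction{}{\emptytrace}=\{\emptytrace\}$, giving equality) and $\trace$ invalid (both sides collapse to $\maximaltracepropertybot$, consistently with \ref{assumption:cognizance2}).
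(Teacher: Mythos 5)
Your proof is correct, but it takes a genuinely different route from the paper's. The paper first establishes the stronger single-step \emph{equality} of Lemma~\ref{lemma:monitor-one-trace}, namely $\observation{}{}{\maximalsemantics{}, \latticeofmaximalproperties, \cognizancefunction{}{}, \trace}=\bigmaximaltracepropertyjoin_{\event\in\setofallevents}\observation{}{}{\maximalsemantics{}, \latticeofmaximalproperties, \cognizancefunction{}{}, \trace\event}$ for valid non-maximal $\trace$ (proved by a three-way case split on the length of the cognizance factor $\trace''$, using \ref{assumption:cognizance1} together with Corollaries~\ref{corollary:cognizance-property} and~\ref{corollary:observation-property}), then derives monotonicity by splitting on the validity of $\trace$ and $\trace'$ and chaining one-event extensions by transitivity of $\supseteq$. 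You instead argue directly from the definition: factor $\cognizancefunction{}{\trace\trace'}$ through $\cognizancefunction{}{\trace}$ via the ``$\subseteq$'' half of \ref{assumption:cognizance1}, push Lemma~\ref{lemma:inquiry-decreasing} through each member of the family, and close with the universal property of $\maximaltracepropertyjoin$ as the least upper bound inside the complete lattice $\latticeofmaximalproperties$. Your version is lighter on dependencies (no need for Lemma~\ref{lemma:monitor-one-trace} or Lemma~\ref{lemma:lattice-property2}), needs no validity case analysis, and covers an infinite prolongation $\trace'$ in one step, whereas the paper's event-by-event transitivity argument strictly speaking only reaches finite prolongations without an additional limit step. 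What the paper's detour buys is the equality of Lemma~\ref{lemma:monitor-one-trace} itself, which is reused in the proof of Lemma~\ref{lemma:responsibility-definition}; your argument does not yield that equality, only the inequality actually claimed here. The one point to state explicitly is that $\observation{}{}{\maximalsemantics{}, \latticeofmaximalproperties, \cognizancefunction{}{}, \trace}$ belongs to $\latticeofmaximalproperties$ (it is a $\maximaltracepropertyjoin$ of elements of $\latticeofmaximalproperties$, which is complete), so that the least-upper-bound step is legitimate — you do note this, so there is no gap.
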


\begin{example}[Access Control, Continued]\label{ex:access-control-observation}
For an omniscient observer, the observation function is identical to the inquire function in Example \ref{ex:access-control-inquiry}. 
If the cognizance of the second admin defined in Example \ref{ex:access-control-cognizance} is adopted, we get an observation function that works exactly the same as the dashed arrows in Fig.\ref{fig:analysis-framework}:
\vskip-.5em

\vskip-.5em
\begin{itemize}[leftmargin=*]
\item[--]$\observation{}{}{\maximalsemantics{}, \latticeofmaximalproperties, \cognizancefunction{}{}, {\footnotesize \texttt{apv=1} \gap \texttt{i1=0}}}$ 
$=$ $\inquiryfunction{}{}$ $(\maximalsemantics{}, \latticeofmaximalproperties,$ ${\footnotesize \texttt{apv=1} \gap}$  ${\footnotesize \texttt{i1=0}})$ 
$\maximaltracepropertyjoin$ $\inquiryfunction{}{}(\maximalsemantics{}, \latticeofmaximalproperties,$ ${\footnotesize \texttt{apv=1} \gap \texttt{i1=1}})$ 
$=$ $\texttt{RF}$ $\maximaltracepropertyjoin$ $\maximaltracepropertytop$ = $\maximaltracepropertytop$, i.e. even if the first input is already 0 in the trace, no property except $\maximaltracepropertytop$ can be guaranteed for the second admin.

\item[--]$\observation{}{}{\maximalsemantics{}, \latticeofmaximalproperties, \cognizancefunction{}{}, {\footnotesize \texttt{apv=1} \gap \texttt{i1=0} \gap \texttt{i1==0} \gap \texttt{apv=0} \gap \texttt{i2=1}}}$ $=$
$\inquiryfunction{}{}$ $(\maximalsemantics{},$ $\latticeofmaximalproperties,$ ${\footnotesize \texttt{apv=1} \gap}$ ${\footnotesize \texttt{i1=0} \gap \texttt{i1==0} \gap \texttt{apv=0} \gap \texttt{i2=1}})$ $\maximaltracepropertyjoin$
$\inquiryfunction{}{}$ $(\maximalsemantics{},$ $\latticeofmaximalproperties,$ ${\footnotesize \texttt{apv=1} \gap \texttt{i1=1} \gap {\scriptstyle\neg}\texttt{(i1==0)} \gap \texttt{i2=1}})$ $=$ $\texttt{RF}$ $\maximaltracepropertyjoin$ $\maximaltracepropertytop$ $=$ $\maximaltracepropertytop$, i.e. if the second input is 1, only $\maximaltracepropertytop$ can be guaranteed.

\item[--]$\observation{}{}{\maximalsemantics{}, \latticeofmaximalproperties, \cognizancefunction{}{}, {\footnotesize \texttt{apv=1} \gap \texttt{i1=0} \gap \texttt{i1==0} \gap \texttt{apv=0} \gap \texttt{i2=0}}}$ $=$
$\inquiryfunction{}{}$ $(\maximalsemantics{},$ $\latticeofmaximalproperties,$ ${\footnotesize \texttt{apv=1} \gap}$ ${\footnotesize \texttt{i1=0} \gap \texttt{i1==0} \gap \texttt{apv=0} \gap \texttt{i2=0}})$ $\maximaltracepropertyjoin$
$\inquiryfunction{}{}$ $(\maximalsemantics{},$ $\latticeofmaximalproperties,$ ${\footnotesize \texttt{apv=1} \gap \texttt{i1=1} \gap {\scriptstyle\neg}\texttt{(i1==0)} \gap \texttt{i2=0}})$ $=$ $\texttt{RF}\maximaltracepropertyjoin\texttt{RF}$ $=\texttt{RF}$, i.e. $\texttt{RF}$ is guaranteed only after the second input is entered 0. \qed
\end{itemize}
\end{example}
\vskip-1em

\vskip-1em
\subsection{Formal Definition of Responsibility\label{subsec:simple-definition}}

Using the three components of responsibility analysis 
introduced above, responsibility
is formally defined as the \textit{responsibility abstraction} $\alpharesponsibility{}{}{}$ in (\ref{eq:alpharesponsibility}). Specifically, the first parameter is the maximal trace semantics $\maximalsemantics{}$, the second parameter is the lattice $\latticeofmaximalproperties$ of system behaviors, the third parameter is the cognizance function of a given observer, the fourth parameter is the behavior $\behavior$ whose responsibility is of interest, and the last parameter is the analyzed traces $\tracepropertyT$. 

Consider every trace $\tracehistory\traceresponsible\tracefuture\in\tracepropertyT$ where H, R and F respectively stand for \textit{History}, \textit{Responsible part} and \textit{Future}. If $\emptyset\subsetneq\observation{}{}{\maximalsemantics{}, \latticeofmaximalproperties, \cognizancefunction{}{}, \tracehistory\traceresponsible}\subseteq\behavior\subsetneq\observation{}{}{\maximalsemantics{}, \latticeofmaximalproperties, \cognizancefunction{}{}, \tracehistory}$ holds, then $\tracehistory$ does not guarantee the behavior $\behavior$, while $\tracehistory\traceresponsible$ guarantees a behavior which is at least as strong as $\behavior$ and is not the invalidity property represented by $\maximaltracepropertybot=\emptyset$. Therefore, $\traceresponsible$ is said to be \textit{responsible} for ensuring behavior $\behavior$ in the trace $\tracehistory\traceresponsible\tracefuture$.

In particular, the length of $\traceresponsible$ is restricted to be 1 (i.e. $|\traceresponsible|=1$), such that the responsible entity $\traceresponsible$ must be a single event and the responsibility analysis could be as refined as possible. Otherwise, if we do not have such a restriction, then for every analyzed trace $\trace\in\tracepropertyT$ where the behavior $\behavior$ holds, the responsibility analysis may split the trace $\trace$ into three parts $\trace=\tracehistory\traceresponsible\tracefuture$ such that $\tracehistory=\emptytrace$, $\traceresponsible=\trace$ and $\tracefuture=\emptytrace$. In such a case, $\emptyset\subsetneq\observation{}{}{\maximalsemantics{}, \latticeofmaximalproperties, \cognizancefunction{}{}, \tracehistory\traceresponsible}\subseteq\behavior\subsetneq\observation{}{}{\maximalsemantics{}, \latticeofmaximalproperties, \cognizancefunction{}{}, \tracehistory}$ holds, and the whole trace $\trace$ would be found responsible for $\behavior$. This result is trivially correct, but too coarse to be useful in practice.

\begin{mdframed}[innertopmargin=0pt,innerbottommargin=6pt,frametitle={\colorbox{white}{\normalfont\textit{Responsibility Abstraction} $\alpharesponsibility{}{}{}$}},frametitleaboveskip=-7pt]\abovedisplayskip0pt

\begin{eqntabular}[fl]{@{}l@{}}
\hskip-.8em
\alpharesponsibility{}{}{} \in \wp(\setofalltraces)\mapsto\wp(\wp(\setofalltraces))\mapsto(\setofalltraces\mapsto\wp(\setofalltraces)) \nonumber
\\
\qquad\mapsto\wp(\setofalltraces)\mapsto\wp(\setofalltraces)\mapsto\wp(\setofalltraces\!\times\setofallevents\times\setofalltraces) \label{eq:alpharesponsibility}\stepcounter{equation}\renumber{(\ref{eq:alpharesponsibility})~~}
\\
\hskip-.8em
\alpharesponsibility{}{}{\maximalsemantics{}, \latticeofmaximalproperties, \cognizancefunction{}{}, \behavior, \tracepropertyT}\triangleq \nonumber
\\
let\;\alphapropertytransform{\semantics}{\traceproperty}=\{\trace\in\prefixesofset{\traceproperty}\mid\forall\trace'\in\semantics.\;\trace\isprefix\trace'\Rightarrow\trace'\in\traceproperty\}\;in\nonumber
\\
\quad let\;\inquiryfunction{}{\semantics, \latticeofproperties,\trace} =  \maximaltracepropertymeet\{\traceproperty\in\latticeofproperties\mid\trace\in\alphapropertytransform{\semantics}{\traceproperty}\}\;in\nonumber
\\
\quad\quad let\; \observation{}{}{\semantics, \latticeofproperties, \cognizancefunction{}{}, \trace} = \maximaltracepropertyjoin\{\inquiryfunction{}{\semantics, \latticeofproperties, \trace'}\mid\trace'\in\cognizancefunction{}{\trace}\}\;in\nonumber
\\
\quad\quad\quad\{\triple{\tracehistory}{\traceresponsible}{\tracefuture}\mid\tracehistory\traceresponsible\tracefuture\in\tracepropertyT\wedge|\traceresponsible|=1\wedge{}\nonumber
\\
\quad\quad\quad\quad
\emptyset\,{\subsetneq}\,\observation{}{}{\maximalsemantics{}\!, \latticeofmaximalproperties\!, \cognizancefunction{}{}, \tracehistory\traceresponsible}\subseteq\behavior\subsetneq\observation{}{}{\maximalsemantics{}\!, \latticeofmaximalproperties\!, \cognizancefunction{}{}, \tracehistory}\}\nonumber
\end{eqntabular}
\end{mdframed}

Since $\alpharesponsibility{}{}{\maximalsemantics{}, \latticeofmaximalproperties, \cognizancefunction{}{}, \behavior}$ preserves joins on analyzed traces $\tracepropertyT$, we have a Galois connection:
$\pair{\wp(\setofalltraces)}{\subseteq}\galois{\alpharesponsibility{}{}{\maximalsemantics{},\,\latticeofmaximalproperties,\,\cognizancefunction{}{},\,\behavior}}{\gamma_R(\maximalsemantics{},\,\latticeofmaximalproperties,\,\cognizancefunction{}{},\,\behavior)}\pair{\wp(\setofalltraces\!\times\setofallevents\times\setofalltraces)}{\subseteq}$.

\begin{lemma}\label{lemma:responsibility-definition}  If $\traceresponsible$ is said to be responsible for a behavior $\behavior$ in a valid trace $\tracehistory\traceresponsible\tracefuture$, then $\tracehistory\traceresponsible$ guarantees the occurrence of behavior $\behavior$, and there must exist another valid prefix trace $\tracehistory\traceresponsible'$ such that the behavior $\behavior$ is not guaranteed.
\end{lemma}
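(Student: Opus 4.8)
The plan is to just unfold the definition of the responsibility abstraction~(\ref{eq:alpharesponsibility}): saying that $\traceresponsible$ is responsible for $\behavior$ in the valid trace $\tracehistory\traceresponsible\tracefuture$ is, by definition, the conjunction $\tracehistory\traceresponsible\tracefuture\in\tracepropertyT$, $|\traceresponsible|=1$, and the chain $\emptyset\subsetneq\observation{}{}{\maximalsemantics{}, \latticeofmaximalproperties, \cognizancefunction{}{}, \tracehistory\traceresponsible}\subseteq\behavior\subsetneq\observation{}{}{\maximalsemantics{}, \latticeofmaximalproperties, \cognizancefunction{}{}, \tracehistory}$. The two conjuncts of the lemma read off, respectively, from the left and the right halves of this chain, so once the appropriate earlier results are invoked the proof is short.

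For the first conjunct, I would set $\traceproperty\triangleq\observation{}{}{\maximalsemantics{}, \latticeofmaximalproperties, \cognizancefunction{}{}, \tracehistory\traceresponsible}$ and first note $\traceproperty\in\latticeofmaximalproperties$: it is a $\maximaltracepropertyjoin$ of $\inquiryfunction{}{}$-values, each of which is a $\maximaltracepropertymeet$ of elements of the complete lattice $\latticeofmaximalproperties$, and $\latticeofmaximalproperties$ is closed under both operations. Corollary~\ref{corollary:observation-guarantee} then gives that $\tracehistory\traceresponsible$ guarantees $\traceproperty$, i.e.\ every valid maximal trace $\trace'$ with $\tracehistory\traceresponsible\isprefix\trace'$ lies in $\traceproperty$; since the chain gives $\traceproperty\subseteq\behavior$, every such $\trace'$ lies in $\behavior$, which is exactly ``$\tracehistory\traceresponsible$ guarantees the occurrence of $\behavior$''. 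I would also remark that $\tracehistory\traceresponsible$, being a prefix of the valid trace $\tracehistory\traceresponsible\tracefuture$, is itself valid ($\prefixsemantics{}$ is prefix-closed), so at least one such $\trace'$ exists and the guarantee is not vacuous.

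For the second conjunct I would take $\traceresponsible'\triangleq\emptytrace$, so that $\tracehistory\traceresponsible'=\tracehistory$. This $\tracehistory$ is a valid prefix trace (same prefix-closedness argument) and differs from $\tracehistory\traceresponsible$ because $|\traceresponsible|=1\neq0$. The right strict inclusion $\behavior\subsetneq\observation{}{}{\maximalsemantics{}, \latticeofmaximalproperties, \cognizancefunction{}{}, \tracehistory}$ forces $\observation{}{}{\maximalsemantics{}, \latticeofmaximalproperties, \cognizancefunction{}{}, \tracehistory}\not\subseteq\behavior$ (if it were $\subseteq\behavior$ it would, together with $\behavior\subseteq\observation{}{}{\maximalsemantics{}, \latticeofmaximalproperties, \cognizancefunction{}{}, \tracehistory}$, equal $\behavior$, contradicting strictness), so the strongest property the observer can guarantee from $\tracehistory\traceresponsible'$ is strictly weaker than $\behavior$; i.e.\ $\behavior$ is not guaranteed at $\tracehistory\traceresponsible'$. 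If one wants $\traceresponsible'$ to be a genuine alternative event rather than $\emptytrace$, one can add that $\tracehistory$ is not maximal since it has the valid one-step prolongation $\tracehistory\traceresponsible$, so such events exist; but I would not try to show that $\behavior$ is unguaranteed along every such alternative.

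The set-theoretic steps (prefixes of valid traces are valid; $X\subsetneq Y$ excludes $Y\subseteq X$; $\observation{}{}{}$- and $\inquiryfunction{}{}$-values belong to $\latticeofmaximalproperties$) are one-liners and not the crux. The one place that genuinely needs care, and which I would flag explicitly, is the meaning of ``$\behavior$ is not guaranteed'' in the second conjunct: the clean and provable reading is the observer-relative one --- ``$\tracehistory\traceresponsible'$ does not guarantee $\behavior$ to this observer'', i.e.\ its observation is strictly weaker than $\behavior$ --- which $\traceresponsible'=\emptytrace$ witnesses; the tempting but stronger reading ``some concrete valid maximal prolongation of $\tracehistory\traceresponsible'$ lies outside $\behavior$'' can fail for a non-omniscient observer (e.g.\ in Example~\ref{ex:access-control}, after the second admin inputs the approving value every concrete prolongation still yields $\texttt{RF}$, yet she/he cannot see this), so the argument must stay on the observation side.
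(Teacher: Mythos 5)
Your first conjunct is fine and matches the paper: it is exactly an application of Corollary~\ref{corollary:observation-guarantee} to the left half of the chain $\emptyset\subsetneq\observation{}{}{\maximalsemantics{}, \latticeofmaximalproperties, \cognizancefunction{}{}, \tracehistory\traceresponsible}\subseteq\behavior$.

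The second conjunct, however, has a genuine gap. By choosing $\traceresponsible'=\emptytrace$ you only prove that $\tracehistory$ itself does not guarantee $\behavior$, which is an immediate restatement of the right half of the chain and carries no new content. The lemma intends $\traceresponsible'$ to be an \emph{alternative event}: there must exist a single event $\traceresponsible'$ with $\tracehistory\traceresponsible'\in\prefixsemantics{}$ and $\observation{}{}{\maximalsemantics{}, \latticeofmaximalproperties, \cognizancefunction{}{}, \tracehistory\traceresponsible'}\not\subseteq\behavior$. This stronger reading is what the paper actually uses downstream --- the remark after the lemma that an event determined by its history ($\forall\tracehistory\traceresponsible,\tracehistory\traceresponsible'\in\prefixsemantics{}.\ \traceresponsible=\traceresponsible'$) cannot be responsible, and the counterfactual reading in the related-work section (``if $\traceresponsible$ is replaced by $\traceresponsible'$ then $\behavior$ is not guaranteed'') --- and neither follows from your version. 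You even flag the issue yourself (``such events exist; but I would not try to show that $\behavior$ is unguaranteed along every such alternative''), but existence of alternatives is not enough: one must show that at least one of them fails to guarantee $\behavior$. That is precisely the non-trivial step, and the paper obtains it by contradiction from Lemma~\ref{lemma:monitor-one-trace}, which states that for a valid non-maximal $\tracehistory$ the observation satisfies $\observation{}{}{\maximalsemantics{}, \latticeofmaximalproperties, \cognizancefunction{}{}, \tracehistory}=\bigmaximaltracepropertyjoin_{\tracehistory\event\in\prefixsemantics{}}\observation{}{}{\maximalsemantics{}, \latticeofmaximalproperties, \cognizancefunction{}{}, \tracehistory\event}$: if every valid one-event extension $\tracehistory\traceresponsible'$ had observation $\subseteq\behavior$, their join --- being a least upper bound of elements below $\behavior$ in the lattice --- would also be $\subseteq\behavior$, contradicting $\behavior\subsetneq\observation{}{}{\maximalsemantics{}, \latticeofmaximalproperties, \cognizancefunction{}{}, \tracehistory}$. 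That lemma in turn rests on assumptions \ref{assumption:cognizance1} and \ref{assumption:cognizance2} on the cognizance function and on the structural Lemma~\ref{lemma:lattice-property2} about prediction trace properties; none of this machinery can be bypassed the way your proposal does. Your closing observation that ``not guaranteed'' must be read observer-relatively (as $\observation{}{}{}\not\subseteq\behavior$ rather than as the existence of a concrete prolongation outside $\behavior$) is correct and consistent with the paper, but it does not repair the missing step.
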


Recall the three desired points (time, free choices and cognizance) for defining responsibility in section \ref{subsec:example}. It is obvious that 
$\alpharesponsibility{}{}{}$ has taken both the temporal ordering of events and the observer's cognizance into account. As for the free choices, it is easy to find from lemma \ref{lemma:responsibility-definition} that, if $\traceresponsible$ is determined by its history trace $\tracehistory$ and is not free to make choices (i.e. $\forall\tracehistory\traceresponsible,\tracehistory\traceresponsible'\in\prefixsemantics{}.$ $\traceresponsible=\traceresponsible'$), then $\traceresponsible$ cannot be responsible for any behavior in the trace $\tracehistory\traceresponsible\tracefuture$.

\subsection{Responsibility Analysis \label{subsec:responsibility-analysis}}

To sum up, the responsibility analysis typically consists of four steps: 
\textbf{I)} collect the system's trace semantics $\maximalsemantics{}$ (in Section \ref{sec:System-Semantics}); \textbf{II)} build the complete lattice of maximal trace properties of interest $\latticeofmaximalproperties$ (in Section \ref{sec:Lattice-of-Maximal-Trace-Properties}); 
\textbf{III)} derive an inquiry function $\inquiryfunction{}{}$ from $\latticeofmaximalproperties$, define a cognizance function $\cognizancefunction{}{}$ for each observer, and create the corresponding observation function $\observation{}{}{}$ (in Section \ref{sec:observation-function});
\textbf{IV)} specify the behavior $\behavior$ of interest and the analyzed traces $\tracepropertyT$, and apply the responsibility abstraction $\alpharesponsibility{}{}{\maximalsemantics{}, \latticeofmaximalproperties, \cognizancefunction{}{}, \behavior, \tracepropertyT}$ to get the analysis result (in Section \ref{subsec:simple-definition}). Hence, the responsibility analysis is essentially an abstract interpretation of the event trace semantics. 

In the above definition of responsibility, the semantics and lattice of system behaviors are concrete, and they are explicitly displayed in the access control example for the sake of clarity. However, they may be uncomputable in practice, and we do not require programmers to provide them in the implementation of responsibility analysis. Instead, they are provided in the abstract, using an abstract interpretation-based static analysis that is sketched
\iflong
in the appendix \ref{sec:abstract-responsibility}.
\else
in \cite{SAS-2019-extended-version}.
\fi

\begin{example}[Access Control, Continued]\label{ex:access-control-responsibility-formal}
Using the observation functions created in example \ref{ex:access-control-observation}, the abstraction $\alpharesponsibility{}{}{}$ can analyze the responsibility of a certain behavior $\behavior$ in the set $\tracepropertyT$ of traces. Suppose we want to analyze ``Read Failure'' in every possible execution, then $\behavior$ is $\texttt{RF}$, and $\tracepropertyT$ includes all valid maximal traces, i.e. $\tracepropertyT=\maximalsemantics{}$. Thus, $\alpharesponsibility{}{}{\maximalsemantics{}, \latticeofmaximalproperties, \cognizancefunction{}{}, \texttt{RF}, \maximalsemantics{}}$ computes the responsibility analysis result, which is essential the same as desired in Example \ref{ex:access-control-responsibility}.

Furthermore, the responsibility of ``granting write access'' can be analyzed by setting the behavior $\behavior$ as $\texttt{RW}$ instead, and we get the following result. To the cognizance of an omniscient observer, in every execution that both the first two inputs are 1, the third input (i.e. system setting of permission type) is responsible for $\texttt{RW}$. Meanwhile, to the cognizance of the second admin who is unaware of the first input, no one is found responsible for $\texttt{RW}$, because whether the write access fails or not is always uncertain, from the second admin's perspective.
\qed
\end{example}
\section{Examples of Responsibility Analysis\label{sec:application}}

\vskip-.5em
Responsibility is a broad concept, and our definition of responsibility based on the abstraction of event trace semantics is universally applicable in various scientific fields. We have examined every example supplied in actual cause \cite{Halpern-Pearl-2001,Halpern-Pearl-2005} and found that our definition of responsibility can handle them well, in which actions like ``drop a lit match in the forest'' or ``throw a rock at the bottle'' are treated as events in the trace. In the following, we will illustrate the responsibility analysis by two more examples: the ``negative balance" problem of a withdrawal transaction, and the information leakage problem.

\subsection{Responsibility Analysis of ``Negative Balance'' Problem}

\begin{example}[Negative Balance]\label{ex:negative-balance}
Consider the withdrawal transaction program in Fig. \ref{fig:negative-balance} in which the ${\footnotesize \texttt{query\_database()}}$ function gets the balance of a certain bank account before the transaction, and ${\footnotesize \texttt{input()}}$ specifies the withdrawal amount that is positive. When the withdrawal transaction completes, if the balance is negative, which entity in the program shall be responsible for it?
\qed
\end{example}

It is not hard to see that, the ``negative balance" problem can be transformed into an equivalent buffer overflow problem, where the memory of size ${\footnotesize \texttt{balance}}$ is allocated and the index at  ${\footnotesize \texttt{n-1}}$ is visited. Although this problem has been well studied, it suffices to demonstrate the advantages of responsibility analysis over dependency/causality analysis.

\vskip-2em
\begin{figure}[h]
\begin{minipage}{.55\textwidth}
\footnotesize\ttfamily
\vskip.4em
\begin{lstlisting}
1: balance = query_database();
2: n = input(); //Positive
3: balance -= n;
\end{lstlisting}
\vskip-1em
\caption{Withdrawal Transaction Program\label{fig:negative-balance}}
\end{minipage}
\begin{minipage}{.45\textwidth}
\centering
\begin{tikzpicture}[scale=.6]
\footnotesize
     \draw (1,0) node[name=BOT] {{$\maximaltracepropertybot\rlap{ = $\emptyset$}$}};
     \draw (0,1) node[name=O] {\color{black}{\texttt{NB}}};
     \draw (2,1) node[name=NO] {\color{black}{$\neg\texttt{NB}$}};
     \draw (1,2) node[name=TOP] {{$\maximaltracepropertytop\rlap{ = $ \maximalsemantics{}$}$}};
     \draw[semithick] (BOT) -- (O);
     \draw[semithick] (BOT) -- (NO);
     \draw[semithick] (O) -- (TOP);
     \draw[semithick] (NO) -- (TOP);
\end{tikzpicture}
\vskip-1em
\caption{Lattice of Behaviors\label{fig:lattice-negative-balance-maximal}}
\end{minipage}
\end{figure}
\vskip-1.5em

As discussed in section \ref{subsec:responsibility-analysis}, the responsibility analysis consists of four steps.
For the sake of simplicity, we consider only the omniscient observer here.

\begin{enumerate}[leftmargin=*,label=(\arabic*)]
\item Taking each assignment as an event, each maximal trace in this program is of length 3, and the program's maximal trace semantics consists of infinite number of such traces. E.g. {\footnotesize $\texttt{balance=0}$ $\gap$ $\texttt{n=5}$ $\gap$ $\texttt{balance=-5}$} denotes a maximal execution, in which the balance before the transaction is 0 and the withdrawal amount is 5 such that ``negative balance" occurs. 

\item Since ``negative balance" is the only behavior that we are interested here, a lattice $\latticeofmaximalproperties$ of maximal trace properties in Fig. \ref{fig:lattice-negative-balance-maximal} with four elements can be built, where $\texttt{NB}$ (Negative Balance) is the set of maximal traces where  the value of ${\footnotesize \texttt{balance}}$ is negative at the end, and $\neg\texttt{NB}$ is its complement. 

%
%

\item Using the omniscient observer's cognizance $\omniscientcognizance{}{}$, the observation function $\observation{}{}{}$ can be easily derived from the lattice $\latticeofmaximalproperties$, such that: 

\begin{itemize}[leftmargin=*]
\item[--] $\observation{}{}{\maximalsemantics{}, \latticeofmaximalproperties, \omniscientcognizance{}{}, \emptytrace}$ $=\maximaltracepropertytop$;

\item[--] $\observation{}{}{\maximalsemantics{}, \latticeofmaximalproperties, \omniscientcognizance{}{}, \footnotesize \texttt{balance=i}}$ $=\texttt{NB}$ where ${\footnotesize \texttt{i}}\leq 0$, i.e. if the balance before the transaction is negative or 0, the occurrence of ``negative balance" is guaranteed before the withdrawal amount ${\footnotesize \texttt{n}}$ is entered;

\item[--] $\observation{}{}{\maximalsemantics{}, \latticeofmaximalproperties, \omniscientcognizance{}{}, \footnotesize \texttt{balance=i}}$ $=\maximaltracepropertytop$ where ${\footnotesize \texttt{i}}> 0$, i.e. if the balance before the transaction is strictly greater than 0, whether ``negative balance" occurs or not has not been decided;

\item[--] $\observation{}{}{\maximalsemantics{}, \latticeofmaximalproperties, \omniscientcognizance{}{}, \footnotesize \texttt{balance=i} \gap \texttt{n=j}}=\texttt{NB}$ where ${\footnotesize \texttt{i}}> 0$ and {\footnotesize $\texttt{j}> \texttt{i}$}, i.e. ``negative balance" is guaranteed to occur immediately after ${\footnotesize \texttt{input()}}$ returns a value strictly greater than {\footnotesize $\texttt{balance}$}; 

\item[--] $\observation{}{}{\maximalsemantics{}, \latticeofmaximalproperties, \omniscientcognizance{}{}, \footnotesize \texttt{balance=i} \gap \texttt{n=j}}=\neg\texttt{NB}$ where ${\footnotesize \texttt{i}}> 0$ and {\footnotesize $\texttt{j}\leq \texttt{i}$}, i.e. ``negative balance" is guaranteed not to occur immediately after ${\footnotesize \texttt{input()}}$ returns a value less than or equal to {\footnotesize $\texttt{balance}$}. 
\end{itemize}

\item Suppose the behavior $\behavior=\texttt{NB}$ and the analyzed traces $\tracepropertyT$ $=\maximalsemantics{}$, the abstraction $\alpharesponsibility{}{}{\maximalsemantics{}, \latticeofmaximalproperties, \omniscientcognizance{}{}, \behavior, \tracepropertyT}$ gets the following result. If ${\footnotesize \texttt{query\_database()}}$ returns 0 or a negative value, no matter what value ${\footnotesize \texttt{input()}}$ returns, the function ${\footnotesize \texttt{query\_database()}}$ (i.e. event $\footnotesize \texttt{balance=i}$) is responsible for ``negative balance", and further responsibility analysis shall be applied on the previous transactions of the database. Otherwise, if ${\footnotesize \texttt{query\_database()}}$ returns a value strictly greater than 0, the function ${\footnotesize \texttt{input()}}$ (i.e. event $\footnotesize \texttt{n=j}$) takes the responsibility for ``negative balance", thus ``negative balance" can be prevented by configuring the permission granted to ${\footnotesize \texttt{input()}}$ such that its permitted return value must be less than or equal to the returned value of ${\footnotesize \texttt{query\_database()}}$.
\end{enumerate}



\subsection{Responsibility Analysis of Information Leakage}


Essentially, responsibility analysis of information leakage is the same as read failure or ``negative balance" problem, and the only significant distinction is on defining the behaviors of interest. 
Here we adopt the notion of non-interference \cite{DBLP:conf/sp/GoguenM82a} to represent the behavior of information leakage. 

In the program, the inputs and outputs are classified as either \textit{Low} (public, low sensitivity) or \textit{High} (private, high sensitivity). For a given trace $\trace$, if there is another trace $\trace'$ such that they have the same low inputs but different low outputs, then the trace $\trace$ is said to leak private information. If no trace in the program leaks private information (i.e. every two traces with the same low inputs have the same low outputs, regardless of the high inputs), the program is secure and has the non-interference property. Thus, for any program with maximal trace semantics $\maximalsemantics{}$, the behavior of ``Information Leakage'' $\texttt{IL}$ is represented as the set of leaky traces, i.e. $\texttt{IL}$ $=$ $\{\trace\in\maximalsemantics{}\mid$ $\exists\trace'\in\maximalsemantics{}.$ $low\_inputs(\trace)$ $=$ $low\_inputs(\trace')$ $\wedge$ $low\_outputs(\trace)$ $\neq$ $low\_outputs(\trace')\}$, where functions $low\_inputs$ and $low\_outputs$ collects low inputs and outputs along the trace, respectively. The behavior of ``No information Leakage'' $\texttt{NL}$ is the complement of $\texttt{IL}$, i.e. $\texttt{NL}$ $=$ $\{\trace\in\maximalsemantics{}\mid$ $\forall\trace'\in\maximalsemantics{}.$ $low\_inputs(\trace)$ $=$ $low\_inputs(\trace')$ $\Rightarrow$ $low\_outputs(\trace)$ $=$ $low\_outputs(\trace')\}$. Thus, the lattice $\latticeofmaximalproperties$ of maximal trace properties regarding information leakage can be built as in in Fig. \ref{fig:lattice-information-leakage-maximal}. Further, the corresponding observation function $\observation{}{}{}$ can be created, and the analysis result can be obtained by applying the responsibility abstraction. 

\vskip-2em
\begin{center}
\begin{figure}[h]
\vskip-1.6em
\hskip2.0em
\begin{subfigure}[t]{.5\textwidth}
\centering
\begin{tikzpicture}[scale=.7]
\footnotesize
     \draw (1,0) node[name=BOT] {{$\maximaltracepropertybot\rlap{ = $\emptyset$}$}};
     \draw (0,1) node[name=IL] {\color{black}{\texttt{IL}}};
     \draw (2,1) node[name=NL] {\color{black}{\texttt{NL}}};
     \draw (1,2) node[name=TOP] {{$\maximaltracepropertytop\rlap{ = $ \maximalsemantics{}$}$}};

     \draw[semithick] (BOT) -- (IL);
     \draw[semithick] (BOT) -- (NL);
     \draw[semithick] (IL) -- (TOP);
     \draw[semithick] (NL) -- (TOP);

\end{tikzpicture}
\end{subfigure}
\begin{subfigure}[b]{.2\textwidth}\footnotesize
\belowdisplayskip.8em
\begin{eqntabular*}{@{}L@{~}C@{~}L@{}}
\rlap{Behaviors:}&&\\
\texttt{IL}&:&Information Leakage\\
\texttt{NL}&:&No information Leakage
\end{eqntabular*}%
\mbox{}
\end{subfigure}%
 \caption{Lattice of Behaviors regarding Information Leakage}
\label{fig:lattice-information-leakage-maximal}
\end{figure}
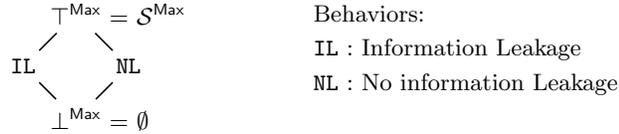
\vskip-2.5em
\end{center}

\vskip-1em
Notice that we are interested in analyzing only the insecure programs in which some traces leak private information while others do not, i.e. $\texttt{IL}\subsetneq\maximaltracepropertytop$. For the erroneous programs where every trace leaks private information, i.e. $\texttt{IL}=\maximaltracepropertytop$, we need to admit that our responsibility analysis cannot identify any entity responsible for the leakage, unless ``launching the program'' is treated as an event and it would be found responsible for leaking private information. 

\section{Related Work\label{sec:related-work}}

\subsubsection{Definition of Causality and Responsibility}

Hume \cite{Hume-1748} is the first one to specify causation by counterfactual dependence \cite{sep-causation-counterfactual}. The best known counterfactual theory of causation is proposed by Lewis \cite{Lewis-causation-1973}, which defines causation as a transitive closure of counterfactual dependencies. 
Halpern and Pearl \cite{Halpern-Pearl-2001,Halpern-Pearl-2005,Pearl-Causality-2012} defines actual causality based on SEM and extends counterfactual dependency to allow ``contingent dependency''. Chockler and Halpern \cite{DBLP:journals/jair/ChocklerH04} defines responsibility to have a quantitative measure of the relevance between causes and effects, and defines blame to consider the epistemic state of an agent. Their application of actual causality, responsibility and blame is mainly on artificial intelligence.

Our definition of responsibility also adopts the idea of counterfactual dependence in the sense that, suppose an event $\traceresponsible$ is said to be responsible for behavior $\behavior$ in the trace $\tracehistory\traceresponsible$, there must exist another event $\traceresponsible'$ such that, if $\traceresponsible$ is replaced by $\traceresponsible'$, then $\behavior$ is not guaranteed (by lemma \ref{lemma:responsibility-definition}). 
\vskip-1em

\vskip-1em
\subsubsection{Error Cause Localization}
Classic program analysis techniques, e.g. dependency analysis \cite{DBLP:conf/popl/AbadiBHR99,DBLP:journals/mscs/CheneyAA11,DBLP:conf/esop/UrbanM18} and program slicing \cite{DBLP:journals/tse/Weiser84,Weiser81,DBLP:journals/infsof/KorelR98,DBLP:conf/pldi/AgrawalH90}, are useful in detecting the code that may be relevant to errors, but fail to localize the cause of error. 

In recent years, there are many papers \cite{DBLP:conf/popl/BallNR03,DBLP:journals/sttt/GroceCKS06,DBLP:conf/pldi/JoseM11,DBLP:journals/entcs/GriesmayerSB07,DBLP:conf/kbse/RenierisR03,DBLP:conf/tacas/RaviS04,DBLP:conf/sigsoft/QiRLV09,DBLP:conf/tacas/JinRS02} on fault localization for counterexample traces, and most of them compare multiple traces produced by a model checker and build a heuristic metric to localize the point from which error traces separate from correct traces. 
Other related papers include error diagnosis by abductive/backward inference \cite{DBLP:conf/pldi/DilligDA12}, tracking down bugs by dynamic invariant detection \cite{DBLP:conf/icse/HangalL02}. Actual causality is applied to explain counterexamples from model checker \cite{DBLP:journals/fmsd/BeerBCOT12} and estimate the coverage of specification \cite{DBLP:journals/tocl/ChocklerHK08}.
Besides, there are researches on analyzing causes of specific security issues. E.g. King et al. \cite{DBLP:conf/sigsoft/KingJJS08} employ a blame dependency graph to explain the source of information flow violation and generate a program slice as the error report. 


Compared to the above techniques, this paper succeeds to formally define the cause or responsibility, and the proposed responsibility analysis, which does not require a counterexample from the model checker, is sound, scalable  and generic to cope with various problems.

\section{Conclusion and Future Work}


This paper formally defines responsibility as an abstraction of event trace semantics. Typically, the responsibility analysis consists of four steps: collect the trace semantics, build a lattice of behaviors of interest, create an observation function for each observer, and apply the responsibility abstraction on analyzed traces. Its effectiveness has been demonstrated by several examples.

In the future, we intent to: (1) formalize the abstract responsibility analysis that is sketched 
\iflong
in the appendix \ref{sec:abstract-responsibility},
\else
in \cite{SAS-2019-extended-version},
\fi
(2) build a lattice of responsibility abstractions to cope with possible alternative weaker or stronger definitions of responsibility,
(3) generalize the definition of cognizance function as an abstraction of system semantics, and 
(4) study the responsibility analysis of probabilistic programs.
\section*{Acknowledgment}
 This work was supported in part by NSF Grant 
  CNS-1446511.  Any opinions, findings, and
  conclusions or recommendations expressed in this material are those
  of the authors and do not necessarily reflect the views of the
  National Science Foundation. P. Cousot thanks Marco Pistoia for initial discussions on responsibility while visiting the Thomas J. Watson Research Center at Hawthorne in 2005.

\begin{CJK*}{UTF8}{gbsn}
\bibliography{responsibility}

\begin{thebibliography}{10}
\providecommand{\url}[1]{\texttt{#1}}
\providecommand{\urlprefix}{URL }
\providecommand{\doi}[1]{https://doi.org/#1}

\bibitem{DBLP:conf/popl/AbadiBHR99}
Abadi, M., Banerjee, A., Heintze, N., Riecke, J.G.: A core calculus of
  dependency. In: {POPL}. pp. 147--160. {ACM} (1999)

\bibitem{DBLP:conf/pldi/AgrawalH90}
Agrawal, H., Horgan, J.R.: Dynamic program slicing. In: {PLDI}. pp. 246--256.
  {ACM} (1990)

\bibitem{DBLP:journals/pacmpl/AguirreBG0S17}
Aguirre, A., Barthe, G., Gaboardi, M., Garg, D., Strub, P.: A relational logic
  for higher-order programs. {PACMPL}  \textbf{1}({ICFP}),  21:1--21:29 (2017)

\bibitem{DBLP:conf/popl/BallNR03}
Ball, T., Naik, M., Rajamani, S.K.: From symptom to cause: localizing errors in
  counterexample traces. In: {POPL}. pp. 97--105. {ACM} (2003)

\bibitem{BeebeeHM-Causality-2009}
Beebee, H., Hitchcock, C., Menzie, P.: The Oxford Handbook of Causation.
  {Oxford University Press} (2009)

\bibitem{DBLP:journals/fmsd/BeerBCOT12}
Beer, I., Ben{-}David, S., Chockler, H., Orni, A., Trefler, R.J.: Explaining
  counterexamples using causality. Formal Methods in System Design
  \textbf{40}(1),  20--40 (2012)

\bibitem{DBLP:conf/ijcai/ChenPB16}
Chen, B., Pearl, J., Bareinboim, E.: Incorporating knowledge into structural
  equation models using auxiliary variables. In: {IJCAI}. pp. 3577--3583.
  {IJCAI/AAAI} Press (2016)

\bibitem{DBLP:journals/mscs/CheneyAA11}
Cheney, J., Ahmed, A., Acar, U.A.: Provenance as dependency analysis.
  Mathematical Structures in Computer Science  \textbf{21}(6),  1301--1337
  (2011)

\bibitem{DBLP:journals/jair/ChocklerH04}
Chockler, H., Halpern, J.Y.: Responsibility and blame: {A} structural-model
  approach. J. Artif. Intell. Res.  \textbf{22},  93--115 (2004)

\bibitem{DBLP:journals/tocl/ChocklerHK08}
Chockler, H., Halpern, J.Y., Kupferman, O.: What causes a system to satisfy a
  specification? {ACM} Trans. Comput. Log.  \textbf{9}(3),  20:1--20:26 (2008)

\bibitem{Westland-SEM-2015}
Christopher, W.J.: Structural Equation Models, From Paths to Networks. {Studies
  in Systems, Decision and Control 22}, {Springer} (2015)

\bibitem{CousotC77}
Cousot, P., Cousot, R.: Abstract interpretation: {A} unified lattice model for
  static analysis of programs by construction or approximation of fixpoints.
  In: {POPL}. pp. 238--252. {ACM} (1977)

\bibitem{CousotC79}
Cousot, P., Cousot, R.: Systematic design of program analysis frameworks. In:
  {POPL}. pp. 269--282. {ACM} Press (1979)

\bibitem{DBLP:journals/jlp/CousotC92}
Cousot, P., Cousot, R.: Abstract interpretation and application to logic
  programs. J. Log. Program.  \textbf{13}(2{\&}3),  103--179 (1992)

\bibitem{DBLP:conf/pldi/DilligDA12}
Dillig, I., Dillig, T., Aiken, A.: Automated error diagnosis using abductive
  inference. In: {PLDI}. pp. 181--192. {ACM} (2012)

\bibitem{DBLP:conf/uss/FranklePSGW18}
Frankle, J., Park, S., Shaar, D., Goldwasser, S., Weitzner, D.J.: Practical
  accountability of secret processes. In: {USENIX} Security Symposium. pp.
  657--674. {USENIX} Association (2018)

\bibitem{DBLP:conf/sp/GoguenM82a}
Goguen, J.A., Meseguer, J.: Security policies and security models. In: {IEEE}
  Symposium on Security and Privacy. pp. 11--20. {IEEE} Computer Society (1982)

\bibitem{DBLP:conf/sas/GreitschusDP17}
Greitschus, M., Dietsch, D., Podelski, A.: Loop invariants from
  counterexamples. In: {SAS}. Lecture Notes in Computer Science, vol. 10422,
  pp. 128--147. Springer (2017)

\bibitem{DBLP:journals/entcs/GriesmayerSB07}
Griesmayer, A., Staber, S., Bloem, R.: Automated fault localization for {C}
  programs. Electr. Notes Theor. Comput. Sci.  \textbf{174}(4),  95--111 (2007)

\bibitem{DBLP:journals/sttt/GroceCKS06}
Groce, A., Chaki, S., Kroening, D., Strichman, O.: Error explanation with
  distance metrics. {STTT}  \textbf{8}(3),  229--247 (2006)

\bibitem{Halpern-Pearl-2001}
Halpern, J.Y., Pearl, J.: Causes and explanations: {A} structural-model
  approach: Part 1: Causes. In: {UAI}. pp. 194--202. Morgan Kaufmann (2001)

\bibitem{Halpern-Pearl-2005}
Halpern, J.Y., Pearl, J.: Causes and explanations: A structural-model approach.
  part i: Causes. The British journal for the philosophy of science
  \textbf{56}(4),  843--887 (2005)

\bibitem{DBLP:conf/icse/HangalL02}
Hangal, S., Lam, M.S.: Tracking down software bugs using automatic anomaly
  detection. In: {ICSE}. pp. 291--301. {ACM} (2002)

\bibitem{DBLP:conf/cav/HeizmannHP13}
Heizmann, M., Hoenicke, J., Podelski, A.: Software model checking for people
  who love automata. In: {CAV}. Lecture Notes in Computer Science, vol.~8044,
  pp. 36--52. Springer (2013)

\bibitem{Hume-1748}
Hume, D.: An enquiry concerning human understanding. London: A. Millar (1748),
  \url{http://www.davidhume.org/texts/ehu.html}

\bibitem{DBLP:conf/esorics/JagadeesanJPR09}
Jagadeesan, R., Jeffrey, A., Pitcher, C., Riely, J.: Towards a theory of
  accountability and audit. In: {ESORICS}. Lecture Notes in Computer Science,
  vol.~5789, pp. 152--167. Springer (2009)

\bibitem{DBLP:conf/tacas/JinRS02}
Jin, H., Ravi, K., Somenzi, F.: Fate and free will in error traces. In:
  {TACAS}. Lecture Notes in Computer Science, vol.~2280, pp. 445--459. Springer
  (2002)

\bibitem{DBLP:conf/pldi/JoseM11}
Jose, M., Majumdar, R.: Cause clue clauses: error localization using maximum
  satisfiability. In: {PLDI}. pp. 437--446. {ACM} (2011)

\bibitem{DBLP:conf/sigsoft/KingJJS08}
King, D., Jaeger, T., Jha, S., Seshia, S.A.: Effective blame for
  information-flow violations. In: {SIGSOFT} {FSE}. pp. 250--260. {ACM} (2008)

\bibitem{DBLP:journals/infsof/KorelR98}
Korel, B., Rilling, J.: Dynamic program slicing methods. Information {\&}
  Software Technology  \textbf{40}(11-12),  647--659 (1998)

\bibitem{Lewis-causation-1973}
Lewis, D.: Causation. The journal of philosophy  \textbf{70}(17),  556--567
  (1973)

\bibitem{sep-causation-counterfactual}
Menzies, P.: Counterfactual theories of causation. In: Zalta, E.N. (ed.) The
  Stanford Encyclopedia of Philosophy. Metaphysics Research Lab, Stanford
  University, winter 2017 edn. (2017)

\bibitem{Pearl-Causality-2012}
Pearl, J.: Causality: Models, Reasoning and Inference. Cambridge University
  Press, 2nd edn. (2013)

\bibitem{DBLP:conf/ecoop/PistoiaFKS05}
Pistoia, M., Flynn, R.J., Koved, L., Sreedhar, V.C.: Interprocedural analysis
  for privileged code placement and tainted variable detection. In: {ECOOP}.
  Lecture Notes in Computer Science, vol.~3586, pp. 362--386. Springer (2005)

\bibitem{DBLP:journals/jlp/Plotkin04a}
Plotkin, G.D.: A structural approach to operational semantics. J. Log. Algebr.
  Program.  \textbf{60-61},  17--139 (2004)

\bibitem{DBLP:conf/sigsoft/QiRLV09}
Qi, D., Roychoudhury, A., Liang, Z., Vaswani, K.: Darwin: an approach for
  debugging evolving programs. In: {ESEC/SIGSOFT} {FSE}. pp. 33--42. {ACM}
  (2009)

\bibitem{DBLP:conf/tacas/RaviS04}
Ravi, K., Somenzi, F.: Minimal assignments for bounded model checking. In:
  {TACAS}. Lecture Notes in Computer Science, vol.~2988, pp. 31--45. Springer
  (2004)

\bibitem{DBLP:conf/kbse/RenierisR03}
Renieris, M., Reiss, S.P.: Fault localization with nearest neighbor queries.
  In: {ASE}. pp. 30--39. {IEEE} Computer Society (2003)

\bibitem{DBLP:conf/sas/Rival05}
Rival, X.: Understanding the origin of alarms in astr{\'{e}}e. In: {SAS}.
  Lecture Notes in Computer Science, vol.~3672, pp. 303--319. Springer (2005)

\bibitem{Sliedregt-Criminal-Responsibility-2012}
van Sliedregt, E.: Individual Criminal Responsibility in International Law.
  {Oxford Monographs in International Law}, {Oxford University Press} (2012)

\bibitem{DBLP:conf/vmcai/UrbanM15}
Urban, C., Min{\'{e}}, A.: Proving guarantee and recurrence temporal properties
  by abstract interpretation. In: {VMCAI}. Lecture Notes in Computer Science,
  vol.~8931, pp. 190--208. Springer (2015)

\bibitem{DBLP:conf/esop/UrbanM18}
Urban, C., M{\"{u}}ller, P.: An abstract interpretation framework for input
  data usage. In: {ESOP}. Lecture Notes in Computer Science, vol. 10801, pp.
  683--710. Springer (2018)

\bibitem{DBLP:conf/sas/UrbanU018}
Urban, C., Ueltschi, S., M{\"{u}}ller, P.: Abstract interpretation of {CTL}
  properties. In: {SAS}. Lecture Notes in Computer Science, vol. 11002, pp.
  402--422. Springer (2018)

\bibitem{Weiser81}
Weiser, M.: Program slicing. In: {ICSE}. pp. 439--449. {IEEE} Computer Society
  (1981)

\bibitem{DBLP:journals/tse/Weiser84}
Weiser, M.: Program slicing. {IEEE} Trans. Software Eng.  \textbf{10}(4),
  352--357 (1984)

\bibitem{DBLP:journals/cacm/WeitznerABFHS08}
Weitzner, D.J., Abelson, H., Berners{-}Lee, T., Feigenbaum, J., Hendler, J.A.,
  Sussman, G.J.: Information accountability. Commun. {ACM}  \textbf{51}(6),
  82--87 (2008)

\end{thebibliography}
\end{CJK*}

\iflong
\appendix
\section{Appended Proofs}

\newcommand{\figureexampleabstractiontopallprevious}{%
\begin{figure*}[ht]
	\begin{center}
	\begin{tikzpicture}[scale=0.925,>=stealth',shorten >=1pt,auto,node distance=1.5cm,thick,initial text=]\footnotesize
	\tikzstyle{every state}=[fill=gray!15,draw=black,text=black,scale=0.9]
	
	\node[initial,state
	] at (0,2.5) (1) {${}^{\ell_1}$\llap{\raisebox{2.3em}[0pt][0pt]{$\top$}}\rlap{\raisebox{0.3em}[0pt][0pt]{\quad~~\texttt{\tiny a = input\_1();}}}};
	\node[state
	] at (3,1) (2b) {\rlap{\hskip-2em\raisebox{2.3em}[0pt][0pt]{$a,b\in\interval{-1}{1}$}}${}^{\ell_2^b}$};
	\node[state
	] at (3,4) (2a) {\rlap{\hskip-2em\raisebox{2.3em}[0pt][0pt]{$a,b\in\interval{-1}{1}$}}${}^{\ell_2^a}$};
	\node[state
	] at (6,0) (3bb) {\rlap{\hskip-2em\raisebox{2.3em}[0pt][0pt]{$a,b\in\interval{-1}{1}$}}${}^{\ell_3^{bb}}$};
	\node[state
	] at (6,2) (3ba) {\rlap{\hskip-2em\raisebox{2.3em}[0pt][0pt]{$a,b\in\interval{-1}{1}$}}${}^{\ell_3^{ba}}$};
	\node[state
	] at (6,4) (3a) {\rlap{\hskip-2em\raisebox{2.3em}[0pt][0pt]{$a,b\in\interval{-1}{1}$}}${}^{\ell_3^{a}}$};
	\node[state,accepting
	] at (9,0) (4bb) {\rlap{\hskip-2em\raisebox{2.3em}[0pt][0pt]{$a,b\in\interval{-1}{1}$}}${}^{\ell_4^{bb}}$\rlap{\qquad$\abstracttraceproperty_{\!\neg b}$}};
	\node[state,accepting
	] at (9,2) (4ba) {\rlap{\hskip-2em\raisebox{2.3em}[0pt][0pt]{$a,b\in\interval{-1}{1}$}}{${}^{\ell_4^{ba}}$\rlap{\qquad$\abstracttraceproperty_{b}$}}};
	\node[state,accepting
	] at (9,4) (4a) {\rlap{\hskip-2em\raisebox{2.3em}[0pt][0pt]{$a,b\in\interval{-1}{1}$}}${}^{\ell_4^{a}}$\rlap{\qquad$\abstracttraceproperty_{b}$}};

	\path (1)  edge[->] node {{$a=0$}} (2a)
          (2a)  edge[->] node {{}} (3a)
          (3a)  edge[->] node {{}} (4a)
          (1)  edge[->] node {\llap{\raisebox{-1.8em}[0pt][0pt]{$a\neq0$}~}} (2b)
          (2a)  edge[->] node {\raisebox{0.em}{\hskip0em\texttt{\tiny b = input\_2();}}} (3a)
          (3a)  edge[->] node {\raisebox{0em}{\hskip0em\texttt{\tiny c = (a * b);}}} (4a)
          (2b)  edge[->] node {{}} (3ba)
          (3ba) edge[->] node {\raisebox{0em}{\hskip0em\texttt{\tiny c = (a * b);}}} (4ba)
          (2b)  edge[->] node {\raisebox{1em}{\hskip-3em\texttt{\tiny b = input\_2();}}} (3bb)
          (3bb) edge[->] node {\raisebox{0em}{\hskip0em\texttt{\tiny c = (a * b);}}} (4bb)
     ;
     \node at (4.6,1.9) {{$b=0$}};
     \node at (4.6,0.1) {{$b\neq 0$}};

	\node[rectangle,draw,thin,inner sep=2pt] at (6,-0.8) (m1) {\scriptsize$\abstractobservation{}{}{{}^{\ell^{bb}_3}}=\traceproperty_{\!\neg b}$};
	\node[rectangle,draw,thin,inner sep=2pt] at (9,-0.8) (m2) {\scriptsize$\abstractobservation{}{}{{}^{\ell^{bb}_4}}=\traceproperty_{\!\neg b}$};
	\node[rectangle,draw,thin,inner sep=2pt] at (6.4,1.2) (m3) {\scriptsize$\abstractobservation{}{}{{}^{\ell^{ba}_3}}=\traceproperty_{b}$};
	\node[rectangle,draw,thin,inner sep=2pt] at (9,1.2) (m4) {\scriptsize$\abstractobservation{}{}{{}^{\ell^{ba}_4}}=\traceproperty_{b}$};
	\node[rectangle,draw,thin,inner sep=2pt] at (6,3.2) (m5) {\scriptsize$\abstractobservation{}{}{{}^{\ell^{a}_3}}=\traceproperty_{b}$};
	\node[rectangle,draw,thin,inner sep=2pt] at (9,3.2) (m6) {\scriptsize$\abstractobservation{}{}{{}^{\ell^{a}_4}}=\traceproperty_{b}$};
	\node[rectangle,draw,thin,inner sep=2pt] at (3,0.2) (m7) {\scriptsize$\abstractobservation{}{}{{}^{\ell^{b}_2}}=\maximaltracepropertytop$};
	\node[rectangle,draw,thin,inner sep=2pt] at (3,3.2) (m8) {\scriptsize$\abstractobservation{}{}{{}^{\ell^{a}_2}}=\maximaltracepropertytop$};
	\node[rectangle,draw,thin,inner sep=2pt] at (-0.3,1.7) (m8) {\scriptsize$\abstractobservation{}{}{{}^{\ell_2}}=\maximaltracepropertytop$};
\end{tikzpicture}%
\end{center}
\caption{Floyd-Hoare automaton for Example \ref{ex:abstraction-top-all-previous}\label{Floyd-Hoare-automaton-for-ex:abstraction-top-all-previous}}
\end{figure*}}

\subsection{Proof of Galois Isomorphism (\ref{eq:alphapropertytransform})\label{subsec:proof-eq-alphapropertytransform}}
\begin{eqntabular}{c}
\pair{\wp(\maximalsemantics{})}{\subseteq}\GaloiS{\alphapropertytransform{\maximalsemantics{}}{}}{\gammapropertytransform{\maximalsemantics{}}{}}\pair{\alphapropertysettransform{\maximalsemantics{}}{\wp(\maximalsemantics{})}}{\subseteq}\nonumber
\end{eqntabular}
\begin{proof}
First, we prove that $\alphapropertytransform{\maximalsemantics{}}{}$ and $\gammapropertytransform{\maximalsemantics{}}{}$ are increasing.
\begin{calculus}
\hyphen{6}\formula{\traceproperty\subseteq\traceproperty'}\\
$\Rightarrow$
\formulaexplanation{(\trace'\in\traceproperty)\Rightarrow(\trace'\in\traceproperty')}{def.\ $\subseteq$}\\
$\Rightarrow$
\formulaexplanation{(\lnot(\trace\isprefix\trace')\vee(\trace'\in\traceproperty))\Rightarrow(\lnot(\trace\isprefix\trace')\vee(\trace'\in\traceproperty'))}{def.\ $\vee$}\\
$\Rightarrow$
\formulaexplanation{\{\trace\mid\forall\trace'\in\maximalsemantics{}.\; \lnot(\trace\isprefix\trace')\vee(\trace'\in\traceproperty)\}\subseteq\{\trace\mid\forall\trace'\in\maximalsemantics{}.\; \lnot(\trace\isprefix\trace')\vee(\trace'\in\traceproperty')\}}{def.\ $\subseteq$}\\
$\Rightarrow$
\formula{\{\trace\mid\forall\trace'\in\maximalsemantics{}.\; \trace\isprefix\trace' \Rightarrow \trace'\in\traceproperty\}\subseteq\{\trace\mid\forall\trace'\in\maximalsemantics{}.\; \trace\isprefix\trace' \Rightarrow \trace'\in\traceproperty'\}}\\
\rightexplanation{def.\ $\Rightarrow$}\\
$\Rightarrow$
\formula{(\prefixesofset{\traceproperty}\cap\{\trace\mid\forall\trace'\in\maximalsemantics{}.\; \trace\isprefix\trace' \Rightarrow \trace'\in\traceproperty\})\\\quad{}\subseteq(\prefixesofset{\traceproperty'}\cap\{\trace\mid\forall\trace'\in\maximalsemantics{}.\; \trace\isprefix\trace' \Rightarrow \trace'\in\traceproperty'\})}\\
\rightexplanation{def.\ $\cap$ and $\prefixesofset{}$ is increasing}\\
$\Rightarrow$
\formulaexplanation{\{\trace\in\prefixesofset{\traceproperty}\mid\forall\trace'\in\maximalsemantics{}.\; \trace\isprefix\trace' \Rightarrow \trace'\in\traceproperty\}\\\quad{}\subseteq\{\trace\in\prefixesofset{\traceproperty'}\mid\forall\trace'\in\maximalsemantics{}.\; \trace\isprefix\trace' \Rightarrow \trace'\in\traceproperty'\}}{def. $\cap$}\\
$\Rightarrow$
\formulaexplanation{\alphapropertytransform{\maximalsemantics{}}{\traceproperty}\subseteq\alphapropertytransform{\maximalsemantics{}}{\traceproperty'}}{def. $\alphapropertytransform{\maximalsemantics{}}{}$}
\end{calculus}

\begin{calculus}
\hyphen{6}\formula{\mathcal{Q}\subseteq\mathcal{Q'}}\\
$\Rightarrow$
\formulaexplanation{(\mathcal{Q}\cap\maximalsemantics{})\subseteq(\mathcal{Q'}\cap\maximalsemantics{})}{def.\ $\cap$}\\
$\Rightarrow$
\formulaexplanation{\gammapropertytransform{\maximalsemantics{}}{\mathcal{Q}}\subseteq\gammapropertytransform{\maximalsemantics{}}{\mathcal{Q'}}}{def. $\gammapropertytransform{\maximalsemantics{}}{}$}
\end{calculus}

Then, we prove that $\gammapropertytransform{\maximalsemantics{}}{}\composition\alphapropertytransform{\maximalsemantics{}}{}$ and $\alphapropertytransform{\maximalsemantics{}}{}\composition\gammapropertytransform{\maximalsemantics{}}{}$ are identity functions.
\begin{calculus}
\hyphen{6}\formula{\gammapropertytransform{\maximalsemantics{}}{}\composition\alphapropertytransform{\maximalsemantics{}}{\traceproperty}}\\
=
\formula{\gammapropertytransform{\maximalsemantics{}}{}(\{\trace\in\prefixesofset{\traceproperty}\mid\forall\trace'\in\maximalsemantics{}.\; \trace\isprefix\trace' \Rightarrow \trace'\in\traceproperty\})}\\
\rightexplanation{def.\ $\alphapropertytransform{\maximalsemantics{}}{}$}\\
=
\formula{\gammapropertytransform{\maximalsemantics{}}{}(\traceproperty\cup\{\trace\in\prefixesofset{\traceproperty}\backslash\maximalsemantics{}\mid\forall\trace'\in\maximalsemantics{}.\; \trace\isprefix\trace' \Rightarrow \trace'\in\traceproperty\})}\\\rightexplanation{$\traceproperty=\prefixesofset{\traceproperty}\cap\maximalsemantics{}$ since $\traceproperty\in\wp(\maximalsemantics{})$}\\
=
\formula{\maximalsemantics{}\cap(\traceproperty\cup\{\trace\in\prefixesofset{\traceproperty}\backslash\maximalsemantics{}\mid\forall\trace'\in\maximalsemantics{}.\; \trace\isprefix\trace' \Rightarrow \trace'\in\traceproperty\})}\\
\rightexplanation{def. $\gammapropertytransform{\maximalsemantics{}}{}$}\\
=
\formulaexplanation{\maximalsemantics{}\cap\traceproperty}{$\maximalsemantics{}\cap(\prefixesofset{\traceproperty}\backslash\maximalsemantics{})=\emptyset$}\\
=
\formulaexplanation{\traceproperty}{$\traceproperty\in\wp(\maximalsemantics{})$}
\end{calculus}

\begin{calculus}
\hyphen{6}\formula{\alphapropertytransform{\maximalsemantics{}}{}\composition\gammapropertytransform{\maximalsemantics{}}{\mathcal{Q}}}\\
=
\formula{\alphapropertytransform{\maximalsemantics{}}{}\composition\gammapropertytransform{\maximalsemantics{}}{\alphapropertytransform{\maximalsemantics{}}{\traceproperty'}}}\\
\rightexplanation{$\mathcal{Q}\in\alphapropertysettransform{\maximalsemantics{}}{\wp(\maximalsemantics{})}$, thus $\exists\traceproperty'.$ $\mathcal{Q}=\alphapropertytransform{\maximalsemantics{}}{\traceproperty'}$}\\
=
\formulaexplanation{\alphapropertytransform{\maximalsemantics{}}{\traceproperty'}}{$\gammapropertytransform{\maximalsemantics{}}{}\composition\alphapropertytransform{\maximalsemantics{}}{\traceproperty'}=\traceproperty'$}\\
=
\formulaexplanation{\mathcal{Q}}{by the assumption $\mathcal{Q}=\alphapropertytransform{\maximalsemantics{}}{\traceproperty'}$}
\end{calculus}

\hyphen{6}\quad By the above four properties, $\alphapropertytransform{\maximalsemantics{}}{}$ and $\gammapropertytransform{\maximalsemantics{}}{}$ form a Galois isomorphism.
\end{proof}

\subsection{Proofs for Lemma \ref{lemma:inquiry-decreasing}\label{subsec:proof-lemma-inquiry-decreasing}}

\begin{lemma}\label{lemma:lattice-property1} 
Given the semantics $\maximalsemantics{}$ and lattice $\latticeofmaximalproperties$ of system behaviors, for any maximal trace property $\traceproperty\in\latticeofmaximalproperties$, if a trace $\tau$ belongs to the prediction trace property that corresponds to $\traceproperty$, then every valid trace greater than $\tau$ belongs to the prediction trace property too. 
I.e. $\forall\traceproperty\in\latticeofmaximalproperties.\;\forall\tau,\tau'\in\prefixsemantics{}.\;(\tau\in\alphapropertytransform{\maximalsemantics{}}{\traceproperty}\wedge\tau\isprefix\tau')\Rightarrow\tau'\in\alphapropertytransform{\maximalsemantics{}}{\traceproperty}$.
\end{lemma}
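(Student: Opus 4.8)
The plan is to unfold the definition of the prediction abstraction $\alphapropertytransform{\maximalsemantics{}}{}$ and reduce the claim to two membership obligations, each of which follows from transitivity of the prefix order together with validity of $\tau'$. Recall that $\alphapropertytransform{\maximalsemantics{}}{\traceproperty}=\{\trace\in\prefixesofset{\traceproperty}\mid\forall\trace''\in\maximalsemantics{}.\;\trace\isprefix\trace''\Rightarrow\trace''\in\traceproperty\}$, so to show $\tau'\in\alphapropertytransform{\maximalsemantics{}}{\traceproperty}$ I must establish (i) $\tau'\in\prefixesofset{\traceproperty}$ and (ii) for every $\trace''\in\maximalsemantics{}$ with $\tau'\isprefix\trace''$, we have $\trace''\in\traceproperty$.

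First I would dispatch obligation (ii): given $\trace''\in\maximalsemantics{}$ with $\tau'\isprefix\trace''$, the hypothesis $\tau\isprefix\tau'$ and transitivity of $\isprefix$ (immediate from the definition (\ref{eq:alphapropertytransform}) of the prefix ordering) give $\tau\isprefix\trace''$; since $\tau\in\alphapropertytransform{\maximalsemantics{}}{\traceproperty}$ and $\trace''\in\maximalsemantics{}$, the defining universal quantifier applied to $\trace''$ yields $\trace''\in\traceproperty$.

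For obligation (i) I would use the assumption $\tau'\in\prefixsemantics{}$, i.e. $\tau'\in\prefixesofset{\maximalsemantics{}}$, to extract some $\trace''\in\maximalsemantics{}$ with $\tau'\isprefix\trace''$. Applying the reasoning of obligation (ii) to this particular $\trace''$ gives $\trace''\in\traceproperty$, and then $\tau'\isprefix\trace''$ with $\trace''\in\traceproperty$ witnesses $\tau'\in\prefixesofset{\traceproperty}$ by the definition of $\prefixesofset{}$. Combining (i) and (ii) yields $\tau'\in\alphapropertytransform{\maximalsemantics{}}{\traceproperty}$, which is what was to be shown.

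There is no deep obstacle here; the only point requiring a moment of care is obligation (i), where one must remember that validity of $\tau'$ is exactly what licenses producing a maximal extension $\trace''$ of $\tau'$, and that this extension necessarily lies in $\traceproperty$ precisely because $\tau$ already guarantees $\traceproperty$ (cf. Lemma \ref{lemma:prediction-property}). Everything else is a routine unfolding of definitions and an appeal to the transitivity of $\isprefix$, so the write-up should be short.
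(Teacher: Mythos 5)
Your proof is correct and is, in substance, the paper's own argument run in the contrapositive direction: the paper assumes $\tau'\not\in\alphapropertytransform{\maximalsemantics{}}{\traceproperty}$ and splits on the two ways membership can fail, which are exactly your obligations (i) and (ii), and both proofs rest on the same two facts --- transitivity of $\isprefix$ and the existence, guaranteed by $\tau'\in\prefixsemantics{}$, of a maximal extension of $\tau'$. The only blemish is your parenthetical citation of (\ref{eq:alphapropertytransform}) for the prefix ordering, which actually labels the Galois isomorphism rather than the (unnumbered) definition of $\isprefix$; this is cosmetic and does not affect the argument.
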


\begin{proof} Proof by contradiction. Here we assume $\exists\traceproperty\in\latticeofmaximalproperties.$ $\exists\tau,\tau'\in\prefixsemantics{}.$ $\tau\in\alphapropertytransform{\maximalsemantics{}}{\traceproperty}\wedge\tau\isprefix\tau'\wedge\tau'\not\in\alphapropertytransform{\maximalsemantics{}}{\traceproperty}$. By the definition of prediction abstraction, $\alphapropertytransform{\maximalsemantics{}}{\traceproperty}$ $=\{\trace\in\prefixesofset{\traceproperty}$ $\mid$ $\forall\trace'\in\maximalsemantics{}.$ $\trace\isprefix\trace' \Rightarrow \trace'\in\traceproperty\}$. There are two possibilities for $\tau'\not\in\alphapropertytransform{\maximalsemantics{}}{\traceproperty}$: 
1) $\tau'\not\in\prefixesofset{\traceproperty}$, hence every maximal trace greater than $\tau'$ does not belong to $\traceproperty$;
or 2) $\exists\trace'\in\maximalsemantics{}.\; \tau'\isprefix\trace' \wedge \trace'\not\in\traceproperty$. Both cases imply that there is a maximal trace $\trace'\in\maximalsemantics{}$ such that $\tau\isprefix\tau'\isprefix\trace' \wedge \trace'\not\in\traceproperty$, which contradicts with the assumption of $\tau\in\alphapropertytransform{\maximalsemantics{}}{\traceproperty}$.
\end{proof}

\medskip
\noindent\textbf{Lemma \ref{lemma:inquiry-decreasing}.} Given the semantics $\maximalsemantics{}$ and lattice $\latticeofmaximalproperties$ of system behaviors, the corresponding inquiry function $\inquiryfunction{}{\maximalsemantics{}, \latticeofmaximalproperties}$ is decreasing on the inquired trace $\trace$: the greater $\trace$ is, the stronger property it can guarantee. I.e. $\forall\trace,\trace'\in\setofalltraces.\;\trace\isprefix\trace'\Rightarrow\inquiryfunction{}{\maximalsemantics{}, \latticeofmaximalproperties, \trace}\supseteq\inquiryfunction{}{\maximalsemantics{}, \latticeofmaximalproperties, \trace'}$.

\begin{proof}
First, if $\trace$ is invalid (i.e. $\trace\not\in\prefixsemantics{}$), then every trace $\trace'$ that is greater than $\trace$ must also be invalid (i.e. $\trace'\not\in\prefixsemantics{}$), hence $\inquiryfunction{}{\maximalsemantics{}, \latticeofmaximalproperties, \trace}=\inquiryfunction{}{\maximalsemantics{}, \latticeofmaximalproperties, \trace'}$ $=\maximaltracepropertybot$.

Second, if $\trace'\not\in\prefixsemantics{}$, then $\inquiryfunction{}{\maximalsemantics{}, \latticeofmaximalproperties, \trace'}=\maximaltracepropertybot$, hence $\inquiryfunction{}{\maximalsemantics{}, \latticeofmaximalproperties, \trace}\supseteq\maximaltracepropertybot=\inquiryfunction{}{\maximalsemantics{}, \latticeofmaximalproperties, \trace'}$.

Last, if $\trace,\trace'\in\prefixsemantics{}$, then

\begin{calculus}
\formula{\trace\isprefix\trace'}\\
$\Rightarrow$
\formulaexplanation{\forall\traceproperty\in\latticeofmaximalproperties\!.\;\trace\in\alphapropertytransform{\maximalsemantics{}}{\traceproperty}\Rightarrow\trace'\in\alphapropertytransform{\maximalsemantics{}}{\traceproperty}}{lemma. \ref{lemma:lattice-property1}}\\
$\Rightarrow$
\formula{\{\traceproperty\in\latticeofmaximalproperties\mid\trace\in\alphapropertytransform{\maximalsemantics{}}{\traceproperty}\}\subseteq\{\traceproperty\in\latticeofmaximalproperties\mid\trace'\in\alphapropertytransform{\maximalsemantics{}}{\traceproperty}\}}\\
\rightexplanation{def. $\Rightarrow$}\\
$\Rightarrow$
\formula{\maximaltracepropertymeet\{\traceproperty\in\latticeofmaximalproperties\mid\trace\in\alphapropertytransform{\maximalsemantics{}}{\traceproperty}\}\supseteq \maximaltracepropertymeet\{\traceproperty\in\latticeofmaximalproperties\mid\trace'\in\alphapropertytransform{\maximalsemantics{}}{\traceproperty}\}}\\
\rightexplanation{def. $\maximaltracepropertymeet$}\\
$\Rightarrow$
\lastformulaexplanation{\inquiryfunction{}{\maximalsemantics{}, \latticeofmaximalproperties, \trace}\supseteq\inquiryfunction{}{\maximalsemantics{}, \latticeofmaximalproperties, \trace'}}{def. $\inquiryfunction{}{}$}{\quad}
\end{calculus}
\vskip-1.8em
\end{proof}

\subsection{Proof of Corollary \ref{corollary:observation-guarantee} \label{subsec:proof-corollary-observation-guarantee}}

\noindent\textbf{Corollary \ref{corollary:observation-guarantee}.} Given the semantics $\maximalsemantics{}$ and lattice $\latticeofmaximalproperties$ of system behaviors, for any observer with cognizance $\cognizancefunction{}{}$, if the observation function maps a trace $\trace$ to a maximal trace property $\traceproperty\in\latticeofmaximalproperties$, then $\trace$ guarantees the satisfaction of property $\traceproperty$ (i.e. every valid maximal trace that is greater than or equal to $\trace$ is guaranteed to have property $\traceproperty$).

\begin{proof}
Suppose $\inquiryfunction{}{\maximalsemantics{}, \latticeofmaximalproperties, \trace}$ $=\traceproperty'$. By the corollary \ref{corollary:inquiry-guarantee}, $\trace$ guarantees the satisfaction of property $\traceproperty'$, i.e. every valid maximal trace that is greater than or equal to $\trace$ belongs to $\traceproperty'$.

In addition, by the definition of observation function in (\ref{eq:observation}), we know $\traceproperty$ $=$ $\observation{}{}{\maximalsemantics{}, \latticeofmaximalproperties, \cognizancefunction{}{}, \trace}$ $=\maximaltracepropertyjoin\{\inquiryfunction{}{\maximalsemantics{}, \latticeofmaximalproperties, \trace'}\mid\trace'\in\cognizancefunction{}{\trace}\}$ $\supseteq$ $\inquiryfunction{}{\maximalsemantics{}, \latticeofmaximalproperties, \trace}$ $=$ $\traceproperty'$, since the cognizance is extensive (i.e. $\trace\in\cognizancefunction{}{\trace}$). Therefore, every valid maximal trace that is greater than or equal to $\trace$ belongs to $\traceproperty$. That is to say, $\trace$ guarantees the satisfaction of property $\traceproperty$.
\end{proof}

\subsection{Proof of Lemma \ref{lemma:observation-decreasing} \label{subsec:proof-lemma-observation-decreasing}}
\textbf{Lemma \ref{lemma:observation-decreasing}.} Given the semantics $\maximalsemantics{}$, lattice $\latticeofmaximalproperties$ of system behaviors and cognizance function $\cognizancefunction{}{}$, the observation function $\observation{}{}{\maximalsemantics{}, \latticeofmaximalproperties, \cognizancefunction{}{}}$ is decreasing: the longer the observed trace $\trace$ is, the stronger property it can observe. 
I.e. $\forall\trace,\trace'\in\setofalltraces.\;\trace\isprefix\trace'\Rightarrow\observation{}{}{\maximalsemantics{}, \latticeofmaximalproperties, \cognizancefunction{}{}, \trace}\supseteq\observation{}{}{\maximalsemantics{}, \latticeofmaximalproperties, \cognizancefunction{}{}, \trace'}$.

\begin{proof}
First, if $\trace$ is invalid (i.e. $\trace\not\in\prefixsemantics{}$), then every trace $\trace'$ that is greater than $\trace$ must also be invalid (i.e. $\trace'\not\in\prefixsemantics{}$), hence it is easy to find that $\observation{}{}{\maximalsemantics{}, \latticeofmaximalproperties, \cognizancefunction{}{}, \trace}=\observation{}{}{\maximalsemantics{}, \latticeofmaximalproperties, \cognizancefunction{}{}, \trace}=\maximaltracepropertybot$.

Second, if $\trace'\not\in\prefixsemantics{}$, then we have $\observation{}{}{\maximalsemantics{}, \latticeofmaximalproperties, \cognizancefunction{}{}, \trace'}=\maximaltracepropertybot$. Hence, it is trivial to find $\observation{}{}{\maximalsemantics{}, \latticeofmaximalproperties, \cognizancefunction{}{}, \trace}\supseteq\maximaltracepropertybot=\observation{}{}{\maximalsemantics{}, \latticeofmaximalproperties, \cognizancefunction{}{}, \trace'}$.

Last, if $\trace,\trace'\in\prefixsemantics{}$, then $\trace$ must be a valid non-maximal trace, i.e. $\trace\in\prefixsemantics{}\backslash\maximalsemantics{}$. From Lemma \ref{lemma:monitor-one-trace}, it is easy to see $\forall\event\in\setofallevents.$ $\observation{}{}{\maximalsemantics{}, \latticeofmaximalproperties, \cognizancefunction{}{}, \trace}\supseteq\observation{}{}{\maximalsemantics{}, \latticeofmaximalproperties, \cognizancefunction{}{}, \trace\event}$. Since $\trace'$ is a prolongation of $\trace$ with events, by the transitivity of $\supseteq$, we can prove that $\observation{}{}{\maximalsemantics{}, \latticeofmaximalproperties, \cognizancefunction{}{}, \trace}\supseteq\observation{}{}{\maximalsemantics{}, \latticeofmaximalproperties, \cognizancefunction{}{}, \trace'}$.
\end{proof}

\subsection{Proofs for Lemma \ref{lemma:responsibility-definition}\label{subsec:proof-lemma-responsibility-definition}}

\begin{lemma}\label{lemma:lattice-property2} Given the semantics $\maximalsemantics{}$and the lattice $\latticeofmaximalproperties$ of system behaviors, for any maximal trace property $\traceproperty\in\latticeofmaximalproperties$ and any valid prefix trace $\tau$ that is not maximal, if every valid prefix trace $\tau\event$ which concatenates $\tau$ with a new event $\event$ belongs to the prediction trace property $\alphapropertytransform{\maximalsemantics{}}{\traceproperty}$, then $\tau$ belongs to $\alphapropertytransform{\maximalsemantics{}}{\traceproperty}$ too. I.e. $\forall\traceproperty\in\latticeofmaximalproperties.$ $\forall\tau\in\prefixsemantics{}\backslash\maximalsemantics{}.$ $(\forall\tau\event\in\prefixsemantics{}.$ $\tau\event\in\alphapropertytransform{\maximalsemantics{}}{\traceproperty})\Rightarrow\tau\in\alphapropertytransform{\maximalsemantics{}}{\traceproperty}$.
\end{lemma}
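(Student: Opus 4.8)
The plan is to unfold the definition of the prediction abstraction, $\alphapropertytransform{\maximalsemantics{}}{\traceproperty}=\{\trace\in\prefixesofset{\traceproperty}\mid\forall\trace'\in\maximalsemantics{}.\;\trace\isprefix\trace'\Rightarrow\trace'\in\traceproperty\}$, and to reduce membership of $\tau$ in it to the two obligations it splits into: (i) $\tau\in\prefixesofset{\traceproperty}$, i.e. $\tau$ is a prefix of some maximal trace in $\traceproperty$; and (ii) every maximal trace $\trace'\in\maximalsemantics{}$ with $\tau\isprefix\trace'$ already lies in $\traceproperty$. The key observation making everything work is that, because $\tau$ is valid but not maximal, every such $\trace'$ can be factored through a \emph{valid one-event extension} of $\tau$, so the hypothesis about one-event extensions applies.

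First I would discharge (ii). Fix an arbitrary $\trace'\in\maximalsemantics{}$ with $\tau\isprefix\trace'$. Since $\tau$ is not maximal we have $\tau\neq\trace'$, hence $\tau\isstrictprefix\trace'$; taking $\event\triangleq\trace'_{|\tau|}$ to be the event of $\trace'$ immediately after $\tau$, we get $\tau\event\isprefix\trace'$, so $\tau\event$ is a prefix of the valid maximal trace $\trace'$, i.e. $\tau\event\in\prefixsemantics{}$. The hypothesis then yields $\tau\event\in\alphapropertytransform{\maximalsemantics{}}{\traceproperty}$, and since $\tau\event\isprefix\trace'$ with $\trace'\in\maximalsemantics{}$, the defining condition of $\alphapropertytransform{\maximalsemantics{}}{}$ gives $\trace'\in\traceproperty$. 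For (i), I would note that $\tau\in\prefixsemantics{}=\prefixesofset{\maximalsemantics{}}$, so there is at least one $\trace'\in\maximalsemantics{}$ with $\tau\isprefix\trace'$ (this also shows the family of valid one-event extensions $\tau\event$ is non-empty whenever $\traceproperty\neq\maximaltracepropertybot$, so the hypothesis is not vacuous in the interesting cases); by (ii) this $\trace'$ belongs to $\traceproperty$, whence $\tau\in\prefixesofset{\traceproperty}$. Combining (i) and (ii) gives $\tau\in\alphapropertytransform{\maximalsemantics{}}{\traceproperty}$.

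I do not expect a genuine obstacle here: the argument is a routine unfolding, essentially the ``one step backward'' dual of Lemma \ref{lemma:lattice-property1}. The only point worth stating carefully is the factorisation step --- that a maximal prolongation of a non-maximal valid trace $\tau$ always has the form $\tau\event\cdots$ with $\tau\event$ itself valid --- since this is precisely what lets the hypothesis (which speaks only about single-event extensions) be invoked. The degenerate case $\traceproperty=\maximaltracepropertybot=\emptyset$ is handled for free: then $\alphapropertytransform{\maximalsemantics{}}{\traceproperty}=\emptyset$ contains no valid $\tau\event$, so the hypothesis is false and the implication holds vacuously.
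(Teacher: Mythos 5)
Your proof is correct and is essentially the paper's argument in contrapositive form: both decompose membership in $\alphapropertytransform{\maximalsemantics{}}{\traceproperty}$ into the same two conditions ($\tau\in\prefixesofset{\traceproperty}$ and every maximal prolongation lying in $\traceproperty$) and both rely on the same key factorisation $\event\triangleq\trace'_{|\tau|}$ to pass from a maximal prolongation of $\tau$ to a valid one-event extension $\tau\event$. The paper merely phrases this as a proof by contradiction with a case split on the two ways $\tau$ could fail to belong to $\alphapropertytransform{\maximalsemantics{}}{\traceproperty}$, whereas you argue the two obligations directly; the content is the same.
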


\begin{proof} Prove by contradiction. Assume that $\exists\traceproperty\in\latticeofmaximalproperties.$ $\exists\tau\in\prefixsemantics{}\backslash\maximalsemantics{}.$ $(\forall\tau\event\in\prefixsemantics{}.\;\tau\event\in\alphapropertytransform{\maximalsemantics{}}{\traceproperty})\wedge\tau\not\in\alphapropertytransform{\maximalsemantics{}}{\traceproperty}$. By the definition that $\alphapropertytransform{\maximalsemantics{}}{\traceproperty}=\{\trace\in\prefixesofset{\traceproperty}\mid\forall\trace'\in\maximalsemantics{}.$ $\trace\isprefix\trace' \Rightarrow \trace'\in\traceproperty\}$. In order to have $\tau\not\in\alphapropertytransform{\maximalsemantics{}}{\traceproperty}$, there are two possibilities.

1) $\tau\not\in\prefixesofset{\traceproperty}$. This implies that $\forall\tau\event\in\prefixsemantics{}.$ $\tau\event\not\in\prefixesofset{\traceproperty}$, which further implies that $\forall\tau\event\in\prefixsemantics{}.$ $\tau\event\not\in\alphapropertytransform{\maximalsemantics{}}{\traceproperty}$. Since $\tau\in\prefixsemantics{}\backslash\maximalsemantics{}$, there must exist at least one $\event$ such that $\tau\event\in\prefixsemantics{}\wedge\tau\event\not\in\alphapropertytransform{\maximalsemantics{}}{\traceproperty}$.

2) There is a maximal trace $\trace'\in\maximalsemantics{}$ such that $\tau\prec\trace' \wedge \trace'\not\in\traceproperty$. Take $\event=\trace'_{|\tau|}$, then $\tau\event\in\prefixsemantics{}\wedge$ $\tau\event\isprefix\trace' \wedge \trace'\not\in\traceproperty$ holds, which implies $\tau\event\in\prefixsemantics{}\wedge$ $\tau\event\not\in\alphapropertytransform{\maximalsemantics{}}{\traceproperty}$. 

Both two cases find that $\exists\tau\event\in\prefixsemantics{}.$ $\tau\event\not\in\alphapropertytransform{\maximalsemantics{}}{\traceproperty}$, which contradicts with the assumption $\forall\tau\event\in\prefixsemantics{}.$ $\tau\event\in\alphapropertytransform{\maximalsemantics{}}{\traceproperty}$.
\end{proof}

\begin{corollary}\label{corollary:cognizance-property} For any cognizance $\cognizancefunction{}{}$, we have $\underset{\event\in\setofallevents}{\bigmaximaltracepropertyjoin}\cognizancefunction{}{\event}\supseteq\setofallevents$.
\end{corollary}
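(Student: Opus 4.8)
The plan is to read this off directly from the \emph{extensiveness} of the cognizance function, which was recorded immediately after definition (\ref{eq:cognizancefunction}): $\forall\trace\in\setofalltraces.\ \trace\in\cognizancefunction{}{\trace}$. Throughout, an event $\event$ is identified with its length-one trace, exactly as in the paper's convention for the concatenation $\trace\event$; this is the identification that makes $\cognizancefunction{}{\event}$ meaningful in the first place, and it is already used implicitly in assumption \ref{assumption:cognizance1}. So $\event\in\setofalltraces$ and $\cognizancefunction{}{\event}\in\wp(\setofalltraces)$ for every $\event\in\setofallevents$.

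First I would instantiate extensiveness at the singleton trace $\event$, for an arbitrary $\event\in\setofallevents$, obtaining $\event\in\cognizancefunction{}{\event}$. Hence $\event$ lies in the (ordinary) set union $\bigcup_{\event'\in\setofallevents}\cognizancefunction{}{\event'}$, and since $\event$ was arbitrary this already gives the inclusion $\setofallevents\subseteq\bigcup_{\event'\in\setofallevents}\cognizancefunction{}{\event'}$. The only remaining step is to pass from this set union to the decorated join $\bigmaximaltracepropertyjoin$: since a join with respect to the order $\maximaltracepropertylessthan$ ($=\subseteq$) is always an upper bound of the family it joins, and in particular of that family's set union, we have $\bigcup_{\event\in\setofallevents}\cognizancefunction{}{\event}\ \maximaltracepropertylessthan\ \bigmaximaltracepropertyjoin_{\event\in\setofallevents}\cognizancefunction{}{\event}$; chaining this with the previous inclusion yields $\setofallevents\subseteq\bigmaximaltracepropertyjoin_{\event\in\setofallevents}\cognizancefunction{}{\event}$, i.e. the asserted $\supseteq$.

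I do not expect a genuine obstacle here: the statement is an immediate corollary of extensiveness, and the argument is two lines. The only points that warrant a sentence of care are (i) the implicit identification of an event with its length-one trace, so that $\cognizancefunction{}{}$ may be applied to it, and (ii) the observation that $\bigmaximaltracepropertyjoin$ dominates ordinary union — needed only if one reads $\bigmaximaltracepropertyjoin$ as a lattice join proper rather than as $\bigcup$ itself, in which case the last step is vacuous.
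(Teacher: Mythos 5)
Your proof is correct and is exactly the paper's argument: the paper's own proof is the one-liner ``this corollary comes immediately from the fact that the cognizance function $\cognizancefunction{}{}$ is extensive,'' and your two-line elaboration (instantiate extensiveness at each singleton trace $\event$, then note that the join dominates the union) simply makes that explicit. No gap, and no difference in approach.
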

\begin{proof}This corollary comes immediately from the fact that the cognizance function $\cognizancefunction{}{}$ is extensive.
\end{proof}

\begin{corollary}\label{corollary:observation-property} Given the semantics $\maximalsemantics{}$ and lattice $\latticeofmaximalproperties$ of behaviors, $\forall\trace\in\prefixsemantics{}\backslash\maximalsemantics{}.$ $\inquiryfunction{}{\maximalsemantics{}, \latticeofmaximalproperties, \trace}=\underset{\event\in\setofallevents}{\bigmaximaltracepropertyjoin}\inquiryfunction{}{\maximalsemantics{}, \latticeofmaximalproperties, \trace\event}=\underset{\trace\event\in\prefixsemantics{}}{\bigmaximaltracepropertyjoin}\inquiryfunction{}{\maximalsemantics{}, \latticeofmaximalproperties, \trace\event}$.
\end{corollary}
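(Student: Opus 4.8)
The plan is to prove the two stated equalities in turn: the one relating the join over $\setofallevents$ to the join over $\prefixsemantics{}$ is routine, and the one involving $\inquiryfunction{}{\maximalsemantics{}, \latticeofmaximalproperties, \trace}$ carries the content. For the routine part I would observe that if $\trace\event\notin\prefixsemantics{}$ then $\trace\event$ is invalid, so by the remark following (\ref{eq:inquiryfunction}) we have $\inquiryfunction{}{\maximalsemantics{}, \latticeofmaximalproperties, \trace\event}=\maximaltracepropertybot$, which contributes nothing to a join; moreover, since $\trace\in\prefixsemantics{}\setminus\maximalsemantics{}$ is valid, non-maximal and hence finite, it is a strict prefix of some maximal trace, so at least one extension $\trace\event$ is valid and the index set $\{\event\mid\trace\event\in\prefixsemantics{}\}$ is non-empty. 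Hence the two joins coincide; write $R$ for their common value, which lies in the complete lattice $\latticeofmaximalproperties$ since it is a join of elements $\inquiryfunction{}{\maximalsemantics{}, \latticeofmaximalproperties, \trace\event}\in\latticeofmaximalproperties$.

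It then remains to show $\inquiryfunction{}{\maximalsemantics{}, \latticeofmaximalproperties, \trace}=R$, which I would do by two inclusions in $\latticeofmaximalproperties$ (recall $\maximaltracepropertylessthan$ is $\subseteq$). For $R\maximaltracepropertylessthan\inquiryfunction{}{\maximalsemantics{}, \latticeofmaximalproperties, \trace}$: by Lemma \ref{lemma:inquiry-decreasing}, every valid extension satisfies $\inquiryfunction{}{\maximalsemantics{}, \latticeofmaximalproperties, \trace\event}\maximaltracepropertylessthan\inquiryfunction{}{\maximalsemantics{}, \latticeofmaximalproperties, \trace}$, so $\inquiryfunction{}{\maximalsemantics{}, \latticeofmaximalproperties, \trace}$ is an upper bound of the joined family inside $\latticeofmaximalproperties$ and therefore dominates its least upper bound $R$. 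For $\inquiryfunction{}{\maximalsemantics{}, \latticeofmaximalproperties, \trace}\maximaltracepropertylessthan R$: fix a valid extension $\trace\event$; by Corollary \ref{corollary:inquiry-guarantee} the trace $\trace\event$ guarantees $\inquiryfunction{}{\maximalsemantics{}, \latticeofmaximalproperties, \trace\event}$, and since $\trace\event$ is valid this is equivalent to $\trace\event\in\alphapropertytransform{\maximalsemantics{}}{\inquiryfunction{}{\maximalsemantics{}, \latticeofmaximalproperties, \trace\event}}$ (the non-trivial direction is Lemma \ref{lemma:prediction-property}; conversely a valid trace always extends to a maximal one in the guaranteed property, which puts it in $\prefixesofset{\inquiryfunction{}{\maximalsemantics{}, \latticeofmaximalproperties, \trace\event}}$). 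Since $\inquiryfunction{}{\maximalsemantics{}, \latticeofmaximalproperties, \trace\event}\maximaltracepropertylessthan R$ and $\alphapropertytransform{\maximalsemantics{}}{}$ is increasing (shown in the proof of (\ref{eq:alphapropertytransform})), this gives $\trace\event\in\alphapropertytransform{\maximalsemantics{}}{R}$. As this holds for every $\event$ with $\trace\event\in\prefixsemantics{}$ and $\trace\in\prefixsemantics{}\setminus\maximalsemantics{}$, Lemma \ref{lemma:lattice-property2} yields $\trace\in\alphapropertytransform{\maximalsemantics{}}{R}$, so $R$ lies in $\{\traceproperty\in\latticeofmaximalproperties\mid\trace\in\alphapropertytransform{\maximalsemantics{}}{\traceproperty}\}$; since $\inquiryfunction{}{\maximalsemantics{}, \latticeofmaximalproperties, \trace}$ is by definition the $\maximaltracepropertymeet$ of that set, $\inquiryfunction{}{\maximalsemantics{}, \latticeofmaximalproperties, \trace}\maximaltracepropertylessthan R$. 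Combining the two inclusions closes the proof.

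The step I expect to be delicate is $\trace\event\in\alphapropertytransform{\maximalsemantics{}}{\inquiryfunction{}{\maximalsemantics{}, \latticeofmaximalproperties, \trace\event}}$: because the meet $\maximaltracepropertymeet$ defining the inquiry function need not be set intersection, one cannot argue naively that the meet of all properties whose prediction abstraction contains $\trace\event$ still has $\trace\event$ in its prediction abstraction. This is exactly what Corollary \ref{corollary:inquiry-guarantee} provides (it rests on the Galois isomorphism (\ref{eq:alphapropertytransform}), under which $\alphapropertytransform{\maximalsemantics{}}{}$ preserves meets), so I would invoke the corollary rather than re-derive it; everything else is bookkeeping with the monotonicity of $\alphapropertytransform{\maximalsemantics{}}{}$, Lemma \ref{lemma:lattice-property2}, and the lattice laws.
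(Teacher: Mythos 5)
Your proof is correct and follows essentially the same route as the paper's: both reduce the join over all events to the join over valid extensions (invalid ones contributing only $\maximaltracepropertybot$), get the inclusion $\supseteq$ from Lemma \ref{lemma:inquiry-decreasing}, and get the reverse inclusion from Lemma \ref{lemma:lattice-property2}. The only cosmetic difference is that you argue that second inclusion directly (via Corollary \ref{corollary:inquiry-guarantee} and monotonicity of $\alphapropertytransform{\maximalsemantics{}}{}$), whereas the paper phrases the same argument as a proof by contradiction.
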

\begin{proof}
First, $\maximaltracepropertyjoin\{\inquiryfunction{}{\maximalsemantics{}, \latticeofmaximalproperties, \trace\event}\mid\event\in\setofallevents\}$ $=(\maximaltracepropertyjoin\{\inquiryfunction{}{\maximalsemantics{}, \latticeofmaximalproperties, \trace\event}\mid\trace\event\in\prefixsemantics{}\})$ $\maximaltracepropertyjoin(\maximaltracepropertyjoin\{\inquiryfunction{}{\maximalsemantics{}, \latticeofmaximalproperties, \trace\event}\mid\trace\event\not\in\prefixsemantics{}\})$ $=(\maximaltracepropertyjoin\{\inquiryfunction{}{\maximalsemantics{}, \latticeofmaximalproperties, \trace\event}\mid\trace\event\in\prefixsemantics{}\})\maximaltracepropertyjoin\maximaltracepropertybot$ $=\maximaltracepropertyjoin\{\inquiryfunction{}{\maximalsemantics{}, \latticeofmaximalproperties, \trace\event}\mid\trace\event\in\prefixsemantics{}\}$.

Second, we prove $\maximaltracepropertyjoin\{\inquiryfunction{}{\maximalsemantics{}, \latticeofmaximalproperties, \trace\event}\mid\trace\event\in\prefixsemantics{}\}$ $=$ $\inquiryfunction{}{\maximalsemantics{}, \latticeofmaximalproperties, \trace}$ in two steps: 
1) by lemma. \ref{lemma:inquiry-decreasing}, it is proved that $\forall\trace, \trace\event\in\setofalltraces.\;\inquiryfunction{}{\maximalsemantics{}, \latticeofmaximalproperties, \trace}\supseteq\inquiryfunction{}{\maximalsemantics{}, \latticeofmaximalproperties, \trace\event}$, thus $\inquiryfunction{}{\maximalsemantics{}, \latticeofmaximalproperties, \trace}\supseteq\maximaltracepropertyjoin\{\inquiryfunction{}{\maximalsemantics{}, \latticeofmaximalproperties, \trace\event}\mid\trace\event\in\prefixsemantics{}\}$. 
2) assume $\inquiryfunction{}{\maximalsemantics{}, \latticeofmaximalproperties, \trace}\supsetneq$ $\maximaltracepropertyjoin\{\inquiryfunction{}{\maximalsemantics{}, \latticeofmaximalproperties, \trace\event}\mid\trace\event\in\prefixsemantics{}\}$ $=$ $\traceproperty$. By the definition of $\inquiryfunction{}{}$ in (\ref{eq:inquiryfunction}), we know that $\trace\not\in\alphapropertytransform{\maximalsemantics{}}{\traceproperty}$ and $\forall\trace\event\in\prefixsemantics{}.$ $\trace\event\in\alphapropertytransform{\maximalsemantics{}}{\traceproperty}$, which is impossible by lemma. \ref{lemma:lattice-property2}. Thus, by contradiction, $\inquiryfunction{}{\maximalsemantics{}, \latticeofmaximalproperties, \trace}$ $=$ $\maximaltracepropertyjoin\{\inquiryfunction{}{\maximalsemantics{}, \latticeofmaximalproperties, \trace\event}\mid\trace\event\in\prefixsemantics{}\}$.
\end{proof}

\begin{lemma}\label{lemma:monitor-one-trace} Given the semantics $\maximalsemantics{}$, the lattice $\latticeofmaximalproperties$ of system behaviors and the cognizance function $\cognizancefunction{}{}$, we have: 

\noindent$\forall\trace\in\prefixsemantics{}\backslash\maximalsemantics{}.$ $\observation{}{}{\maximalsemantics{}, \latticeofmaximalproperties, \cognizancefunction{}{}, \trace}$ \\
$=\underset{\event\in\setofallevents}{\bigmaximaltracepropertyjoin}\observation{}{}{\maximalsemantics{}, \latticeofmaximalproperties, \cognizancefunction{}{}, \trace\event}=\underset{\trace\event\in\prefixsemantics{}}{\bigmaximaltracepropertyjoin}\observation{}{}{\maximalsemantics{}, \latticeofmaximalproperties, \cognizancefunction{}{}, \trace\event}$.
\end{lemma}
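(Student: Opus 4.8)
The plan is to unfold the definition of the observation function on both sides of the two claimed equalities and reduce everything to the already-proved inquiry-function analogue, Corollary~\ref{corollary:observation-property}, together with Assumption~\ref{assumption:cognizance1} and Corollary~\ref{corollary:cognizance-property}. Throughout I suppress the fixed parameters $\maximalsemantics{},\latticeofmaximalproperties,\cognizancefunction{}{}$, writing $\inquiryfunction{}{\tau}$ and $\observation{}{}{\tau}$, so that $\observation{}{}{\tau}=\maximaltracepropertyjoin\{\inquiryfunction{}{\tau'}\mid\tau'\in\cognizancefunction{}{\tau}\}$. The second equality, $\maximaltracepropertyjoin\{\observation{}{}{\trace\event}\mid\event\in\setofallevents\}=\maximaltracepropertyjoin\{\observation{}{}{\trace\event}\mid\trace\event\in\prefixsemantics{}\}$, is immediate: since $\trace\in\prefixsemantics{}\setminus\maximalsemantics{}$ is a finite valid non-maximal trace, $\{\event\mid\trace\event\in\prefixsemantics{}\}$ is non-empty, while every $\event$ with $\trace\event\notin\prefixsemantics{}$ yields an invalid trace $\trace\event$, so $\observation{}{}{\trace\event}=\maximaltracepropertybot$ is neutral for $\maximaltracepropertyjoin$ --- this is exactly the first step in the proof of Corollary~\ref{corollary:observation-property}.

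For the main equality, I would write the middle term as $\maximaltracepropertyjoin\{\maximaltracepropertyjoin\{\inquiryfunction{}{\mu}\mid\mu\in\cognizancefunction{}{\trace\event}\}\mid\event\in\setofallevents\}$ and use Assumption~\ref{assumption:cognizance1} to replace $\cognizancefunction{}{\trace\event}$ by $\{\tau\tau'\mid\tau\in\cognizancefunction{}{\trace}\wedge\tau'\in\cognizancefunction{}{\event}\}$; collapsing the resulting join of joins into one join over a union of index sets (by associativity and commutativity of $\maximaltracepropertyjoin$), the middle term becomes $\maximaltracepropertyjoin\{\inquiryfunction{}{\tau\tau'}\mid\tau\in\cognizancefunction{}{\trace}\wedge\tau'\in\bigcup_{\event\in\setofallevents}\cognizancefunction{}{\event}\}$, whereas the left term is $\observation{}{}{\trace}=\maximaltracepropertyjoin\{\inquiryfunction{}{\tau}\mid\tau\in\cognizancefunction{}{\trace}\}$. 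The inclusion ``middle $\maximaltracepropertylessthan$ left'' holds because for each pair $\tau\in\cognizancefunction{}{\trace}$ and $\tau'\in\bigcup_{\event\in\setofallevents}\cognizancefunction{}{\event}$ we have $\tau\isprefix\tau\tau'$, hence $\inquiryfunction{}{\tau\tau'}\maximaltracepropertylessthan\inquiryfunction{}{\tau}$ by Lemma~\ref{lemma:inquiry-decreasing}, and $\inquiryfunction{}{\tau}\maximaltracepropertylessthan\observation{}{}{\trace}$ since $\tau\in\cognizancefunction{}{\trace}$. For ``left $\maximaltracepropertylessthan$ middle'' it suffices, $\observation{}{}{\trace}$ being a least upper bound, to bound $\inquiryfunction{}{\tau}$ by the middle term for each fixed $\tau\in\cognizancefunction{}{\trace}$: if $\tau\notin\prefixsemantics{}$ then $\inquiryfunction{}{\tau}=\maximaltracepropertybot$; if $\tau\in\prefixsemantics{}\setminus\maximalsemantics{}$, Corollary~\ref{corollary:observation-property} gives $\inquiryfunction{}{\tau}=\maximaltracepropertyjoin\{\inquiryfunction{}{\tau\event'}\mid\event'\in\setofallevents\}$, and $\setofallevents\subseteq\bigcup_{\event\in\setofallevents}\cognizancefunction{}{\event}$ by Corollary~\ref{corollary:cognizance-property}, so each $\inquiryfunction{}{\tau\event'}$ is one of the joined terms of the middle term and therefore $\inquiryfunction{}{\tau}$ lies below it.

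The step I expect to be the main obstacle is the remaining case of that case analysis: a \emph{maximal} valid trace $\tau\in\maximalsemantics{}$ lying in $\cognizancefunction{}{\trace}$. Corollary~\ref{corollary:observation-property} is available only for valid non-maximal traces, and indeed $\inquiryfunction{}{\tau}=\maximaltracepropertyjoin\{\inquiryfunction{}{\tau\event'}\mid\event'\in\setofallevents\}$ can fail when $\tau\in\maximalsemantics{}$, since then each $\tau\event'$ is invalid and contributes only $\maximaltracepropertybot$ while $\inquiryfunction{}{\tau}$ need not be $\maximaltracepropertybot$. I would argue that this case does not occur for the cognizance functions of interest --- one relating only traces of equal length, as in all the examples, never maps the non-maximal $\trace$ to a maximal trace --- or, in full generality, I would dispatch it by a short separate argument exploiting Assumption~\ref{assumption:cognizance1} to relate $\cognizancefunction{}{\trace}$ with the $\cognizancefunction{}{\trace\event}$. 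Apart from this point the remainder is a routine manipulation of lattice joins once Corollaries~\ref{corollary:cognizance-property} and~\ref{corollary:observation-property} and Lemma~\ref{lemma:inquiry-decreasing} are in hand.
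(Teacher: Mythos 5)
Your proposal follows essentially the same route as the paper's proof: both dispose of the second equality by noting that invalid one-event extensions contribute only $\maximaltracepropertybot$, then unfold the observation into a join of inquiries, rewrite $\cognizancefunction{}{\trace\event}$ via Assumption~\ref{assumption:cognizance1}, and reduce to Corollaries~\ref{corollary:cognizance-property} and~\ref{corollary:observation-property} together with Lemma~\ref{lemma:inquiry-decreasing}. The only organizational difference is that the paper splits the resulting join into three cases according to whether $|\trace''|$ is $1$, $0$, or ${>}1$ (showing the first case equals $\observation{}{}{\maximalsemantics{}, \latticeofmaximalproperties, \cognizancefunction{}{}, \trace}$ and the other two are dominated by it), whereas you argue by two inclusions; these are interchangeable.

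Regarding the case you single out as the main obstacle --- a maximal $\tau\in\cognizancefunction{}{\trace}$ for a valid non-maximal $\trace$ --- the paper's proof does not treat it either: in its first case it applies Corollary~\ref{corollary:observation-property} to every $\trace'\in\cognizancefunction{}{\trace}$, although that corollary is stated and proved only for $\trace'\in\prefixsemantics{}\backslash\maximalsemantics{}$. For an invalid $\trace'$ the step is harmless since both sides collapse to $\maximaltracepropertybot$, but for a maximal valid $\trace'$ every extension $\trace'\trace''$ is invalid, so the identity $\inquiryfunction{}{\maximalsemantics{}, \latticeofmaximalproperties, \trace'}=\maximaltracepropertyjoin\{\inquiryfunction{}{\maximalsemantics{}, \latticeofmaximalproperties, \trace'\trace''}\mid\trace''\in\setofallevents\}$ genuinely fails whenever $\inquiryfunction{}{\maximalsemantics{}, \latticeofmaximalproperties, \trace'}\neq\maximaltracepropertybot$, exactly as you observe. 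So you are not missing an argument that the paper supplies; the implicit reading is that the cognizance of a valid non-maximal trace contains no valid maximal trace (true of all the paper's examples, where indistinguishable traces have matching lengths and futures), and under that reading your proof is complete. Your instinct to make this hypothesis explicit, rather than silently invoking Corollary~\ref{corollary:observation-property} out of its stated scope, is the right one.
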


\begin{proof}We start from the right side.
\begin{calculus}
\formula{\underset{\trace\event\in\prefixsemantics{}}{\maximaltracepropertyjoin}\observation{}{}{\maximalsemantics{}, \latticeofmaximalproperties, \cognizancefunction{}{}, \trace\event}}\\
=
\formulaexplanation{(\underset{\trace\event\in\prefixsemantics{}}{\maximaltracepropertyjoin}\observation{}{}{\maximalsemantics{}, \latticeofmaximalproperties, \cognizancefunction{}{}, \trace\event})\maximaltracepropertyjoin\maximaltracepropertybot}{def.$\maximaltracepropertybot$}\\
=
\formula{\underset{\trace\event\in\prefixsemantics{}}{\maximaltracepropertyjoin}\observation{}{}{\maximalsemantics{}\!, \latticeofmaximalproperties\!, \cognizancefunction{}{}, \trace\event}\maximaltracepropertyjoin\underset{\trace\event\not\in\prefixsemantics{}}{\maximaltracepropertyjoin}\observation{}{}{\maximalsemantics{}\!, \latticeofmaximalproperties\!, \cognizancefunction{}{}, \trace\event}}\\
=
\formulaexplanation{\underset{\event\in\setofallevents}{\maximaltracepropertyjoin}\observation{}{}{\maximalsemantics{}, \latticeofmaximalproperties, \cognizancefunction{}{}, \trace\event}}{merge two cases}\\
=
\formulaexplanation{\maximaltracepropertyjoin\{\inquiryfunction{}{\maximalsemantics{}, \latticeofmaximalproperties, \tau}\mid\tau\in\cognizancefunction{}{\trace\event}\wedge\event\in\setofallevents\}}{def.\ $\observationforsingletrace{}{}{}$}\\
=
\formulaexplanation{\maximaltracepropertyjoin\{\inquiryfunction{}{\maximalsemantics{}, \latticeofmaximalproperties, \trace'\trace''}\mid\trace'\trace''\in\cognizancefunction{}{\trace\event}\wedge\event\in\setofallevents\}}{replace $\tau$ with $\trace'\trace''$}\\
=
\formulaexplanation{\maximaltracepropertyjoin\{\inquiryfunction{}{\maximalsemantics{}, \latticeofmaximalproperties, \trace'\trace''}\mid\trace'\in\cognizancefunction{}{\trace}\wedge\trace''\in\cognizancefunction{}{\event}\wedge\event\in\setofallevents\}}{assumption \ref{assumption:cognizance1}}\\
=
\formula{(\maximaltracepropertyjoin\{\inquiryfunction{}{\maximalsemantics{}, \latticeofmaximalproperties, \trace'\trace''}\mid\trace'\in\cognizancefunction{}{\trace}\wedge\trace''\in\cognizancefunction{}{\event}\wedge\event\in\setofallevents\wedge|\trace''|=1\})\;\maximaltracepropertyjoin\;(\maximaltracepropertyjoin\{\inquiryfunction{}{\maximalsemantics{}, \latticeofmaximalproperties, \trace'\trace''}\mid\trace'\in\cognizancefunction{}{\trace}\wedge\trace''\in\cognizancefunction{}{\event}\wedge\event\in\setofallevents\wedge|\trace''|=0\})\;\maximaltracepropertyjoin\;(\maximaltracepropertyjoin\{\inquiryfunction{}{\maximalsemantics{}, \latticeofmaximalproperties, \trace'\trace''}\mid\trace'\in\cognizancefunction{}{\trace}\wedge\trace''\in\cognizancefunction{}{\event}\wedge\event\in\setofallevents\wedge|\trace''|>1\})}\\
\rightexplanation{split into 3 cases by the length of $\trace''$}
\end{calculus}

The first case:
\begin{calculus}
\formula{\maximaltracepropertyjoin\{\inquiryfunction{}{\maximalsemantics{}, \latticeofmaximalproperties, \trace'\trace''}\mid\trace'\in\cognizancefunction{}{\trace}\wedge\trace''\in\cognizancefunction{}{\event}\wedge\event\in\setofallevents\wedge|\trace''|=1\}}\\
=
\formulaexplanation{\maximaltracepropertyjoin\{\inquiryfunction{}{\maximalsemantics{}, \latticeofmaximalproperties, \trace'\trace''}\mid\trace'\in\cognizancefunction{}{\trace}\wedge\trace''\in\setofallevents\}}{corol. \ref{corollary:cognizance-property}}\\
=
\formulaexplanation{\maximaltracepropertyjoin\{\inquiryfunction{}{\maximalsemantics{}, \latticeofmaximalproperties, \trace'}\mid\trace'\in\cognizancefunction{}{\trace}\}}{corollary \ref{corollary:observation-property}}\\
=
\formulaexplanation{\observation{}{}{\maximalsemantics{}, \latticeofmaximalproperties, \cognizancefunction{}{}, \trace}}{def.\ $\observationforsingletrace{}{}{}$}
\end{calculus}

The second case: if there exists $\event\in\setofallevents$ s.t. $\emptytrace\in\cognizancefunction{}{\event}$, then $\maximaltracepropertyjoin\{\inquiryfunction{}{\maximalsemantics{}, \latticeofmaximalproperties, \trace'\trace''}$ $\mid\trace'\in\cognizancefunction{}{\trace}\wedge\trace''\in\cognizancefunction{}{\event}\wedge\event\in\setofallevents\wedge|\trace''|=0\} 
=
\maximaltracepropertyjoin\{\inquiryfunction{}{\maximalsemantics{}, \latticeofmaximalproperties, \trace'}\mid\trace'\in\cognizancefunction{}{\trace}\}
=
\observation{}{}{\maximalsemantics{}, \latticeofmaximalproperties, \cognizancefunction{}{}, \trace}$. Otherwise, it is an empty set.

The third case: 

\begin{calculus}
\formula{\maximaltracepropertyjoin\{\inquiryfunction{}{\maximalsemantics{}, \latticeofmaximalproperties, \trace'\trace''}\mid\trace'\in\cognizancefunction{}{\trace}\wedge\trace''\in\cognizancefunction{}{\event}\wedge\event\in\setofallevents\wedge|\trace''|>1\}}\\
$\subseteq$
\formula{\maximaltracepropertyjoin\{\inquiryfunction{}{\maximalsemantics{}, \latticeofmaximalproperties, \trace'}\mid\trace'\in\cognizancefunction{}{\trace}\wedge\trace''\in\cognizancefunction{}{\event}\wedge\event\in\setofallevents\wedge|\trace''|>1\}}\\
\rightexplanation{$\inquiryfunction{}{\maximalsemantics{}, \latticeofmaximalproperties, \trace'\trace''}\subseteq\inquiryfunction{}{\maximalsemantics{}, \latticeofmaximalproperties, \trace'}$}\\
$\subseteq$
\formulaexplanation{\maximaltracepropertyjoin\{\inquiryfunction{}{\maximalsemantics{}, \latticeofmaximalproperties, \trace'}\mid\trace'\in\cognizancefunction{}{\trace}\}}{def. $\maximaltracepropertyjoin$}\\
=
\formulaexplanation{\observation{}{}{\maximalsemantics{}, \latticeofmaximalproperties, \cognizancefunction{}{}, \trace}}{def.\ $\observationforsingletrace{}{}{}$}
\end{calculus}

Joining the above three cases together, we have proved that \\
\noindent$\underset{\trace\event\in\prefixsemantics{}}{\bigmaximaltracepropertyjoin}\observation{}{}{\maximalsemantics{}, \latticeofmaximalproperties, \cognizancefunction{}{}, \trace\event}=\observation{}{}{\maximalsemantics{}, \latticeofmaximalproperties, \cognizancefunction{}{}, \trace}$.
\end{proof}

\bigskip
\noindent\textbf{Lemma \ref{lemma:responsibility-definition}.}  If $\traceresponsible$ is said to be responsible for a behavior $\behavior$ in a valid trace $\tracehistory\traceresponsible\tracefuture$, then $\tracehistory\traceresponsible$ guarantees the occurrence of behavior $\behavior$, and there must exist another valid prefix trace $\tracehistory\traceresponsible'$ such that the behavior $\behavior$ is not guaranteed.

\begin{proof}
First, from the definition of responsiblity, we know $\observation{}{}{\maximalsemantics{}, \latticeofmaximalproperties, \cognizancefunction{}{}, \tracehistory\traceresponsible}$ $\subseteq\behavior$. By corollary \ref{corollary:observation-guarantee}, $\tracehistory\traceresponsible$ guarantees the satisfaction of $\observation{}{}{\maximalsemantics{}, \latticeofmaximalproperties, \cognizancefunction{}{}, \tracehistory\traceresponsible}$, which is at least as strong as $\behavior$. Thus, the occurrence of behavior $\behavior$ is guaranteed.

Second, we prove by contradiction. Assume that every valid trace $\tracehistory\traceresponsible'$ guarantees the occurrence of behavior $\behavior$ (i.e. $\forall\tracehistory\traceresponsible'\in\prefixsemantics{}.$ $\observation{}{}{\maximalsemantics{}, \latticeofmaximalproperties, \cognizancefunction{}{}, \tracehistory\traceresponsible'}$ $\subseteq\behavior$). By lemma \ref{lemma:monitor-one-trace}, we can prove that $\observation{}{}{\maximalsemantics{}, \latticeofmaximalproperties, \cognizancefunction{}{}, \tracehistory}$ $\subseteq\behavior$, which contradicts with the requirement term of $\behavior\subsetneq\observation{}{}{\maximalsemantics{}, \latticeofmaximalproperties, \cognizancefunction{}{}, \tracehistory}$ for $\traceresponsible$ to be responsible for the behavior $\behavior$.
\end{proof}

\section{Abstract Responsibility Analysis\label{sec:abstract-responsibility}}
In general the concrete maximal trace semantics $\maximalsemantics{}$ is not computable, thus the static responsibility analysis $\alpharesponsibility{}{}{\maximalsemantics{}, \latticeofmaximalproperties, \cognizancefunction{}{}, \behavior, \tracepropertyT}$ proposed in Section \ref{sec:formal-definition} is undecidable, and an implementation of it has to abstract sets of finite or infinite traces involved in $\maximalsemantics{}$, $\latticeofmaximalproperties$, $\cognizancefunction{}{}$, $\behavior$, and $\tracepropertyT$. 



\subsection{Program Operational Semantics}\label{sect:Program-operational-semantics}

For any given program, we assume that the program semantics is fixed in the static analyzer and defined by a small-step operational semantics \cite{DBLP:journals/jlp/Plotkin04a} $\quadruple{\Sigma}{I}{A}{\tau}$
specifying the program states $\Sigma$, the initial states $I\subseteq\Sigma$, the actions $A$, and the transition relation $\tau\in\wp(\Sigma\times A\times \Sigma)$. 
This is used to generate a trace operational semantics $\tracesemantics$, which is the set of maximal finite or infinite traces starting from an
initial state in $I$ and such that any two consecutive states are separated by an action as specified by $\tau$.

The events are chosen to be transitions $s\transition{a}s'\in\tau$. An isomorphic abstraction of the trace operational semantics $\tracesemantics$
where states are erased and events $\triple{s}{a}{s'}$ 
are preserved provides the concrete maximal trace semantics $\maximalsemantics{}$ 
where traces are sequences of events as assumed in Section \ref{subsec:event-model}. 

\subsection{Static Invariance and Eventuality Analyzers}\label{sect:static-analyzer}
Our responsibility analysis is designed so as to reuse a static invariance analyzer, which is able to attach to each program
point an abstract invariant in an abstract domain $\quadruple{\abstractdomain}{\sqsubseteq}{\bot}{\sqcup}$ 
where the concretization of $\bot$ is \textit{false}/$\emptyset$.
Such a static analyzer over-approximates states which are reachable from an initial abstract 
invariant (e.g.\ the initial states) and may reach a final abstract invariant (e.g.\
the final states), by iterating fixpoint forward/backward static analyses in the 
abstract domain $\pair{\abstractdomain}{\sqsubseteq}$, if necessary with widening 
extrapolation/narrowing interpolation \cite[Section 6]{DBLP:journals/jlp/CousotC92}. 
To be more precise, the forward analysis uses a least fixpoint. An option of the static analyzer is offered to specify whether programs are assumed to terminate (thus excluding infinite traces) or not. Therefore the backward 
analysis uses a least fixpoint if termination is optionally required and a greatest 
fixpoint to allow for non-termination.

If available, an existing eventuality static analyzer such as \cite{DBLP:conf/vmcai/UrbanM15,DBLP:conf/sas/UrbanU018}
can be used to increase the precision of the responsibility analysis. Such an eventuality static analyzer should
be able to prove that there exists a concrete execution which, starting from a reachable state satisfying an abstract invariant, will
definitely reach a terminal state satisfying an abstract property (or that this is impossible).

\subsection{User Specification for the Responsibility Analysis\label{sect:User-specification}}
In general, for a static responsibility analysis $\alpharesponsibility{}{}{}(\maximalsemantics{},$ $\latticeofmaximalproperties,$ $\cognizancefunction{}{}, \behavior, \tracepropertyT)$
where $\maximalsemantics{}$ is defined as in Section \ref{sect:Program-operational-semantics}, 
the user must specify the lattice of properties of interest $\latticeofmaximalproperties$, the cognizance $\cognizancefunction{}{}$, the behavior
$\behavior$ and traces $\tracepropertyT$ of interest. 

One possibility is to design a static analyzer where $\latticeofmaximalproperties, \cognizancefunction{}{}$ and $\tracepropertyT$ are fixed, while $\behavior$ is given by options. Another possibility is to design a parameterized analyzer with parameters specified by the user.

Specifically, here we consider the most frequent case of $\latticeofmaximalproperties$, which includes a trace property $\traceproperty_{b}$ 
and its complement $\traceproperty_{\!\neg b}$. The behavior of interest $\behavior$ is assumed to be $\traceproperty_{b}$ (or alternatively $\traceproperty_{\!\neg b}$). Thus, the user needs to specify the sets of traces 
$\traceproperty_{b}$, $\traceproperty_{\!\neg b}$, and $\tracepropertyT$, as well as the cognizance $\cognizancefunction{}{}$.

\vskip-.5em
\begin{eqntabular*}{C}
\begin{tikzpicture}[scale=0.7]
\footnotesize
     \draw (0,1) node[name=title] {{\llap{$\latticeofmaximalproperties$\ :\qquad}}};
     \draw (1,0) node[name=b] {{$\maximaltracepropertybot\rlap{ = $\emptyset$}$}};
     \draw (0,1) node[name=lb] {{$\traceproperty_{b}$}};
     \draw (2,1) node[name=lnb] {{$\traceproperty_{\!\neg b}$}};
     \draw (1,2) node[name=t] {{$\maximaltracepropertytop\rlap{ = $ \maximalsemantics{}$}$}};
     \draw (0,-1) node[name=fictive] {};
     \draw[semithick] (b) -- (lb);
     \draw[semithick] (b) -- (lnb);
     \draw[semithick] (lb) -- (t);
     \draw[semithick] (lnb) -- (t);
\end{tikzpicture}
\end{eqntabular*}
\vskip-2.5em


The user specification needs to be given in the abstract domain $\pair{\abstractdomain}{\sqsubseteq}$. More precisely, $\traceproperty_{b}$, $\traceproperty_{\!\neg b}$, and $\tracepropertyT$ are specified by abstract invariants in $\abstractdomain$ attached to each program point, which represent
sets of maximal concrete traces such that the abstract invariant attached to a program point holds for the program state whenever
the trace reaches that program point. If termination is requested only the finite traces are considered.

The cognizance function $\cognizancefunction{}{}$ is specified by abstract relational invariants in $\pair{\abstractdomain}{\sqsubseteq}$ expressing relational properties about two executions of a single program on different inputs \cite{DBLP:journals/pacmpl/AguirreBG0S17}. Such a relational invariant abstracts the relation $\{\pair{\trace}{\trace'}\mid\trace'\in\cognizancefunction{}{\trace}\}$ between traces, in which the states of ${\trace}$ and ${\trace'}$ at any program point ${}^{\ell}$ satisfy the invariant relation attached to that point.

\begin{example}\label{ex:abstraction}Let us take the following program as an example of the abstract responsibility analysis for the bug $\mathtt{c}==0$ at line ${}^{\ell_4}$ with the omniscient cognizance.\par
\bgroup\ttfamily\footnotesize\begin{eqntabular*}[fl]{@{\qquad}L}
${}^{\ell_1}$  a = input\_1(); // Input 1 or -1\\
${}^{\ell_2}$  b = input\_2(); // Input 1 or -1\\
${}^{\ell_3}$  c = (a - b);\\
${}^{\ell_4}$  // c==0 is a bug
\end{eqntabular*}\egroup

In this example, we assume that the abstract domain $\pair{\abstractdomain}{\sqsubseteq}$ chosen for the static invariance analyzer and user specifications can express
the reduced product of intervals and symbolic (dis)equalities. 
The abstract user specifications are as follows (the superscript $\userspec{}$ stands for user specification):\par
\bgroup\ttfamily\abovedisplayskip6pt\belowdisplayskip2pt\footnotesize\abovedisplayskip0.75\abovedisplayskip\belowdisplayskip0.5\belowdisplayskip\arraycolsep0.75\arraycolsep%
\begin{eqntabular*}{c|c|c|c|c|L@{}}
&{}^{\ell_1}&{}^{\ell_2}&{}^{\ell_3}&{}^{\ell_4}\\\hline
\ulstrut\userspec{\tracepropertyT}&\mathsf{true}&\mathsf{true}&\mathsf{true}&\mathsf{true}&\normalfont\footnotesize traces of interest\\
 \userspec{\traceproperty_{b}}&\mathsf{true}&\mathsf{true}&\mathsf{true}&\mathtt{c}\mathrel{{=}{=}}0&\normalfont\footnotesize erroneous traces\\
\userspec{\traceproperty_{\!\neg b}}&\mathsf{true}&\mathsf{true}&\mathsf{true}&\mathtt{c}\mathrel{{!}{=}}0&\normalfont\footnotesize correct traces\\
\lLstrut\userspec{\omniscientcognizance{}{}}&\mathsf{true}&\begin{array}[t]{@{}c@{}}\mathtt{a}\mathrel{{=}{=}}\mathtt{a}'\\{}\in\interval{-1}{1}\end{array}&\begin{array}[t]{@{}c@{}}\mathtt{b}\mathrel{{=}{=}}\mathtt{b}'\\{}\in\interval{-1}{1}\end{array}&\mathsf{true}&\normalfont\footnotesize omniscient cognizance\\
\lLstrut\userspec{\nonomniscientcognizance{}{}}&\mathsf{true}&\mathsf{true}&\begin{array}[t]{@{}c@{}}\mathtt{b}\mathrel{{=}{=}}\mathtt{b}'\\{}\in\interval{-1}{1}\end{array}&\mathsf{true}&\normalfont\footnotesize\begin{tabular}[t]{@{}c@{}}non-omniscient cogni-\\zance of 2${}^{\mathrm{nd}}$ input\end{tabular}\\\hline
\end{eqntabular*}
\egroup
In the above table, $\ulstrut\userspec{\tracepropertyT}$ specifies that all valid traces are analyzed, $\userspec{\traceproperty_{b}}$ specifies the set of erroneous traces in which $\mathtt{c}$ is 0 at ${}^{\ell_4}$, and $\userspec{\traceproperty_{\!\neg b}}$ specifies the set of correct traces.
In addition, two example cases for the cognizance are given. For the omniscient observer with cognizance $\omniscientcognizance{}{}$, the input at ${}^{\ell_2}$ is done knowing the value of $\mathtt{a}$, thus only identical traces are related by the invariant; for a non-omniscient observer with cognizance $\nonomniscientcognizance{}{}$, the input at ${}^{\ell_2}$ is independent of the input at ${}^{\ell_1}$, thus traces with identical values of $\mathtt{b}$ are related in the invariant, regardless of the value of $\mathtt{a}$.
\qed\end{example}

\subsection{Strengthening the User Specification by the Abstract Invariance Semantics}

The objective of the invariance static analysis is to abstract the program concrete semantics $\semantics{}$ of Section \ref{sect:Program-operational-semantics} and to strengthen the user specifications of Section \ref{sect:User-specification}. 
Let $\abstractinvariancesemantics{}$ be the abstract invariance semantics over-approximating $\semantics{}$ computed by the static analyzer of Section \ref{sect:static-analyzer}.
Since the user specifications $\userspec{\tracepropertyT}$,  $\userspec{\traceproperty_{b}}$,
and $\userspec{\traceproperty_{\!\neg b}}$ are relative to the concrete trace semantics $\semantics{}$,
they can be strengthened into  $\abstracttracepropertyT$,  $\abstracttraceproperty_{b}$,
and $\abstracttraceproperty_{\!\neg b}$ by intersection with the abstract invariance semantics $\abstractinvariancesemantics{}$.

\begin{example}\label{ex:abstraction0} Continuing Example \ref{ex:abstraction}, an iterated forward reachability/backward accessibility static invariance analyzer can yield the abstract invariance semantics $\abstractinvariancesemantics{}{}$ and strengthened user specifications as follows:\par
%
\bgroup\ttfamily\footnotesize\jot=-35pt\arraycolsep0.75\arraycolsep\begin{eqntabular}{@{}l|c|c|c|c@{}}
\lLstrut&{}^{\ell_1}&{}^{\ell_2}&{}^{\ell_3}&{}^{\ell_4}\nonumber\\\hline
\uLstrut\abstractinvariancesemantics{}{},\ \abstracttracepropertyT&\mathsf{true}&a\in\interval{-1}{1}&a,b\in\interval{-1}{1}&\begin{array}[t]{@{}c@{}}a,b\in\interval{-1}{1},c\in\interval{-2}{2}\end{array}\nonumber\\
\abstracttraceproperty_{b}&\mathsf{true}&a\in\interval{-1}{1}&a=b\in\interval{-1}{1}&c=0\nonumber\\
\lLstrut\abstracttraceproperty_{\!\neg b}&\mathsf{true}&a\in\interval{-1}{1}&a\neq b\in\interval{-1}{1}&c\neq0\renumber{\raisebox{-1ex}[0pt][0pt]{\qed}}\\\hline
\end{eqntabular}
\egroup
\end{example}
Notice that instead of computing the abstract semantics and then intersecting with $\userspec{\tracepropertyT}$, $\userspec{\traceproperty_{b}}$ and $\userspec{\traceproperty_{\!\neg b}}$, it would, in general, be more precise to do four analyses, each of which is for one of these three cases \cite[Section 6]{DBLP:journals/jlp/CousotC92}. 

\subsection{Construction of an Abstract Floyd-Hoare Automaton}

For responsibility analysis, we adopt the more refined abstraction of sets of traces  by Floyd-Hoare automata \cite{DBLP:conf/cav/HeizmannHP13,DBLP:conf/sas/GreitschusDP17}. They are a 
symbolic connected graph/automaton/control flow graph whose nodes/vertices are pairs of a program point ${{}^{\ell}}$ and an abstract invariant $\abstractinvariant({}^{\ell})\in\abstractdomain\setminus\{\bot\}$ 
belonging to the abstract domain 
$\pair{\abstractdomain}{\sqsubseteq}$
and whose arcs/edges correspond to abstract events/labelled basic program actions $a=\action{{}^{\ell}}{{}^{\ell'}}$ from one program point ${}^{\ell}$ to another ${{}^{\ell'}}$ abstracting a concrete event $\triple{s}{a}{s'}\in \tau$ where ${}^{\ell}$ is the control state of $s$ and ${{}^{\ell'}}$ that of $s'$ (the labelling of actions ensure that any action in the graph corresponds to a unique program action $a$). 
An Floyd-Haore automaton represents an over-approximation of all concrete traces, in which states satisfy the abstract invariants when following a path in the graph (the condition $\abstractinvariant({}^{\ell})\neq\bot$ avoids abstract dead paths in the automaton, but there could be abstract paths with no corresponding concrete execution). For
simplicity, program loops are preserved in the graph (but may have to be unrolled for more precision).  


The Floyd-Hoare automaton abstracting $\semantics{}$, $\maximalsemantics{}$, $\traceproperty_{b}$ and $\traceproperty_{\!\neg b}$ (see Fig. \ref{Floyd-Hoare-automaton-for-ex:abstraction} for Example \ref{ex:abstraction}) is built from the program control flow graph,
splitting nodes at labels ${}^{\ell}$ whenever $\abstracttraceproperty_{b}({}^{\ell})\neq\abstracttraceproperty_{\!\neg b}({}^{\ell})$ but still merging at loop heads
to ensure finiteness. The program actions $a$ are added on the arcs/edges $\pair{{}^{\ell}}{{}^{\ell'}}$ for events $\triple{{}^{\ell}}{a}{{}^{\ell'}}$, while abstract invariants $\abstractinvariant({}^{\ell})$ are added on the nodes/vertices ${}^{\ell}$ of the graph. Any  subgraph abstracts a trace property.
From now on, abstract trace properties are represented by a set of paths in the Floyd-Hoare automaton. For example, $\abstracttraceproperty_{b}$ of Example \ref{ex:abstraction0} is the maximal path marked ${}^{\ell_1}$, ${}^{\ell_2}$, ${}^{\ell^a_3}$, ${}^{\ell^a_4}$ in Fig. \ref{Floyd-Hoare-automaton-for-ex:abstraction} .



\subsection{Abstract Observation Analysis}\label{sect:Abstract-observation-analysis}

Given $\maximalsemantics{}$, $\latticeofmaximalproperties$, and $\cognizancefunction{}{}$, the observation analysis consists in computing an abstraction $\abstractobservation{}{}{{}^{\ell}}$ of $\observation{}{}{\maximalsemantics{}, \latticeofmaximalproperties, \cognizancefunction{}{}, \trace{}^{\ell}}$ for all concrete prefix traces $\trace{}^{\ell}$ in the concretization of an abstract trace $\abstracttrace{}^{\ell}$ from the entry node to node labelled ${}^{\ell}$
in the Floyd-Hoare automaton.
We consider the case of the omniscient cognizance $\omniscientcognizance{}{}$ (so that $\observation{}{}{\maximalsemantics{}, \latticeofproperties, \cognizancefunction{}{}, \trace}=\inquiryfunction{}{\maximalsemantics{}, \latticeofproperties,\trace}$) and the behaviors of interest in $\latticeofmaximalproperties$ are all finite. According to the definition of $\inquiryfunction{}{}$ in (\ref{eq:inquiryfunction}), for any $\traceproperty\in\{\traceproperty_{b},\traceproperty_{\!\neg b}\}$ 
which is a set of maximal traces, we need to determine the prefix traces
$\trace\in\prefixesofset{\traceproperty}$ such that $\forall\trace'\in\maximalsemantics{}.\;\trace\isprefix\trace'\Rightarrow\trace'\in\traceproperty$. This can be done in the abstract using the Floyd-Hoare automaton and made more precise using the eventuality static analyzer.


The objective is to mark every node by  $\abstractobservation{}{}{{}^{\ell}}$ which is $\abstracttraceproperty_{b}$, $\abstracttraceproperty_{\!\neg b}$, 
$\abstracttraceproperty_{b}/\abstracttraceproperty_{\!\neg b}$, or $\maximaltracepropertytop$. 
$\abstracttraceproperty_{b}$ (respectively $\abstracttraceproperty_{\!\neg b}$)
means that from ${}^{\ell}$ there exists a reachable concrete state satisfying $\abstractinvariant({}^{\ell})$ from which a concrete execution will definitely reach a terminal 
state satisfying $\abstracttraceproperty_{b}$ (respectively $\abstracttraceproperty_{\!\neg b}$). $\abstracttraceproperty_{b}/\abstracttraceproperty_{\!\neg b}$ means that
there exists a reachable concrete state satisfying $\abstractinvariant({}^{\ell})$ from which there are two concrete executions which will definitely reach a terminal 
state satisfying $\abstracttraceproperty_{b}$ for one and $\abstracttraceproperty_{\!\neg b}$ for the other. $\maximaltracepropertytop$ means that this is a possibility but
not a certitude.

We start with marking every terminal node in the Floyd-Hoare automaton with either $\abstracttraceproperty_{b}$ or $\abstracttraceproperty_{\!\neg b}$ and marking every other node with $\maximaltracepropertybot$.
Then we proceed backwards iteratively, starting form terminal nodes, by considering all nodes ${{}^{\ell}}$ of the Floyd-Hoare automaton. The node ${{}^{\ell}}$ is marked $\abstracttraceproperty_{b}$ (respectively $\abstracttraceproperty_{\!\neg b}$) if all its successors in the Floyd-Hoare automaton are marked $\abstracttraceproperty_{b}$ (respectively $\abstracttraceproperty_{\!\neg b}$) and the eventuality static analyzer can prove that this property holds at ${{}^{\ell}}$ (i.e., as explained above, a terminal node satisfying $\abstracttraceproperty_{b}$ (respectively $\abstracttraceproperty_{\!\neg b}$) is reachable in the concrete). If the node ${{}^{\ell}}$ has one successor marked $\abstracttraceproperty_{b}$
and another one marked $\abstracttraceproperty_{\!\neg b}$, then it is marked $\abstracttraceproperty_{b}/\abstracttraceproperty_{\!\neg b}$ if the eventuality static analyzer can prove that this property holds at ${{}^{\ell}}$ (i.e., as explained above, a terminal node satisfying $\abstracttraceproperty_{b}$ and another terminal node satisfying $\abstracttraceproperty_{\!\neg b}$ are both definitely reachable in the concrete). Otherwise, the node ${{}^{\ell}}$ is marked $\maximaltracepropertytop$. Since the Floyd-Hoare automaton is finite, this fixpoint computation does terminate.

\begin{figure*}[t]
\vspace*{2mm}
	\begin{center}
	\begin{tikzpicture}[scale=0.925,>=stealth',shorten >=1pt,auto,node distance=1.5cm,thick,initial text=]\footnotesize
	\tikzstyle{every state}=[fill=gray!15,draw=black,text=black,scale=0.9]
	
	\node[initial,state
	] at (0,1) (1) {${}^{\ell_1}$\llap{\raisebox{2.3em}[0pt][0pt]{$\top$}}};
	\node[rectangle,draw,thin,inner sep=2pt] at (0,0.15) (m1) {\footnotesize$\abstractobservation{}{}{{}^{\ell_1}}=\traceproperty_{b}/\traceproperty_{\!\neg b}$};
	\node[state
	] at (3,1) (2) {\rlap{\hskip-2em\raisebox{2.3em}[0pt][0pt]{$a\in\interval{-1}{1}$}}${}^{\ell_2}$};
	\node[state] at (6,2) (3) {\rlap{\hskip-3em\raisebox{-2.em}[0pt][0pt]{$a\in\interval{-1}{1}, a=b$}}${}^{\ell^a_3}$};
	\node[rectangle,draw,thin,inner sep=2pt] at (6,-0.9) (m3) {\scriptsize\begin{tabular}{c}$\abstractobservation{}{}{{}^{\ell^b_3}}=\traceproperty_{\!\neg b}$
	\end{tabular}};
	\node[state] at (6,0) (3nb) {\rlap{\hskip-3em\raisebox{2.3em}[0pt][0pt]{$a\in\interval{-1}{1}, a\neq b$}}${}^{\ell^b_3}$};
	\node[state,accepting] at (11,2) (4) {\rlap{\hskip-3em\raisebox{-2.em}[0pt][0pt]{$a\in\interval{-1}{1}, c=0$}}$\color{black}{}^{\ell^a_4}$};
	\node[rectangle,draw,thin,inner sep=2pt] at (11,-0.9) (m4) {\scriptsize\color{gray}\begin{tabular}{c}$\abstractobservation{}{}{{}^{\ell^b_4}}=\traceproperty_{\!\neg b}$
	\end{tabular}};
	\node[state,accepting] at (11,0) (4nb)  {\rlap{\hskip-3em\raisebox{2.3em}[0pt][0pt]{$a\in\interval{-1}{1}, c\neq 0$}}$\color{black}{}^{\ell^b_4}$};
	\node[rectangle,draw,thin,inner sep=2pt] at (11,2.9) (m4nb) {\color{gray}\scriptsize\begin{tabular}{c}$\abstractobservation{}{}{{}^{\ell^a_4}}=\traceproperty_{b}$
	\end{tabular}};
	
	\path (1) edge[->,line width=1.3pt] node {\texttt{\tiny a = input\_1();}} (2)
	      (2) edge[->,line width=1.3pt] node {{}} (3)
	      (2) edge[->] node {\raisebox{0.8em}{\hskip-2.33em\texttt{\tiny b = input\_2();}}} (3nb)
	      (3) edge[->,gray] node {{\texttt{\tiny c = (a - b);}}} (4)
	      (3nb) edge[->,gray] node {{\texttt{\tiny c = (a - b);}}} (4nb)
	          ;
	\node at (-1,2) (b0) {{}};	   
	\node at (6.5,2) (b1) {{}};	   
	\node[rectangle,draw,thin,inner sep=2pt] at (6,2.9) (m4nb) {\scriptsize\begin{tabular}{c}$\abstractobservation{}{}{{}^{\ell^a_3}}=\traceproperty_{b}$
	\end{tabular}};
	\node at (3.5,0.1) (b3) {{}};	   
\node[rectangle,draw,thin,inner sep=2pt] at (3,0.15) (m1) {\footnotesize$\abstractobservation{}{}{{}^{\ell_2}}=\traceproperty_{b}/\traceproperty_{\!\neg b}$};
\end{tikzpicture}%
 \end{center}
 \vskip-.5em
 \caption{Floyd-Hoare automaton for Example \ref{ex:abstraction}, \ref{ex:abstraction0}, \ref{ex:abstraction1}, and \ref{ex:abstraction2}\label{Floyd-Hoare-automaton-for-ex:abstraction}}
\vskip-1em
\end{figure*}

\begin{example}\label{ex:abstraction1}Continuing Example \ref{ex:abstraction0},
the Floyd-Hoare automaton is given in Figure \ref{Floyd-Hoare-automaton-for-ex:abstraction}.
Each node in the automaton is associated with a program point ${}^{\ell}$, and is decorated by an abstract invariant $\abstractinvariant({}^{\ell})$ holding at that point and a boxed abstract observation $\abstractobservation{}{}{{}^{\ell}}$. 



At the initialization all nodes are marked $\maximaltracepropertybot$ 
but the double-circled circled nodes ${}^{\ell^a_4}$ marked $\abstractobservation{}{}{{}^{\ell^a_4}}=\abstracttraceproperty_{b}$
and ${}^{\ell^b_4}$ marked $\abstractobservation{}{}{{}^{\ell^b_4}}=\abstracttraceproperty_{\!\neg b}$. Then
the iteration marks similarly the nodes ${}^{\ell^a_3}$ and ${}^{\ell^b_3}$. The nodes ${}^{\ell_2}$ and ${}^{\ell_1}$ are marked $\abstracttraceproperty_{b}/\abstracttraceproperty_{\!\neg b}$ if the eventuality analyzer provides a guarantee that $\abstracttraceproperty_{b}$ and $\abstracttraceproperty_{\!\neg b}$ will definitely be reachable. Should that eventuality analysis fail, the mark would be $\maximaltracepropertytop$. 
\qed\end{example}

\subsection{Abstract Responsibility Analysis}\label{sect:Abstract-responsibility-analysis}
For an abstract behavior $\abstractbehavior$ of interest, the responsibility analysis searches each of the paths that satisfy the abstraction $\abstracttracepropertyT$ of $\tracepropertyT$ in the Floyd-Hoare automaton forward, from the initial state. This search looks for the first transition from ${}^{\ell}$ to ${}^{\ell'}$ where ${}^{\ell}$ is marked by $\abstractobservation{}{}{{}^{\ell}}=\abstracttraceproperty$ and  ${}^{\ell'}$ is marked $\abstractobservation{}{}{{}^{\ell'}}=\abstracttraceproperty'$, such that $\abstracttraceproperty'\sqsubseteq\abstractbehavior\sqsubsetneq\abstracttraceproperty$. For the sake of simplicity, it is assumed that $\abstractbehavior$ is $\abstracttraceproperty_{b}$ (or alternatively $\abstracttraceproperty_{\!\neg b}$) in Section \ref{sect:User-specification}.

For every elementary path of the Floyd/Hoare automaton from the origin to a terminal node marked $\abstracttraceproperty_{b}$ we look for the first transition $\triple{{}^{\ell}}{a}{{}^{\ell'}}$ from ${}^{\ell}$ to ${}^{\ell'}$ where ${}^{\ell}$ is marked by $\abstractobservation{}{}{{}^{\ell}}=\abstracttraceproperty_{b}/\abstracttraceproperty_{\!\neg b}$ and  ${}^{\ell'}$ is marked $\abstractobservation{}{}{{}^{\ell'}}=\abstracttraceproperty_{b}$, if any. Then the corresponding action $a$ is definitely responsible for $\abstracttraceproperty_{b}$ on all concretizations of that path.

Else, we look on that elementary path of the Floyd/Hoare automaton for the last transition $\triple{{}^{\ell}}{a}{{}^{\ell'}}$ from ${}^{\ell}$ to ${}^{\ell'}$ where ${}^{\ell}$ is marked by $\abstractobservation{}{}{{}^{\ell}}=\maximaltracepropertytop$ and  ${}^{\ell'}$ is marked $\abstractobservation{}{}{{}^{\ell'}}=\abstracttraceproperty_{b}$, if any. Then the corresponding action $a$ is potentially responsible for $\abstracttraceproperty_{b}$ as well as all actions $a'$ before $a$ on that path which concretization have more than one choice (e.g.\ input commands).
(Notice $\triple{{}^{\ell}}{a}{{}^{\ell'}}$ from ${}^{\ell}$ to ${}^{\ell'}$ where ${}^{\ell}$ is marked by $\maximaltracepropertytop$ and ${}^{\ell'}$ by $\abstracttraceproperty_{b}$ is also the first one since all previous ${}^{\ell''}$ before ${{}^{\ell}}$ on that elementary path will be marked $\maximaltracepropertytop$.)

Otherwise, all points ${}^{\ell}$ on that elementary path are marked $\abstractobservation{}{}{{}^{\ell}}=\abstracttraceproperty_{b}$ and no action is responsible for $\abstracttraceproperty_{b}$ on that elementary path of the Floyd/Hoare automaton.

All elementary paths of the Floyd/Hoare automaton must be explored since responsibility may occur in different contexts.

\smallskip
\begin{theorem}\label{theorem:responsibility-analysis-sound}  The abstract responsibility analysis is sound.
\end{theorem}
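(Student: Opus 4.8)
The plan is to prove soundness \emph{layer by layer}, following the construction of Sections~\ref{sect:Program-operational-semantics}--\ref{sect:Abstract-responsibility-analysis}, so that the abstract responsibility search is exhibited as a correct over-approximation of the concrete responsibility abstraction $\alpharesponsibility{}{}{\maximalsemantics{}, \latticeofmaximalproperties, \cognizancefunction{}{}, \behavior, \tracepropertyT}$ of Section~\ref{subsec:simple-definition}. First I would record the two facts that glue the layers together: (i) erasing states from the trace operational semantics is an isomorphism, so the event trace semantics $\maximalsemantics{}$ of Section~\ref{subsec:event-model} is recovered exactly; and (ii) by soundness of the reused static invariance analyzer and of the Floyd-Hoare automaton construction, every maximal trace of $\maximalsemantics{}$ is the concretization of a maximal path of the automaton, the node invariants $\abstractinvariant({}^{\ell})$ hold of the program state whenever a concrete trace reaches ${}^{\ell}$, and $\abstracttracepropertyT$, $\abstracttraceproperty_{b}$, $\abstracttraceproperty_{\!\neg b}$ (read as path sets, hence as concretized trace sets) over-approximate $\tracepropertyT\cap\maximalsemantics{}$, $\traceproperty_{b}$, $\traceproperty_{\!\neg b}$. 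These are standard abstract-interpretation soundness statements about the reused analyzers, which I would cite rather than reprove.

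The core is soundness of the abstract observation marking of Section~\ref{sect:Abstract-observation-analysis}. I would prove, by induction on the backward fixpoint iteration over the finite Floyd-Hoare automaton, the invariant: for every node ${}^{\ell}$ and every concrete prefix trace $\trace$ whose automaton path ends at ${}^{\ell}$ (hence $\trace\in\prefixsemantics{}$, so $\trace$ is valid), one has $\observation{}{}{\maximalsemantics{}, \latticeofmaximalproperties, \omniscientcognizance{}{}, \trace}\subseteq\abstractobservation{}{}{{}^{\ell}}$, with in addition $\abstractobservation{}{}{{}^{\ell}}=\abstracttraceproperty_{b}$ (resp.\ $\abstracttraceproperty_{\!\neg b}$) implying $\observation{}{}{\maximalsemantics{}, \latticeofmaximalproperties, \omniscientcognizance{}{}, \trace}=\traceproperty_{b}$ (resp.\ $\traceproperty_{\!\neg b}$), and $\abstractobservation{}{}{{}^{\ell}}=\abstracttraceproperty_{b}/\abstracttraceproperty_{\!\neg b}$ implying that ${}^{\ell}$ has an automaton successor marked $\abstracttraceproperty_{b}$ and one marked $\abstracttraceproperty_{\!\neg b}$ and that from some state satisfying $\abstractinvariant({}^{\ell})$ both behaviours are concretely reachable. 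In the inductive step, the ``$\subseteq\traceproperty_{b}$'' part comes from the automaton over-approximation (every maximal extension of $\trace$ follows a path into an $\abstracttraceproperty_{b}$-terminal, hence lies in $\traceproperty_{b}$) together with Corollary~\ref{corollary:observation-guarantee}, recalling $\observation{}{}{\maximalsemantics{}, \latticeofmaximalproperties, \omniscientcognizance{}{}, \trace}=\inquiryfunction{}{\maximalsemantics{}, \latticeofmaximalproperties, \trace}$ for omniscient cognizance; the ``is actually guaranteed, not $\maximaltracepropertybot$'' part uses soundness of the eventuality analyzer, which is precisely what discharges loops and non-termination; monotonicity of the marking along every path (once $\abstracttraceproperty_{b}$, it and all successors stay $\abstracttraceproperty_{b}$) mirrors the decrease of the concrete observation given by Lemma~\ref{lemma:observation-decreasing}; and the node-by-node recursion of $\observation{}{}{}$ rests on Lemma~\ref{lemma:monitor-one-trace}. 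For a non-omniscient cognizance given by relational invariants I would reduce to the omniscient case, applying the argument to the related traces distinguished by the relational invariant and taking the $\maximaltracepropertyjoin$, again by Lemma~\ref{lemma:monitor-one-trace}.

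The last layer matches the forward search of Section~\ref{sect:Abstract-responsibility-analysis} against the definition of $\alpharesponsibility{}{}{}$ on each elementary path, in two directions. For the ``not potentially responsible in the abstract $\Rightarrow$ not responsible in the concrete'' direction: given a valid $\tracehistory\traceresponsible\tracefuture\in\tracepropertyT$ with $\traceresponsible$ responsible for $\behavior=\traceproperty_{b}$, Lemma~\ref{lemma:responsibility-definition} gives that the concrete observation of $\tracehistory$ is $\maximaltracepropertytop$ and that of $\tracehistory\traceresponsible$ is $\subseteq\traceproperty_{b}$ (indeed $=\traceproperty_{b}$, being $\neq\maximaltracepropertybot$), and that $\traceresponsible$ is a free-choice event; locating the transition ${}^{\ell}\transition{a}{}^{\ell'}$ of $\traceresponsible$ on the path, marking soundness plus marking monotonicity force $\abstractobservation{}{}{{}^{\ell}}\in\{\maximaltracepropertytop,\abstracttraceproperty_{b}/\abstracttraceproperty_{\!\neg b}\}$ and place the ``$\abstracttraceproperty_{b}/\abstracttraceproperty_{\!\neg b}\to\abstracttraceproperty_{b}$'' transition, or else the \emph{last} ``$\maximaltracepropertytop\to\abstracttraceproperty_{b}$'' transition, at or after the transition of $\traceresponsible$ --- so the search reports $\traceresponsible$, either directly or as one of the preceding free-choice actions $a'$. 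For the ``definitely responsible in the abstract $\Rightarrow$ responsible in every concretization'' direction: if the search flags ${}^{\ell}\transition{a}{}^{\ell'}$ with $\abstractobservation{}{}{{}^{\ell}}=\abstracttraceproperty_{b}/\abstracttraceproperty_{\!\neg b}$ and $\abstractobservation{}{}{{}^{\ell'}}=\abstracttraceproperty_{b}$, then for any concretization $\tracehistory\traceresponsible\tracefuture$ marking soundness yields $\emptyset\subsetneq\observation{}{}{\maximalsemantics{}, \latticeofmaximalproperties, \cognizancefunction{}{}, \tracehistory\traceresponsible}=\traceproperty_{b}=\behavior$, and, since $\abstractobservation{}{}{{}^{\ell}}\neq\abstracttraceproperty_{b}$ means a non-$\abstracttraceproperty_{b}$ alternative is still reachable at ${}^{\ell}$ while $\traceresponsible$ is a genuine choice point, $\observation{}{}{\maximalsemantics{}, \latticeofmaximalproperties, \cognizancefunction{}{}, \tracehistory}=\maximaltracepropertytop\supsetneq\behavior$; hence $\traceresponsible$ is responsible for $\behavior$ by the definition of $\alpharesponsibility{}{}{}$.

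The step I expect to be the main obstacle is the marking-soundness invariant of the second paragraph. A single automaton node is reached by many concrete prefixes with different observations, so the one mark $\abstractobservation{}{}{{}^{\ell}}$ must over-approximate \emph{all} of them and, when equal to $\abstracttraceproperty_{b}$ or $\abstracttraceproperty_{\!\neg b}$, \emph{exactly} match each of them; and this has to be reconciled with the fact that the mark $\abstracttraceproperty_{b}/\abstracttraceproperty_{\!\neg b}$ is justified only by an \emph{existential} eventuality fact (``\emph{some} reachable state can go both ways''), whereas the definition of $\alpharesponsibility{}{}{}$ asks for a \emph{per-trace} statement about $\observation{}{}{\maximalsemantics{}, \latticeofmaximalproperties, \cognizancefunction{}{}, \tracehistory}$. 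Producing the right per-trace statement --- arguing that a flagged $\tracehistory$ reaching ${}^{\ell}$ is itself a choice point because the automaton has not yet committed to $\abstracttraceproperty_{b}$, rather than leaning on the global eventuality fact --- and then threading it cleanly through loops by way of the eventuality analyzer is where most of the care goes; the remaining layers are routine transfers of soundness from the reused analyzers plus bookkeeping over the finitely many elementary paths.
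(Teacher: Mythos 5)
Your first direction --- a concretely responsible event is declared at least potentially responsible in the abstract --- is essentially the paper's proof. Both arguments take the soundness of the Floyd--Hoare automaton and of the observation marking as hypotheses (the paper simply assumes the marking invariant rather than proving it by backward induction as you propose, a presentational difference), invoke Lemma \ref{lemma:responsibility-definition} to produce the alternative continuation $\tracehistory\traceresponsible'$ heading to $\traceproperty_{\!\neg b}$, lift both traces to automaton paths sharing the same abstract history $\overline{\tracehistory}$, and then split on whether the nodes after the responsible transition are marked precisely or some node is marked $\maximaltracepropertytop$ --- in which case the clause flagging \emph{all} earlier multi-choice actions as potentially responsible is exactly what rescues soundness (the paper's Example \ref{ex:abstraction-top-all-previous} is this situation, and you do capture it).

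The gap is in your second direction. Showing that a transition flagged with $\abstractobservation{}{}{{}^{\ell}}=\abstracttraceproperty_{b}/\abstracttraceproperty_{\!\neg b}$ and $\abstractobservation{}{}{{}^{\ell'}}=\abstracttraceproperty_{b}$ is responsible on \emph{every} concretization requires $\observation{}{}{\maximalsemantics{}, \latticeofmaximalproperties, \cognizancefunction{}{}, \tracehistory}=\maximaltracepropertytop$ for \emph{every} concrete prefix $\tracehistory$ reaching ${}^{\ell}$. But the mark $\abstracttraceproperty_{b}/\abstracttraceproperty_{\!\neg b}$ is certified only existentially: \emph{some} reachable state satisfying $\abstractinvariant({}^{\ell})$ has continuations to both terminals. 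A particular $\tracehistory$ may already have all of its concrete maximal extensions in $\traceproperty_{b}$ (the $\abstracttraceproperty_{\!\neg b}$-branch out of ${}^{\ell}$ being spurious for that prefix), so its observation is already $\traceproperty_{b}$ and $\traceresponsible$ is not responsible in that trace. Your proposed repair --- that $\tracehistory$ is a choice point ``because the automaton has not yet committed'' --- conflates abstract non-commitment with concrete non-commitment; over-approximation runs in the wrong direction for this implication, as you yourself suspect in your last paragraph. The paper does not attempt this direction in the proof of the theorem: the soundness it establishes is only that events not reported even potentially in the abstract are not responsible in the concrete, i.e.\ precisely your first direction, while the universal reading of ``definitely responsible on all concretizations'' stated in Section \ref{sect:Abstract-responsibility-analysis} is left unproved. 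So the part of your proposal that overlaps the paper is sound; the extra converse you add cannot be closed with the existential eventuality information available.
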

\begin{proof}
The proof is included in the appendix \ref{subsec:proof-theorem-responsibility-analysis-sound}.
\end{proof}
\smallskip

\begin{example}\label{ex:abstraction2}Continuing Example \ref{ex:abstraction1}, the Floyd-Hoare automaton of Figure \ref{Floyd-Hoare-automaton-for-ex:abstraction}, determines that, for $\abstractbehavior=\abstracttraceproperty_{b}$, on the path ${}^{\ell_1}$, ${}^{\ell_2}$, ${}^{\ell^a_3}$ where $a\in\interval{-1}{1}\wedge a=b$, we have 
$\abstractobservation{}{}{{}^{\ell_2}}=\abstracttraceproperty_{b}/\abstracttraceproperty_{\!\neg b}$ and $\abstractobservation{}{}{{}^{\ell^a_3}}=\abstracttraceproperty_{b}$ so the action \texttt{\small b = input\_2()}  is definitely responsible for the bug (since by
omniscient cognizance the result of \texttt{\small a = input\_1()} is known). In accordance with (\ref{eq:alpharesponsibility}), the future $\tracefuture$ is ignored in the abstraction,
thus the path needs not be explored further (in gray in Figure \ref{Floyd-Hoare-automaton-for-ex:abstraction}).  

If, on the contrary, the eventuality analysis is imprecise and $\abstractobservation{}{}{{}^{\ell_2}}=\maximaltracepropertytop$ (and therefore $\abstractobservation{}{}{{}^{\ell_1}}=\maximaltracepropertytop$) and $\abstractobservation{}{}{{}^{\ell^a_3}}=\abstracttraceproperty_{b}$, the responsibility analysis determine that action \texttt{\small b = input\_2()}  is potentially responsible for the bug and so is the preceding action \texttt{\small a = input\_1()} on that path.
\qed\end{example}

It is important to note that the user specification of $\latticeofmaximalproperties$ is within the abstract domain $\pair{\abstractdomain}{\sqsubseteq}$, so there is no risk that two different properties of interest in $\latticeofmaximalproperties$ have the same abstraction in $\pair{\abstractdomain}{\sqsubseteq}$. For example, if $\abstracttraceproperty_{b}$ and $\abstracttraceproperty_{\!\neg b}$ are abstracted to the same property by a coarser abstraction, the Floyd-Hoare automaton merges ${{}^{\ell^a_3}}$ and ${{}^{\ell^b_3}}$ as well as ${{}^{\ell^a_4}}$ and ${{}^{\ell^b_4}}$ in Fig. \ref{Floyd-Hoare-automaton-for-ex:abstraction}, making the responsibility analysis impossible. 

Another important remark is that the definition $\alpharesponsibility{}{}{}$ of concrete responsibility (\ref{eq:alpharesponsibility}) is definite (i.e. a necessary and sufficient condition). On the contrary, the abstract responsibility proposed in this section may be definite but also potential (i.e. a necessary but maybe not sufficient condition). This ensure that events not responsible in the abstract are definitely not responsible in the concrete. 
Moreover in loops, abstract responsibility does not report at which loop iteration the responsibility of the event is established.  
\subsection{Proofs for Theorem \ref{theorem:responsibility-analysis-sound}\label{subsec:proof-theorem-responsibility-analysis-sound}}
We start with an example showing that if an action is declared potentially responsible then it might not be while one of
the previous action offering a concrete choice might be.
\begin{example}\label{ex:abstraction-top-all-previous}Let us analyze which input is responsible for the bug $\mathtt{c}==0$ at line ${}^{\ell_4}$ of the following program with omniscient cognizance.\par
\bgroup\ttfamily\footnotesize\begin{eqntabular*}[fl]{@{\qquad}L}
${}^{\ell_1}$  a = input\_1(); // Input 1, 0 or -1\\
${}^{\ell_2}$  b = input\_2(); // Input 1 or -1\\
${}^{\ell_3}$  c = (a * b);\\
${}^{\ell_4}$  // c==0 is a bug
\end{eqntabular*}\egroup
The static analysis is with intervals and [dis]equalities but we assume the invariance analyzer cannot determine that $b\neq 0$ when reaching line ${}^{\ell_3}$. Moreover the
eventuality analysis is assumed to be completely ineffective. The Floyd/Hoare
automaton (in absence of eventuality analysis) is given in Figure \ref{Floyd-Hoare-automaton-for-ex:abstraction-top-all-previous}. 
In the concrete, the responsibility definition (\ref{eq:alpharesponsibility}) states that action \texttt{\small a = input\_1();} is responsible
for $\abstracttraceproperty_{b}$ i.e. \texttt{c}=0. The abstract analysis determines that \texttt{\small a = input\_2();} is potentially responsible
for $\abstracttraceproperty_{b}$ which is not false (the path ${{}^{\ell^{b}_2}}$, ${{}^{\ell^{ba}_3}}$, ${{}^{\ell^{ba}_4}}$ is dead but this is not
known in the abstract). However, not adding the previous actions with multi-choices, in this example \texttt{\small a = input\_1();}, would be unsound.
\qed\end{example}

\figureexampleabstractiontopallprevious

\smallskip
\textit{Theorem \ref{theorem:responsibility-analysis-sound}}  The abstract responsibility analysis is sound.
\begin{proof} For simplicity we assume the omniscient congnizance $\omniscientcognizance{}{}$
and the traces of interest $\tracepropertyT$ are all traces of $\tracesemantics$ (otherwise $\tracesemantics$ 
below has to be restricted to the concretization of $\abstracttracepropertyT$).

We assume that the Floyd-Hoare automaton is correct, meaning that every execution trace
in the operational semantics $\tracesemantics$ is in the concretization of a path in the Floyd-Hoare automaton. Since the
operational semantics $\tracesemantics$ is abstracted to the event traces in $\maximalsemantics{}$ in which the events
are chosen to be the transitions $s\transition{a}s'\in\tau$ and states are erased, we have that $\tracesemantics$ 
and $\maximalsemantics{}$ are isomorphic, so the abstraction of $\maximalsemantics{}$ by the Floyd-Hoare automaton
is also sound. However, we do not exclude that the static analysis of Section \ref{sect:static-analyzer} is imprecise
so that the Floyd-Hoare automaton may contain spurious paths.

We also assume that the abstract observation analysis of Section \ref{sect:Abstract-observation-analysis} made
in the lattice $\maximaltracepropertybot$, $\abstracttraceproperty_{b}$, $\abstracttraceproperty_{\!\neg b}$, 
$\abstracttraceproperty_{b}/\abstracttraceproperty_{\!\neg b}$, $\maximaltracepropertytop$ is sound. So
if a node ${}^{\ell}$ is marked $\abstracttraceproperty_{b}$ 
(respectively $\abstracttraceproperty_{\!\neg b}$) in the automaton then there exists a concrete execution in $\tracesemantics/\maximalsemantics{}$ through a state
at ${}^{\ell}$ to a terminal node state satisfying  $\abstracttraceproperty_{b}$ (respectively to $\abstracttraceproperty_{\!\neg b}$).
For a node ${}^{\ell}$  marked $\abstracttraceproperty_{b}/\abstracttraceproperty_{\!\neg b}$, we assume that
the eventuality analysis has determined that there exists a concrete execution in $\tracesemantics/\maximalsemantics{}$ reaching a state
at ${}^{\ell}$ that can be  continued both to a terminal node state satisfying  $\abstracttraceproperty_{b}$ 
and to another terminal node state satisfying $\abstracttraceproperty_{\!\neg b}$. 
This is not guaranteed if the node ${}^{\ell}$ is marked  $\maximaltracepropertytop$.

Consider $\triple{\tracehistory}{\traceresponsible}{\tracefuture}$ in (\ref{eq:alpharesponsibility}) so that the action $a$
of $\traceresponsible={{}^{\ell}}\transition{a}{{}^{\ell'}}$ will be declared responsible for $\abstracttraceproperty_{b}$
in the concrete. We must show that this action $a$ will also be declared responsible in the Floyd-Hoare automaton.

By Lemma \ref{lemma:responsibility-definition}, there exists $\triple{\tracehistory}{\traceresponsible'}{\tracefuture'}$  in
 $\tracesemantics/\maximalsemantics{}$ heading to $\abstracttraceproperty_{\!\neg b}$. By soundness of the Floyd-Hoare
 automaton, there are paths $\triple{\overline{\tracehistory}}{\overline{\traceresponsible}}{\overline{\tracefuture}}$
 to $\abstracttraceproperty_{b}$ and $\triple{\overline{\tracehistory}'}{\overline{\traceresponsible}'}{\overline{\tracefuture}'}$
 to $\abstracttraceproperty_{\!\neg b}$ in the automaton respectively generating, by concretization, $\triple{\tracehistory}{\traceresponsible}{\tracefuture}$
 and $\triple{\tracehistory}{\traceresponsible'}{\tracefuture'}$.

Since the events are transitions $s\transition{a}s'\in\tau$ and abstract events are labelled action $a=\action{{}^{\ell}}{{}^{\ell'}}$
where ${{}^{\ell}}$ and ${{}^{\ell'}}$ are the unique labels in the automaton abstracting $s$ and $s'$ respectively,
and since $\overline{\tracehistory}$ and $\overline{\tracehistory}'$ both abstract $\tracehistory$, we have $\overline{\tracehistory}=\overline{\tracehistory}'$.
There are two cases.

\hyphen{4}\quad If all nodes on ${\overline{\tracefuture}}$ are marked  $\abstracttraceproperty_{b}$ or  $\abstracttraceproperty_{b}/\abstracttraceproperty_{\!\neg b}$
then, by Section \ref{sect:Abstract-responsibility-analysis}, the action on $\overline{\traceresponsible}$ which is the same as that on $\traceresponsible$ is declared
definitely responsible, as required. 

\hyphen{4}\quad Otherwise, if some node on ${\overline{\tracefuture}}$ is marked $\maximaltracepropertytop$ then, since the end of the path is marked 
$\abstracttraceproperty_{b}$ there is some later action on ${\overline{\tracefuture}}$ in the Floyd-Hoare automaton which is marked potentially responsible. But then
all actions on that path ${\overline{\tracefuture}}$ offering a choice in the concrete may also be responsible so action $a$ will be declared potentially responsible.
\end{proof}

\fi

\end{document}